\newcommand{\nnf}{\textrm{\sc nnf}}
\newcommand{\subf}{\textrm{\sc subf}}
\newcommand{\lits}{\textrm{\sc literals}}
\newcommand{\nf}{\mathop{\textup{nf}}}
\newcommand{\logic}[1]{\ensuremath{\mathsf{#1}}}
\newcommand{\frameclass}[1]{\ensuremath{\mathcal{#1}}}
\newcommand{\pair}[1]{(#1)}
\newcommand{\faThumbsOUp}[1][\!]{\includegraphics[scale=.02]{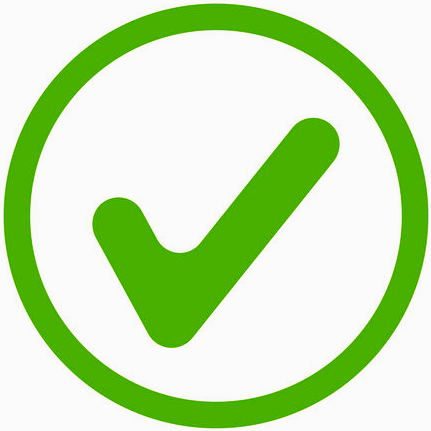}#1}
\newcommand{\faThumbsODown}[1][\!]{\includegraphics[scale=.02]{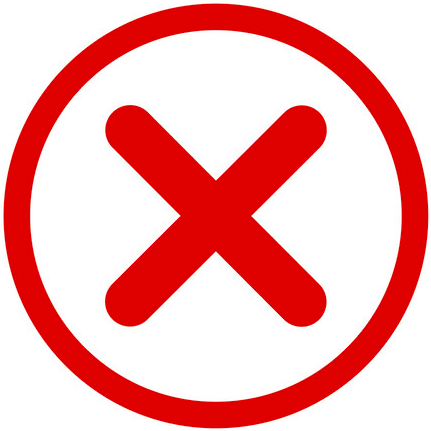}#1}
\newcommand{\pro}[1]{\begin{itemize}\item[\faThumbsOUp] #1\end{itemize}}
\newcommand{\con}[1]{\begin{itemize}\item[\faThumbsODown] #1\end{itemize}}
\newenvironment{mainproof}{\medskip\par\noindent\textbf{Proof of Craig interpolation for modal logic.}}{\qed\medskip\par\noindent}
  \def\\{}%
\newcommand{\sig}{\mathop{\textup{sig}}}
\newcommand{\FO}{\textup{FO}\xspace}
\newcommand{\ST}{\textup{ST}}
\newcommand{\eoe}{\lipicsEnd}
\tikzset{
itria/.style={
  draw,shape border uses incircle,
  isosceles triangle,shape border rotate=90, yshift = -1.6cm, xshift = 0.7cm },
rtria/.style={
  draw,shape border uses incircle, minimum width = 1cm,
  minimum height = 1cm,
  isosceles triangle,isosceles triangle apex angle=130,
  shape border rotate=-65,yshift=-0.098cm,xshift=0.9346cm},
ritria/.style={
  draw,shape border uses incircle,
  isosceles triangle,isosceles triangle apex angle=110,
  shape border rotate=-55,yshift=0.1cm},
letria/.style={
  draw,dashed,shape border uses incircle,
  isosceles triangle,isosceles triangle apex angle=110,
  shape border rotate=235,yshift=0.1cm}
  
}
\newcommand{\bisim}{\mathop{{\underline\leftrightarrow}}}
\newcommand{\CPC}{{\sf CPC}}
\newcommand{\K}{{\sf K}}
\newcommand{\G}{{\sf G}} 
\newcommand{\Gth}{{\sf G3}}
\newcommand{\Gthp}{{\sf G3p}}
\newcommand{\GthK}{{\sf G3K}}
\newcommand{\imp}{\rightarrow}
\newcommand{\en}{\wedge} 
\newcommand{\of}{\vee}
\newcommand{\bx}{\raisebox{0mm}{$\Box$}}
\newcommand{\fnbx}{\raisebox{0mm}{{\footnotesize $\Box$}}}
\newcommand{\dm}{\raisebox{-.35mm}{$\Diamond$}}
\newcommand{\bof}{\bigvee}
\newcommand{\ben}{\bigwedge}
\newcommand{\sgn}{\text{sgn}}
\newcommand{\De}{\Delta}
\newcommand{\Ga}{\Gamma}
\newcommand{\Sig}{\Sigma}
\renewcommand{\phi}{\varphi}
\newcommand{\varchi}{\raisebox{.4ex}{\mbox{$\chi$}}}
\newcommand{\seq}{\Rightarrow}
\newcommand{\defn}{\equiv _{\mbox{\em \tiny df}}} 
\newcommand{\af}{\vdash}
\newcommand{\form}{{\EuScript F}}
\newcommand{\parseq}{{\EuScript Ps}}
\author{Nick Bezhanishvili}{University of Amsterdam}{n.bezhanishvili@uva.nl}{https://orcid.org/0009-0005-6692-5051}{Acks}
\author{Balder ten Cate}{University of Amsterdam}{b.d.tencate@uva.nl}{https://orcid.org/0000-0002-2538-5846}{Supported by the European Union’s Horizon 2020 research and innovation programme (MSCA-101031081).}
\author{Rosalie Iemhoff}{Utrecht University}{r.iemhoff@uu.nl}{https://orcid.org/0000-0001-9975-9604}{Support by the Netherlands Organisation for Scientific Research under grant 639.073.807 as well as partial support by the MOSAIC project (EU H2020-MSCA-RISE-2020 Project 101007627) is gratefully acknowledged.} 
\newtheorem{question}{Open Question}
\begin{document}

\title{Six Proofs of Interpolation for the Modal Logic \logic{K}}
%
%

\maketitle

\begin{abstract}
In this chapter, we present six different proofs of Craig interpolation 
for the modal logic $\mathsf{K}$, each using a different set of techniques (model-theoretic, proof-theoretic, syntactic, automata-theoretic, using quasi-models, and algebraic). We 
compare the pros and cons of each proof technique.
\end{abstract}

\setcounter{tocdepth}{1}
\tableofcontents

\section{Introduction}

A logic has the Craig interpolation property if, whenever an implication $\phi\to\psi$ is valid,
there is a formula $\vartheta$ (which we will call an ``interpolant'') such that 
$\phi\to\vartheta$ and $\vartheta\to\psi$ are valid, and such that all the non-logical symbols occurring in $\vartheta$ occur both in $\phi$ and in $\psi$. Craig~\cite{Craig1957} proved that first-order logic has this property. This result was later refined by Lyndon~\cite{Lyndon1959}, 
who proved that every valid implication has an interpolant, such that
every non-logical symbol occurring positively in the interpolant 
occurs positively in both the antecedent and the consequent, and likewise for negative occurrences. Several further strengthenings of these interpolation theorems  were later obtained for first-order logic. See \refchapter{chapter:predicate} for more details.
Many modal logics also have Craig interpolation, and indeed, Lyndon interpolation~\cite{Gabbay1972:craig}. Some (such as the basic modal logic \logic{K}) even enjoy a strong form of interpolation that fails for first-order logic, called \emph{uniform interpolation}, whereby the interpolant can be constructed in a uniform way, such that it does not depend on the consequent, but only on the antecedent and shared language (i.e., the set of non-logical symbols shared by the antecedent and consequent).

\refchapter{chapter:nonclassical} provides a broader picture regarding interpolation properties for non-classical logics and the landscape of modal logics with Craig interpolation.
In this chapter, we take a close look at one specific interpolation theorem, namely the  Craig interpolation theorem for the  modal logic \logic{K}:

\begin{theorem}[Craig Interpolation for the Modal Logic \logic{K}]\label{thm:modal-cip}
    Every valid modal implication $\models\phi\to\psi$ has a \emph{Craig interpolant}, that is, a modal formula $\vartheta$ such that
    the following hold:
    \begin{enumerate}
        \item $\models\phi\to\vartheta$,
        \item $\models\vartheta\to\psi$,
        \item $\sig(\vartheta)\subseteq \sig(\phi)\cap \sig(\psi)$,
    \end{enumerate}
\end{theorem}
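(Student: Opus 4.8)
The plan is to give a \emph{model-theoretic} proof, in the style of Robinson's joint consistency theorem, with bisimulations playing the role that isomorphisms of reducts play in the first-order setting. Write $\sigma=\sig(\phi)\cap\sig(\psi)$ for the shared signature and recall that validity of $\phi\to\psi$ is equivalent to unsatisfiability of $\phi\wedge\neg\psi$. I will use three standard facts: that modal logic is compact (via the standard translation of modal formulas and Kripke models into first-order logic); that $\sigma$-bisimilar pointed models satisfy exactly the same modal formulas whose signature is contained in $\sigma$, while conversely, on $\omega$-saturated models, satisfying the same such formulas implies $\sigma$-bisimilarity (the saturated Hennessy--Milner theorem); and that invariance under $\sig(\phi)$-bisimulations preserves the truth of $\phi$. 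As a first step, let $\Theta$ be the set of all modal formulas $\chi$ with $\sig(\chi)\subseteq\sigma$ and $\models\phi\to\chi$. If every pointed model of all of $\Theta$ satisfies $\psi$, then by compactness some finite $\{\chi_1,\dots,\chi_n\}\subseteq\Theta$ already entails $\psi$, and $\vartheta:=\chi_1\wedge\dots\wedge\chi_n$ is the desired interpolant: condition~(1) holds since $\models\phi\to\chi_i$ for each $i$, condition~(2) by the choice of the $\chi_i$, and condition~(3) since $\sig(\vartheta)\subseteq\sigma$. So it remains to rule out the case that some pointed model satisfies all of $\Theta$ but not $\psi$.

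Suppose, for contradiction, that $N,v$ satisfies every formula in $\Theta$ but not $\psi$, and let $\mathrm{ModTh}_\sigma(N,v)$ be the set of modal formulas with signature in $\sigma$ that are true at $N,v$. Then $\{\phi\}\cup\mathrm{ModTh}_\sigma(N,v)$ is satisfiable: otherwise, by compactness, $\phi$ together with finitely many $\sigma$-formulas true at $N,v$ is unsatisfiable, so the negation of their conjunction is a $\sigma$-consequence of $\phi$, hence lies in $\Theta$ and is true at $N,v$ --- a contradiction. Fix $M,w\models\{\phi\}\cup\mathrm{ModTh}_\sigma(N,v)$; a short argument from the maximality of $\mathrm{ModTh}_\sigma(N,v)$ shows that $M,w$ and $N,v$ satisfy \emph{the same} $\sigma$-formulas. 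Now pass to $\omega$-saturated elementary extensions $M^{+}\succeq M$ and $N^{+}\succeq N$ of the associated first-order structures; elementary extensions preserve modal truth at the distinguished points, so $M^{+},w\models\phi$, $N^{+},v\models\neg\psi$, and $M^{+},w$ and $N^{+},v$ still satisfy the same $\sigma$-formulas. By the saturated Hennessy--Milner theorem there is then a $\sigma$-bisimulation $Z$ with $M^{+},w \mathbin{Z} N^{+},v$.

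Finally, from $Z$ I build a single pointed model $K,u$ that is $\sig(\phi)$-bisimilar to $M^{+},w$ \emph{and} $\sig(\psi)$-bisimilar to $N^{+},v$, namely the ``bisimulation product along $Z$'': the worlds of $K$ are the pairs in $Z$, with $(m,n)\mathbin{R_K}(m',n')$ iff $mR^{M^{+}}m'$, $nR^{N^{+}}n'$ and $(m',n')\in Z$; a proposition letter in $\sig(\phi)\setminus\sigma$ is read off the first coordinate, one in $\sig(\psi)\setminus\sigma$ off the second, and one in $\sigma$ off either coordinate (the two agree, since $Z$ is a $\sigma$-bisimulation), and $u:=(w,v)$. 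One checks that projection to the first coordinate is a $\sig(\phi)$-bisimulation and projection to the second is a $\sig(\psi)$-bisimulation: the forth clauses are immediate from the definition of $R_K$, and the back clauses use precisely the back-and-forth clauses of $Z$. Hence $K,u\mathrel{\bisim_{\sig(\phi)}}M^{+},w$ and $K,u\mathrel{\bisim_{\sig(\psi)}}N^{+},v$, so by bisimulation invariance $K,u\models\phi$ and $K,u\models\neg\psi$, contradicting $\models\phi\to\psi$.

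I expect the final step --- setting up the product and verifying that it simultaneously respects the two sub-signatures --- together with the compactness juggling (the modal analogue of Robinson joint consistency) to be the real content, while the appeals to compactness, $\omega$-saturation, and Hennessy--Milner are by now routine. An alternative, more combinatorial route would avoid saturation by unravelling $M$ and $N$ into trees of bounded depth and zipping them directly along an inductively constructed finite approximation of a bisimulation; this is arguably closer to how one would also read off a \emph{bound} on the size of the interpolant from the construction.
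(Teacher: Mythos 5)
Your proposal is correct and is essentially the paper's first (model-theoretic) proof: the double-compactness argument to obtain modally indistinguishable models of $\phi$ and $\neg\psi$, an upgrade to genuinely $\sigma$-bisimilar models, and amalgamation via a bisimulation product over $Z$ are exactly the steps in Proposition~\ref{prop:interpolant-criterion} and Theorem~\ref{thm:amalgamation}. The only cosmetic difference is that you upgrade via $\omega$-saturated elementary extensions and the saturated Hennessy--Milner theorem, where the paper invokes the ultrapower ``detour theorem'' (Theorem~\ref{thm:detour}); these are interchangeable implementations of the same step.
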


Note that this version of Craig interpolation for modal logic is sometimes called \emph{arrow-interpolation} (also known as \emph{local interpolation}), as opposed to \emph{turnstile-interpolation} (also known as \emph{global interpolation} or \emph{deductive interpolation}), cf.~\refchapter{chapter:nonclassical}.

There are many ways to prove this theorem. In this chapter we  review six proofs. Each draws on a different set of techniques
(model theoretic, proof theoretic, syntactic, automata-theoretic,  using quasi-models, and algebraic), 
and each  proof has different advantages and disadvantages. 
For example, some techniques generalize better to other logics, some techniques allow proving stronger forms of interpolation such as \emph{uniform interpolation}, and some techniques yield an algorithm for producing interpolants of (worst-case) \emph{optimal size}. Thus, different techniques are useful for different purposes. 
In addition, we hope that the different proofs of interpolation  provide the reader with a glimpse at the different angles from which modal logics can be studied more generally.

\subparagraph*{A note on formula size.}
One of the questions we will be interested in is
whether the different proofs of Craig interpolation yield algorithms for computing interpolants, and,
if so, whether they produce small interpolants.
Here, it is important to point out that the \emph{size} of a modal 
formula can be measured in different ways. The  standard definition of formula length is given by the number of 
symbols needed to write the formula as a string.
We denote this by $|\phi|$. The other  notion of size is as the cardinality of the set of its subformulas. We denote this by $|\phi|_{\text{DAG}}$. It captures the size of the formula when represented succinctly as a directed acyclic graph (DAG). In general,
$|\phi|_{\text{DAG}}$ can be exponentially smaller than $|\phi|$ \cite{Ditmarsch2014:exponential,Berkholz24:modal}. This is illustrated by the following example.

\begin{example}
\label{ex:ditmarsch}
Consider the modal formula $\chi_n$ defined
inductively as follows:
\[\begin{array}{lll}
\chi_0     &=& (\Box p \lor \Box \neg p) \\
\chi_{i+1} &=& (\Box\chi_i \lor \Box \neg\chi_i)
\end{array}\]
In other words, $\chi_n = \blacksquare^n p$, where
$\blacksquare\phi$ is short for the ``non-contingency statement'' $\Box\phi\lor\Box\neg\phi$. Observe that $|\chi_n|_{\text{DAG}} = 4n+6$ whereas $|\chi| = 14 \cdot 2^n-6$ (if we include the parentheses).
In fact, it has been shown that every modal
formula $\psi$ equivalent to $\chi_n$ has size $|\psi| \geq 2^n$ \cite{Ditmarsch2014:exponential}. A similar super-polynomial gap between $|\phi|$
and $|\phi|_{\text{DAG}}$ is believed to hold already for propositional formulas, but proving it would amount to separating the complexity class $\textsf{NC}_1$ from \textsf{P/poly}, an open problem in complexity theory~\cite{Rossman2018:formulas}. See 
\refchapter{chapter:propositional} for more details.
\eoe
\end{example}

As we will see, some of the approaches yield an algorithm for computing a Craig interpolant of singly exponential size, even under the standard formula size metric ($|\phi|$), while others
yield an algorithm for computing a Craig interpolant of singly exponential size under the 
more permissive DAG-based size metric ($|\phi|_{\text{DAG}}$).
We will discuss matching lower bounds on Craig interpolant size  in Section~\ref{sec:lowerbounds}.

\subparagraph*{Outline}
In Section~\ref{sec:basic} we review basic definitions. In the subsequent six sections,
we review different proofs of Theorem~\ref{thm:modal-cip}. 
In Section~\ref{sec:lowerbounds} we discuss some lower bound results for computing Craig interpolants.

\section{Basic Definitions}
\label{sec:basic}

Although we assume familiarity with modal logics, we briefly recall some basic definitions 
in order to fix our notation. See~\cite{BlackburnDeRijkeVenema}
for more details.

\subsection*{Modal Formulas: Syntax and Kripke Semantics}
By a \emph{propositional signature} we will mean a set of proposition letters (also known as \emph{propositional variables}).  Given a propositional signature $\tau$, the \emph{modal formulas} w.r.t. $\tau$ are generated by the following
grammar:
\[ \chi ::= p\mid \top\mid \bot\mid \neg\chi \mid (\chi\lor\chi) \mid (\chi\land\chi) \mid \Diamond\chi\mid \Box\chi\]
where $p\in\tau$. Note that it will be important that we allow $\top$ and $\bot$ as atomic formulas. We will view the
implication $\phi\to\psi$ as a shorthand notation for 
$\neg\phi\lor\psi$. 
We will denote by $\sig(\chi)$ the propositional signature of $\chi$, that is, the set of proposition letters occurring in $\chi$.

     A modal formula is in \emph{negation normal form (NNF)} if every negation sign is immediately in front of a proposition letter. In other words, the modal formulas in NNF are generated by the grammar $\chi ::= p \mid \neg p \mid \top \mid \bot \mid \chi_1\land\chi_2\mid 
       \chi_1\lor\chi_2 \mid \Diamond \chi \mid \Box\chi$.
     We denote by $\nnf(\chi)$ the result of bringing a formula $\chi$ into NNF (by using the De Morgan laws and the duality laws $\neg\Diamond\chi \equiv \Box\neg\chi$ and $\neg\Box\chi\equiv \Diamond\neg\chi$).

A \emph{Kripke frame} is a pair $F=(W,R)$ where $W$ is a set of \emph{worlds} and $R\subseteq W\times W$. A \emph{Kripke model}
for a propositional signature $\tau$
is a triple $M=(W,R,V)$ where $(W,R)$ is a Kripke frame (the \emph{underlying frame} of $M$), and $V:\tau\to\wp(W)$ assigns to every proposition letter a set of worlds. Satisfaction of a 
modal formula at a world in a Kripke model, denoted
$M,w\models\chi$, is defined as usual, cf.~Table~\ref{tab:semantics}.
For a set of formulas $\Phi$, 
we will write $M,w\models\Phi$
if $M,w\models\phi$
for all $\phi\in \Phi$.

We will also refer to a pair $(M,w)$,
where $M=(W,R,V)$ is a Kripke model and $w\in W$ a world, as a 
\emph{pointed Kripke model}. Thus,  satisfaction can be viewed as a relation between pointed Kripke models and modal formulas. 

\begin{table}[t]
\[\begin{array}{lcl}
M,w\models p &\text{iff}& w\in V(p) \\
M,w\models \top &\text{always} \\
M,w\models \bot &\text{never} \\
M,w\models \neg\chi &\text{iff}& M,w\not\models\chi\\
M,w\models \chi_1\land\chi_2 &\text{iff}& M,w\models\chi_1 \text{ and } M,w\models\chi_2\\
M,w\models \chi_1\lor\chi_2 &\text{iff}& M,w\models\chi_1 \text{ or } M,w\models\chi_2\\
M,w\models \Diamond\chi &\text{iff}& M,w'\models\chi \text{ for some $w'\in W$ with $(w,w')\in R$ }\\
M,w\models \Box\chi &\text{iff}& M,w'\models\chi \text{ for all $w'\in W$ with $(w,w')\in R$ }\\
\end{array}
\]
\caption{Kripke semantics for modal logic}
\label{tab:semantics}
\end{table}

\subsection*{Validity, Frame Classes and Logics}

We write $\models\chi$ if the modal formula $\chi$ is
satisfied at every world in every Kripke model.
This can be further relativized to a class of Kripke frames (or, \emph{frame class} for short).
Let $\frameclass{F}$ be any frame class. Then we write
$\frameclass{F}\models\phi$ if $\phi$ is satisfied at every 
world in every Kripke model whose underlying frame belongs to $\frameclass{F}$. 
Every frame class induces  a \emph{modal logic}. Formally, 
fix a countably infinite propositional signature $\tau$. Then
by a \emph{normal modal logic} we will mean a set $L$ of modal formulas
that includes all substitution instances of propositional tautologies as well as all formulas of the form $\Box(\chi_1\to\chi_2)\to(\Box\chi_1\to\Box\chi_2)$ 
and is closed under modus ponens (if $\chi_1\to\chi_2\in L$ and $\chi_1\in L$, then $\chi_2\in L$), necessitation (if $\chi\in L$, then $\Box\chi\in L$), and substitution (if $\chi\in L$, then every substitution instance of $\chi$ belongs to $L$).
For every frame class $\frameclass{F}$,
the set of modal formulas $\{\phi\mid \frameclass{F}\models\phi\}$
is, in this sense, a modal logic. We will refer to it as \emph{the modal logic of $\frameclass{F}$}.
The modal logic of the class of all frames
is known as the modal logic \logic{K}, while, for example,
the modal logic of the class of transitive frames
is known as the modal logic \logic{K4}.

\subsection*{Modal Logic as a Fragment of First-Order Logic}

\begin{table}[t]
\[\begin{array}{lcl}
\ST_x(p) &=& P(x) \\
\ST_x(\top) &=& \top \\
\ST_x(\bot) &=& \bot \\
\ST_x(\neg\chi) &=& \neg \ST_x(\chi)\\
\ST_x(\chi_1\land\chi_2) &=& \ST_x(\chi_1)\land \ST_x(\chi_2) \\
\ST_x(\chi_1\lor\chi_2) &=& \ST_x(\chi_1)\lor \ST_x(\chi_2)\\
\ST_x(\Diamond\chi) &=& \exists y(R(x,y)\land \ST_y(\chi))\\
\ST_x(\Box\chi) &=& \forall y(R(x,y)\to \ST_y(\chi))\\
\end{array}
\]
\caption{Standard translation $\ST_x(\cdot)$ from modal logic to first-order logic}
\label{tab:ST}
\end{table}

A Kripke model with propositional signature $\tau$ can be viewed as a  first-order structure over a
corresponding relational first-order signature $\sigma$. More precisely, $\sigma$ consists of
a single binary relation symbol $R$ plus, for each $p\in \tau$, a corresponding unary relation symbol $P$.  
It should be clear how every Kripke model
with propositional signature $\tau$ can be viewed as 
a first-order structure over the corresponding first-order signature $\sigma$. 
This correspondence between Kripke models and
first-order structures is, in fact, so basic and transparent that we will take it for granted and we will allow ourselves to freely treat Kripke models as first-order structures. This will allow us, for instance, to interpret first-order formulas (over the correspondence signature) in Kripke models. Following this
perspective, modal logic can be viewed as a fragment of first-order logic, in the following
sense: for every modal formula $\phi$ there is
a first-order formula $\psi(x)$ such that, 
for all pointed Kripke models $(M,w)$, 
$M,w\models\phi$ iff $M,w\models\psi$. Here, 
by the latter, we mean that $M$ satisfies $\psi(x)$ under the assignment that sends $x$ to $w$. This first-order equivalent $\psi(x)$ of a modal formula $\phi$ can be obtained
from $\phi$ by means of a simple, inductive, translation, which is known as the \emph{standard translation}. It is given in~Table~\ref{tab:ST}.

\begin{theorem}For all pointed Kripke models $(M,w)$ and modal formulas $\phi$, $M,w\models\phi$ iff $M,w\models\ST_x(\phi)$.
\end{theorem}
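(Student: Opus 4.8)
The plan is to prove the equivalence $M,w\models\phi$ iff $M,w\models\ST_x(\phi)$ by induction on the structure of the modal formula $\phi$. Before starting the induction, I would first make precise the claim being proved: for every modal formula $\phi$ and every pointed Kripke model $(M,w)$, viewing $M$ as a first-order structure over the corresponding relational signature $\sigma$, we have $M,w\models\phi$ if and only if $M\models\ST_x(\phi)[x\mapsto w]$. It is convenient to strengthen the statement slightly so that the induction hypothesis is available at the inductively-chosen worlds, but in fact the statement as given is already uniformly quantified over all pointed models $(M,w)$, so the naive induction on $\phi$ goes through directly.

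The base cases are $\phi = p$, $\phi = \top$, and $\phi = \bot$. For $\phi = p$ we have $M,w\models p$ iff $w\in V(p)$ by Table~\ref{tab:semantics}, and $M\models P(x)[x\mapsto w]$ iff $w\in P^M$; since the correspondence between Kripke models and first-order structures sends $V(p)$ to the interpretation $P^M$ of the unary predicate $P$, these two conditions coincide. The cases $\phi = \top$ and $\phi = \bot$ are immediate since both the modal semantics and the first-order semantics make $\top$ always true and $\bot$ always false, and $\ST_x(\top) = \top$, $\ST_x(\bot) = \bot$.

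For the inductive step I would treat the Boolean connectives and the modal operators. The cases $\neg\chi$, $\chi_1\land\chi_2$, $\chi_1\lor\chi_2$ are routine: the modal clause and the first-order clause for each connective have the same shape (see Table~\ref{tab:semantics} and Table~\ref{tab:ST}), so the equivalence follows immediately from the induction hypothesis applied to the immediate subformulas, all at the same world $w$. The genuinely modal cases are $\Diamond\chi$ and $\Box\chi$. For $\Diamond\chi$: by Table~\ref{tab:semantics}, $M,w\models\Diamond\chi$ iff there exists $w'$ with $(w,w')\in R$ and $M,w'\models\chi$; by the induction hypothesis (applied to the pointed model $(M,w')$), this is equivalent to: there exists $w'$ with $(w,w')\in R$ and $M\models\ST_y(\chi)[y\mapsto w']$; and by the semantics of the first-order existential quantifier and the atom $R(x,y)$, this is exactly $M\models \exists y(R(x,y)\land\ST_y(\chi))[x\mapsto w] = \ST_x(\Diamond\chi)$. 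The case $\Box\chi$ is dual, using $\forall$ and $\to$ in place of $\exists$ and $\land$.

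The main (and in fact only) subtlety worth flagging is purely bookkeeping: the standard translation alternates between the variables $x$ and $y$ (equivalently, one may use a countable supply of variables and track which one is ``active''), so one must be slightly careful that in the $\Diamond$ and $\Box$ cases the induction hypothesis is invoked with the correct free variable and the correct assignment, namely $\ST_y(\chi)$ evaluated under $y\mapsto w'$. Since the statement is universally quantified over all pointed Kripke models and the free variable is immaterial up to renaming, there is no real obstacle here — the argument is a straightforward structural induction — which is why this theorem is typically stated and used without proof.
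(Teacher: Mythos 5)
Your proof is correct: the paper states this theorem without proof (it is the standard, folklore argument), and your structural induction — with the routine base and Boolean cases and the careful handling of the variable switch in the $\Diamond$ and $\Box$ cases — is exactly the argument the paper implicitly relies on. Nothing further is needed.
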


There is no such translation from first-order logic back to modal logic. For example, 
the first-order formula $\phi(x)=\exists y(R(x,y)\land R(y,y)\land P(y))$ is 
not equivalent to any modal formula. 
This can be proved formally using \emph{bisimulations}. 

\begin{figure}[t]
\[
\begin{tikzcd}
      &              & ~            &  \\
      & ~            & \arrow[u,""] &  \\
~     & \arrow[u,""] & \arrow[u,""] & \!\!\!\!\!\!\!\!\ldots \\
      & \cdot \arrow[lu,""] \arrow[u,""] \arrow[ru,""]      
\end{tikzcd}
~~~~~~~~~~~~~
\begin{tikzcd}
      &              & ~            &        & \vdots \\
      & ~            & \arrow[u,""] &        & \arrow[u,""]\\
~     & \arrow[u,""] & \arrow[u,""] & \!\!\!\!\!\!\!\!\ldots & \arrow[u,""] \\
      && \cdot \arrow[llu,""] \arrow[lu,""] \arrow[u,""] \arrow[rru,""]     
\end{tikzcd}
\]
    \caption{Pointed Kripke models (over the empty propositional signature) that are
    modally indistinguishable but not bisimilar. The first has  branches of all finite lengths. The second has, in addition, an infinite branch.}
    \label{fig:not-modally-saturated}
\end{figure}

Given two Kripke models $M=(W,R,V), M'=(W',R',V')$ for the same propositional signature $\tau$, a \emph{$\tau$-bisimulation} between $M$ and $M'$ is a binary relation $Z\subseteq W\times W'$ such that the following conditions hold for all $(w,w')\in Z$:
\begin{description}
    \item[Atomic Harmony] For all $p\in \tau$, $w\in V(p)$ iff $w'\in V'(p)$,
    \item[Forth] For each $(w,v)\in R$ there is a $(w',v')\in R'$
            such that $(v,v')\in Z$,
    \item[Back] For each $(w',v')\in R'$ there is a $(w,v)\in R$
            such that $(v,v')\in Z$.
\end{description}
A $\tau$-bisimulation $Z$ between $M=(W,R,V)$ and $M'=(W',R',V')$ is said to be \emph{full} if 
for each $w\in W$ there is a $w'\in W'$ with $(w,w')\in Z$,
and, conversely, for each $w'\in W'$ there is  a
$w\in W$ with $(w,w')\in Z$.
We write $(M,w)\bisim_\tau (M',w')$, and we say that $(M,w)$ is 
\emph{$\tau$-bisimilar} to $(M',w')$, if there is a $\tau$-bisimulation
$Z$ between $M$ and $M'$ with $(w,w')\in Z$.

\begin{example} 
\label{ex:bisimulation}
Let $\tau=\{p\}$ and consider the following  
pointed Kripke models: 
\[
\begin{tikzcd}[row sep = 10mm]
    M: & w_0 \arrow[r] & 
    w_1^{~p} \arrow[loop right]{u} \\
    N: & v_0 \arrow[r] \arrow[u, dotted, dash]&
    v_1^{~p} \arrow[r] \arrow[u, dotted, dash]& 
    v_2^{~p} \arrow[ul, dotted, dash] \arrow[loop right]{u} 
\end{tikzcd}
\]
Then $(M,w_0)$ and $(N,v_0)$
are $\tau$-bisimilar. As bisimulation we can 
pick the binary relation $Z=\{(w_0,v_0), (w_1,v_1), (w_1,v_2)\}$ as indicated by the dotted lines. 
\eoe
\end{example}

It is a well-known fact that modal formulas are bisimulation-invariant. 

\begin{theorem}
\label{thm:bisim-modal-equiv}
Let
$(M,w)$ and $(M',w')$ be pointed Kripke models
with 
$(M,w)\bisim_\tau (M',w')$ for some propositional signature $\tau$. Then for all modal formulas $\chi$ with
$\sig(\chi)\subseteq\tau$, we have that
 $M,w\models\chi$ iff
 $M',w'\models\chi$.
\end{theorem}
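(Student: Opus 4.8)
The plan is to prove Theorem~\ref{thm:bisim-modal-equiv} by induction on the structure of the modal formula $\chi$, keeping the propositional signature $\tau$ and the bisimulation relation $Z$ fixed throughout. Concretely, I would prove the stronger statement: for every $\tau$-bisimulation $Z$ between $M$ and $M'$ and every pair $(w,w')\in Z$, and every modal formula $\chi$ with $\sig(\chi)\subseteq\tau$, we have $M,w\models\chi$ iff $M',w'\models\chi$. Strengthening the claim to quantify over all pairs in $Z$ (rather than just the designated pair $(w,w')$) is what makes the induction go through, since the modal cases pass to $R$-successors, which need not be the originally designated worlds.

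The base cases are immediate: for $\chi = p$ with $p\in\tau$, the equivalence $M,w\models p \iff w\in V(p) \iff w'\in V'(p) \iff M',w'\models p$ is exactly the \textbf{Atomic Harmony} condition; for $\chi=\top$ and $\chi=\bot$ the equivalence is trivial since these formulas have a fixed truth value at every world. The Boolean cases $\neg\chi$, $\chi_1\land\chi_2$, $\chi_1\lor\chi_2$ follow directly from the induction hypothesis together with the semantic clauses in Table~\ref{tab:semantics}, using the fact that $\sig(\chi_i)\subseteq\sig(\chi)\subseteq\tau$.

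The crux is the modal cases. Suppose $\chi = \Diamond\psi$ and $M,w\models\Diamond\psi$. Then there is $v$ with $(w,v)\in R$ and $M,v\models\psi$. By the \textbf{Forth} condition, there is $v'$ with $(w',v')\in R'$ and $(v,v')\in Z$. Since $(v,v')\in Z$ and $\sig(\psi)\subseteq\tau$, the induction hypothesis gives $M',v'\models\psi$, hence $M',w'\models\Diamond\psi$. The converse direction is symmetric, using \textbf{Back} instead of \textbf{Forth}. The case $\chi=\Box\psi$ can either be handled dually by a direct argument (if $M,w\models\Box\psi$ and $(w',v')\in R'$, use \textbf{Back} to find $v$ with $(w,v)\in R$ and $(v,v')\in Z$, then apply the induction hypothesis), or simply deduced from the $\Diamond$ case by observing that $\Box\psi$ is equivalent to $\neg\Diamond\neg\psi$ and invoking the already-established $\neg$ and $\Diamond$ cases. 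I expect the only subtlety—hardly an obstacle—to be remembering to generalize the statement over all pairs of $Z$ before starting the induction; everything else is a routine unwinding of definitions.
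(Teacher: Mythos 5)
Your proof is correct and is the standard argument: the paper itself states this result as a well-known fact and gives no proof (deferring to~\cite{BlackburnDeRijkeVenema}), so there is nothing to compare against. Your key move of strengthening the induction hypothesis to quantify over all pairs in $Z$ is exactly what makes the modal cases go through, and the use of Atomic Harmony for atoms and Forth/Back for $\Diamond$ and $\Box$ is the canonical textbook argument.
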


In other words, if two pointed Kripke models $(M,w)$ and $(M',w')$ are $\tau$-bisimilar, then they are
\emph{modally indistinguishable w.r.t.\ $\tau$}, 
by which we mean that they satisfy the same
modal formulas in the propositional signature $\tau$.

\begin{example}
    We can use Theorem~\ref{thm:bisim-modal-equiv}, in combination with the Kripke models from Example~\ref{ex:bisimulation}, to show that the \FO-formula $\phi(x)=\exists y (R(x,y)\land R(y,y)\land P(y))$ is not equivalent to (the standard translation of) any modal formula in the propositional signature $\tau=\{p\}$. Indeed, 
    if there were such a modal formula, it would
    have to be true in $(M,w_0)$ but false in  $(N,v_0)$. But this is impossible, as
    the two are $\tau$-bisimilar.
    \eoe
\end{example}

The converse of Theorem~\ref{thm:bisim-modal-equiv} does not hold: modal indistinguishability does not imply the existence of a bisimulation.\footnote{By the Hennessy-Milner Theorem, the converse does hold for ``image-finite'' models, i.e., Kripke models in which each world has at most finitely many successors.}
Figure~\ref{fig:not-modally-saturated} depicts a well-known example of pointed Kripke models that are modally indistinguishable w.r.t.~$\tau$ but not $\tau$-bisimilar (where we can even pick $\tau=\emptyset$).
The following ``detour theorem'' provides a weaker form of converse. We will make use of it in one of our proofs.

\begin{theorem}\label{thm:detour}
For all pointed Kripke models $(M,w)$ and $(N,v)$ over the same propositional signature  $\tau$, the following
are equivalent:
\begin{enumerate}
\item $(M,w)$ and $(N,v)$ are modally indistinguishable w.r.t.~$\tau$,
\item There exist pointed Kripke models $(M',w)$ and $(N',v)$, where $(M',w)$ is an ultrapower of $(M,w)$ and $(N',v)$ is an ultrapower of $(N,v)$, such that $(M',w)\bisim_\tau (N',v)$.
\end{enumerate}
\end{theorem}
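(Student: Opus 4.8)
The plan is to prove the nontrivial direction $(1)\Rightarrow(2)$ by passing to sufficiently saturated elementary extensions and then invoking the Hennessy--Milner property. The direction $(2)\Rightarrow(1)$ is immediate: ultrapowers preserve the truth of first-order formulas at the designated world (by {\L}o\'s's theorem applied to the standard translation), so $(M',w)$ is modally indistinguishable from $(M,w)$ and likewise for $(N',v)$; and bisimilar pointed models are modally indistinguishable by Theorem~\ref{thm:bisim-modal-equiv}. Chaining these three equivalences gives $(1)$.

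For $(1)\Rightarrow(2)$, first I would view $M$ and $N$ as first-order structures over the correspondence signature $\sigma$ and take a single ultrafilter $U$ (on a suitable index set) such that the ultrapowers $M' = M^U$ and $N' = N^U$ are both $\omega$-saturated; such an $U$ exists by standard model theory (e.g.\ a countably incomplete ultrafilter over $\omega$, or a sufficiently large index set). Let $w, v$ also denote the images of $w, v$ under the diagonal embeddings. By {\L}o\'s's theorem and the standard translation, $(M',w)$ and $(N',v)$ remain modally indistinguishable w.r.t.~$\tau$, since modal indistinguishability of $(M,w)$ and $(N,v)$ is preserved. Now the key step: on $\omega$-saturated models the converse of Theorem~\ref{thm:bisim-modal-equiv} does hold. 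Concretely, I claim the relation
\[
Z = \{(a,b) \in W' \times W'' \mid (M',a) \text{ and } (N',b) \text{ are modally indistinguishable w.r.t.\ }\tau\}
\]
is a $\tau$-bisimulation between $M'$ and $N''$ (writing $N'=(W'',R'',V'')$), and it contains $(w,v)$.

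The main obstacle — and the heart of the argument — is verifying the \textbf{Forth} and \textbf{Back} conditions for $Z$, which is exactly where $\omega$-saturation is used. Suppose $(a,b)\in Z$ and $(a,a')\in R'$; I must find $b'$ with $(b,b')\in R''$ and $(a',b')\in Z$. Consider the set of modal formulas $\Sigma = \{\chi \mid M',a'\models\chi,\ \sig(\chi)\subseteq\tau\}$. For every finite subset $\{\chi_1,\dots,\chi_n\}\subseteq\Sigma$ we have $M',a\models\Diamond(\chi_1\wedge\cdots\wedge\chi_n)$, hence $N',b\models\Diamond(\chi_1\wedge\cdots\wedge\chi_n)$ by indistinguishability, so the type (over the standard translation, with the edge $R(b,y)$ added) asserting that a successor of $b$ satisfies all of $\Sigma$ is finitely satisfiable in $N'$; by $\omega$-saturation it is realized by some $b'$ with $(b,b')\in R''$. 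Then $M',a'$ and $N',b'$ agree on all $\tau$-formulas true at $a'$, and — since modal formulas are closed under negation — on all $\tau$-formulas, so $(a',b')\in Z$. The \textbf{Back} condition is symmetric, using saturation of $M'$. Atomic Harmony is immediate since proposition letters are atomic modal formulas. Finally $(w,v)\in Z$ by hypothesis, so $(M',w)\bisim_\tau(N',v)$, as required. (One subtlety to flag: ``$\omega$-saturated'' suffices here only because each world has its successor-type determined by countably many modal formulas; to be safe one may simply take $\kappa$-saturated ultrapowers for $\kappa$ larger than $|\tau|$, which the construction accommodates without change.)
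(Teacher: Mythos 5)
Your proof is correct and is exactly the standard ``detour via saturation'' argument that the paper itself does not spell out but delegates to \cite{BlackburnDeRijkeVenema}: pass to $\omega$-saturated ultrapowers (preserving modal indistinguishability by {\L}o\'s's theorem and the standard translation) and verify that the relation of modal indistinguishability is a $\tau$-bisimulation between such models, i.e., the Hennessy--Milner property for modally saturated models. The details --- finite satisfiability of the successor type over the single parameter $b$, closure of modal formulas under negation to upgrade one-directional agreement to full agreement, and the caveat about saturation degree for large signatures --- are all in order (modulo the harmless typo $N''$ for $N'$).
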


\begin{figure}
\begin{center}
\begin{tikzcd}[row sep = 32mm, column sep = 40mm]
(M',w) \arrow[r, dash, "\text{$\tau$-bisimilar}", dashed] & (N',v) \\
(M,w) \arrow[u, "\text{ultrapower}", "\text{(\FO-indistinguishable)}"', sloped, dashed] \arrow[r, dash, "\text{modally indistinguishable w.r.t.~$\tau$}" ]
&  (N,v) \arrow[u, "\text{ultrapower}", "(\text{\FO-indistinguishable})"', sloped, dashed]
\end{tikzcd}
\end{center}
\caption{Diagrammatic depiction of Theorem~\ref{thm:detour}.}
\label{fig:detour}
\end{figure}

Ultrapowers are a well-known construction in 
model theory (cf.~\cite{hodges1992model}). The precise
definition of ultrapowers is not relevant here. The only property that will
be important for us is the following:
if $B$ is an ultrapower of a structure $A$,
then $A$ and $B$ are indistinguishable for
first-order logic. More precisely,  if a 
pointed Kripke structure
$(M',w)$ is an ultrapower of a pointed Kripke structure $(M,w)$, then
$(M',w)$ and $(M,w)$ are indistinguishable 
for first-order formulas $\phi(x)$. In 
particular, this also means that $(M',w)$
and $(M,w)$ are modally indistinguishable. 
Theorem~\ref{thm:detour} shows that,
although modal indistinguishability does not
imply bisimulation, it implies the existence
of bisimilar ultrapowers (cf.~Figure~\ref{fig:detour}).
We refer to~\cite{BlackburnDeRijkeVenema} for more details
and for a proof of Theorem~\ref{thm:detour}.

\section{First proof (model-theoretic)}
\label{sec:model}

Model-theoretic proofs of Craig interpolation 
typically proceed by contraposition:
from the assumption that there is no Craig
    interpolant for $\phi\to\psi$, we derive that $\phi\to\psi$ is not valid by constructing
    a model for $\phi\land\neg\psi$. 
    Gabbay's original proof of Craig interpolation for modal logic~\cite{Gabbay1972:craig} falls in this category.
    The construction of the model for $\phi\land\neg\psi$ typically involves a method for amalgamating a suitable pair of pointed Kripke models $M,w\models\phi$ and $N,v\models\neg\psi$ into a single pointed Kripke model satisfying $\phi\land\neg\psi$. 
    The \emph{bisimulation product} construction (also known as \emph{zigzag product}), originally introduced in \cite{marx1995algebraic} (see also~\cite{Marx1997:multi}), is such an amalgamation method. 
     Relationships between interpolation properties and amalgamation properties have also been studied from the point of view of algebraic semantics, cf.~the long line of work by Maksimova~\cite{Maks1977, Maks1979, Maksimova1991:amalgamation}. See also~\cite{hooglandthesis, Ven07} and Section~\ref{sec:algebra} of this chapter.

    Before we introduce bisimulation products, we 
    discuss a model-theoretic criterion
    (partly inspired by \cite{BarwiseVanBenthem1999})
    characterizing the existence of an interpolant
    for a given implication. 
        Given modal formulas $\phi,\psi$ and a propositional signature $\tau$, we will say that \emph{$\phi$ entails $\psi$ along $\tau$-bisimulations} if the following holds:
        for every pair of pointed Kripke structures
        $(M,w)$ and $(N,v)$, if $M,w\models\phi$
        and $(M,w)\bisim_\tau (N,v)$, then 
        $N,v\models\psi$. We can think of 
        this as a strong form of entailment: 
        it says when $M,w\models\phi$, this implies that $\psi$
        is true not only in $(M,w)$ but in every pointed Kripke model that
        is $\tau$-bisimilar to $(M,w)$.

    \begin{proposition}[Interpolant Existence Criterion] \label{prop:interpolant-criterion}
        For all modal formulas $\phi$ and $\psi$, 
        the following are equivalent:
        \begin{enumerate}
            \item $\phi\to\psi$ is a valid implication that has a Craig interpolant.
            \item $\phi$ entails $\psi$ along $(\sig(\phi)\cap\sig(\psi))$-bisimulations
        \end{enumerate}
    \end{proposition}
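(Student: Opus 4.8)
The plan is to prove both implications of Proposition~\ref{prop:interpolant-criterion}.

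\textbf{From (1) to (2).} Suppose $\vartheta$ is a Craig interpolant for $\phi\to\psi$, so $\models\phi\to\vartheta$, $\models\vartheta\to\psi$, and $\sig(\vartheta)\subseteq\sig(\phi)\cap\sig(\psi)=:\tau$. I would argue directly: take any pointed Kripke models $(M,w)$ and $(N,v)$ with $M,w\models\phi$ and $(M,w)\bisim_\tau(N,v)$. Since $\models\phi\to\vartheta$, we get $M,w\models\vartheta$. Since $\sig(\vartheta)\subseteq\tau$ and $(M,w)\bisim_\tau(N,v)$, Theorem~\ref{thm:bisim-modal-equiv} gives $N,v\models\vartheta$. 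Finally $\models\vartheta\to\psi$ yields $N,v\models\psi$. This direction is essentially immediate.

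\textbf{From (2) to (1).} This is the substantive direction, and it is where I expect the main work to lie. The plan is a compactness argument. Let $\tau=\sig(\phi)\cap\sig(\psi)$ and consider the set of modal $\tau$-formulas
\[
\Theta=\{\vartheta\mid \sig(\vartheta)\subseteq\tau,\ \models\phi\to\vartheta\},
\]
i.e.\ all $\tau$-consequences of $\phi$. First I would check that $\Theta\models\psi$ in the sense that every pointed Kripke model satisfying all of $\Theta$ satisfies $\psi$; if not, one gets a pointed model $(N,v)\models\Theta\cup\{\neg\psi\}$, and the model-theoretic heart of the argument is to produce from this an $(M,w)\models\phi$ with $(M,w)\bisim_\tau(N,v)$, contradicting~(2). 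The standard way to do this is via saturation and a Hennessy--Milner style argument: pass to $\omega$-saturated elementary extensions (or suitable ultrapowers, as in Theorem~\ref{thm:detour}) of a model of $\phi$ and of $(N,v)$, and show that ``satisfies the same $\tau$-formulas'' becomes a $\tau$-bisimulation on saturated models. To find a model of $\phi$ that is $\tau$-modally-equivalent to $(N,v)$, use that $(N,v)$ satisfies every $\tau$-consequence of $\phi$, so the theory of $\phi$ together with the $\tau$-modal-type of $(N,v)$ is finitely satisfiable, hence satisfiable, hence (after saturating) realized by some $(M,w)\bisim_\tau(N,v)$ with $M,w\models\phi$. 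Then, once $\Theta\models\psi$ is established, compactness of first-order logic (via the standard translation) applied to $\{\ST_x(\vartheta)\mid\vartheta\in\Theta\}\cup\{\neg\ST_x(\psi)\}$ gives a finite $\Theta_0\subseteq\Theta$ with $\models\bigwedge\Theta_0\to\psi$; then $\vartheta:=\bigwedge\Theta_0$ is the desired interpolant, since $\models\phi\to\vartheta$ by construction, $\models\vartheta\to\psi$ by choice of $\Theta_0$, and $\sig(\vartheta)\subseteq\tau$.

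\textbf{Main obstacle.} The delicate point is the passage from modal $\tau$-indistinguishability to an actual $\tau$-bisimulation, which fails in general (Figure~\ref{fig:not-modally-saturated}); this is exactly why one must move to $\omega$-saturated models or ultrapowers, invoking Theorem~\ref{thm:detour} or the Hennessy--Milner property. A secondary subtlety is that one must phrase the compactness step so that it respects the restriction $\sig(\vartheta)\subseteq\tau$ throughout — this is automatic here because $\Theta$ is defined to consist only of $\tau$-formulas, but it is worth being careful that taking a finite conjunction stays within $\tau$ and within the modal (as opposed to first-order) language, which it does since the grammar is closed under $\land$.
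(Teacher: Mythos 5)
Your proposal is correct and follows essentially the same route as the paper's proof: the easy direction via Theorem~\ref{thm:bisim-modal-equiv}, and for the hard direction the same double compactness argument over the set of shared-signature consequences of $\phi$ (your $\Theta$ is the paper's $\Sigma$), followed by the upgrade from modal $\tau$-indistinguishability to genuine $\tau$-bisimilarity via ultrapowers (Theorem~\ref{thm:detour}). The only difference is presentational — you argue directly from (2) and extract the interpolant as a finite conjunction at the end, whereas the paper argues by contraposition from the non-existence of an interpolant — and you correctly isolate the one genuinely delicate point, namely that modal equivalence must be promoted to bisimilarity on saturated models.
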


    \begin{proof}
            The direction from 1 to 2 is easy:
            let $\vartheta$ be the interpolant. Then $M,w\models\phi$ implies
            $M,w\models\vartheta$, which
            implies $N,v\models\vartheta$ 
            (by Theorem~\ref{thm:bisim-modal-equiv}),
            which implies $N,v\models\psi$.
            For the direction from 2 to 1, we
            reason by contraposition. First,
            suppose that $\phi\to\psi$ is not 
            valid, i.e., there is a pointed Kripke stucture $M,w\models\phi\land\neg\psi$.
            Then, clearly,
             $\phi$ does not entail $\psi$ along $(\sig(\phi)\cap\sig(\psi))$-bisimulations (because the identity relation is a bisimulation from $(M,w)$ to itself). Next, suppose that 
             $\phi\to\psi$ is valid but does not have a Craig interpolant.

    \medskip\par\noindent
    \emph{Claim 1:} There are pointed Kripke models $M,w\models\phi$ and $N,v\models\neg\psi$ such that $(M,w)$ and $(N,v)$ are modally indistinguishable w.r.t. $\sig(\phi)\cap\sig(\psi)$. 

    \medskip\par\noindent
    \emph{Proof of claim:} a standard argument involving a double application of the \emph{compactness theorem}. 
    The compactness theorem for modal logic 
    states that if a set $\Sigma$ of modal formulas
    is unsatisfiable, then some 
    finite subset of $\Sigma$ is already unsatisfiable.
    Let $\Sigma$ be the set of all 
    modal formulas in signature $\sig(\phi)\cap\sig(\psi)$
    that are entailed by $\phi$. Clearly, 
    every finite subset of $\Sigma\cup\{\neg\psi\}$ is satisfiable
    (otherwise the conjunction of the respective formulas from $\Sigma$ would be a Craig interpolant). Therefore,
    by compactness, there is a pointed Kripke model $N,v\models\Sigma\cup\{\neg\psi\}$.
    Next, let $\Gamma$ be the set of all  modal formulas in signature $\sig(\phi)\cap\sig(\psi)$ true in $N,v$. Clearly,
    every finite subset of $\Gamma\cup\{\phi\}$ is satisfiable (otherwise
    the negation of the conjunction  of the respective formulas from $\Gamma$ 
    would have been a formula belonging to $\Sigma$, contradicting the fact
    that $N,v\models\Sigma$). It follows again by compactness that there
    is a pointed Kripke model 
    $M,w\models\Gamma\cup\{\phi\}$. By construction, we have that
    $M,w\models\phi$ and $N,v\models\neg\psi$ and the two pointed Kripke models
    satisfy the same modal formulas in the signature $\sig(\phi)\cap\sig(\psi)$.

    \medskip\par\noindent
    Next, Theorem~\ref{thm:detour} allows us to ``upgrade'' $(M,w)$ and $(N,v)$ to FO-equivalent
    pointed Kripke models $(M',w)$ and $(N',v)$ such that $M',w\models\phi$ and $N',v\models\neg\psi$ that are bisimilar w.r.t. $\sig(\phi)\cap\sig(\psi)$. Therefore, 
    $\phi$ does not entail $\psi$ along $(\sig(\phi)\cap\sig(\psi))$-bisimulations.
    \end{proof}

    Proposition~\ref{prop:interpolant-criterion} holds not only for the modal logic \logic{K}, but 
    for \emph{every} normal modal logic defined by first-order frame conditions (by the same proof, since ultrapowers preserve all FO sentences, and the modal logic of a first-order frame class is always compact). Indeed, without going into details, it is worth mentioning that Proposition~\ref{prop:interpolant-criterion} holds true 
    for every \emph{canonical} modal logic~\cite{marx1995algebraic}.%
    \footnote{Remarkably, this is the case regardless of 
    whether the logic has the Craig interpolation property. Indeed, for certain modal logics lacking the Craig interpolation property, 
    Proposition~\ref{prop:interpolant-criterion} can be used to decide whether a Craig interpolant exists
    for a given valid implication, 
    cf.~\refchapter{chapter:separation}.}    
    In light of Proposition~\ref{prop:interpolant-criterion}, 
    in order to prove the Craig interpolation 
    property, it suffices to prove that, whenever
    a modal implication $\phi\to\psi$ is valid,
    then $\phi$ entails $\psi$ along $(\sig(\phi)\cap\sig(\psi))$-bisimulations.
    This is what we will use bisimulation products
    for.


\begin{definition}[Bisimulation Product]
Let $M=(W,R,V)$ and $M'=(W',R',V')$ be Kripke models over the same propositional signature.
\begin{itemize}
    \item The \emph{direct product} $M\times N$ 
    is the Kripke model whose domain is the cartesian product $W\times W'$; whose
    accessibility relation 
    consists of all pairs of pairs $\pair{(w_1,u_1),(w_2,u_2)}$ such that $(w_1,w_2)\in R \text{ and } (u_1, u_2)\in R'$; and 
    where a pair $(w,u)$ belongs to the valuation of a proposition letter $p$ iff $w\in V(p)$ and $u\in V'(p)$.
    \item A \emph{subdirect product} of $M$ and $N$
    is any induced substructure of $M\times N$.%
    \footnote{By an \emph{induced substructure} of a Kripke model $M=(W,R,V)$ we mean a Kripke model $M'=(W',R',V')$ where $W'\subseteq W$ and where $R'$ and $V'$ are obtained by restricting $R$ and $V$ to $W'$.}
    The subdirect product of $M$ and $N$ with domain $Z\subseteq W\times W'$ is denoted by  $M\times_Z N$.
    \item A \emph{bisimulation product} of $M$ and $N$ is a subdirect product $M\times_Z N$ such that $Z$ is a bisimulation between $M$ and $N$. If $Z$ is a full bisimulation, we also call $M\times_Z N$ a \emph{full bisimulation product}.
\end{itemize}
The same definitions apply also to Kripke frames (without the valuations).%
\footnote{For Kripke frames, (full) bisimulation products can be equivalently defined as subdirect products whose natural projections are (surjective) bounded morphisms.}
\end{definition}
Recall that a \emph{full bisimulation} is a bisimulation whose domain is the full domain of $A$ and whose range is the full domain of $B$. 
The fullness requirement will in fact not play a major role in what follows.
However, we should point out that the most common definition of a bisimulation product in prior literature~\cite{marx1995algebraic,Marx1997:multi} is in fact what we call full bisimulation products here.

\begin{example}\label{ex:bisprod}
Consider the following two Kripke models in the signature $\{q\}$. 
    \[  M:
\begin{tikzcd}[row sep = -1mm]
                    &                                  & w^1_3 ~^q\\
~^q~w_1 \arrow[r,""] & w_2 \arrow[ru,""] \arrow[rd,""]  &   \\
                    &                                  & w^2_3 ~^q 
\end{tikzcd}
~~~~~~~~~ N:
\begin{tikzcd}[row sep = -1mm]
                                  & v^1_2  \arrow[rd,""] & \\
^q~v_1 \arrow[ru,""]\arrow[rd,""] &                      & v_3 ~^q \\
                                  & v^2_2  \arrow[ru,""] &  
\end{tikzcd}
\]
The binary relation $Z=\{\pair{w_1,v_1}, \pair{w_2,v_2^1}, \pair{w_2,v_2^2}, \pair{w_3^1,v_3}, \pair{w_3^2,v_3}\}$ is a bisimulation (indeed, a full bisimulation), and the bisimulation
product $M\times_{Z} N$ is the following Kripke model:
\[
\begin{tikzcd}[row sep = -1mm]
                           & \pair{w_2,v^1_2} \arrow[r,""] 
                             \arrow[rdd,""]                   & \pair{w^1_3,v_3}~^q \\
^q~\pair{w_1,v_1} 
\arrow[ru,""]\arrow[rd,""] &                                   &   \\
                           & \pair{w_2,v^2_2} 
                              \arrow[r,""] \arrow[ruu,""]      & \pair{w^2_3,v_3}~^q 
\end{tikzcd}
\]
\eoe\end{example}

As promised, bisimulation products provide an
amalgamation method. The precise amalgamation
property we need is the following:

\begin{theorem}[\cite{marx1995algebraic}]\label{thm:amalgamation}
     Whenever $(M_1,w_1)\bisim_{\sigma\cap\tau} (M_2,w_2)$, there is  a
       pointed Kripke model $(N,v)$ such that
       $(N,v)\bisim_\sigma (M_1,w_1)$ and $(N,v)\bisim_\tau (M_2,w_2)$.
\end{theorem}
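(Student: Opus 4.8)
The plan is to take $N$ to be a bisimulation product of $M_1$ and $M_2$ along the very bisimulation witnessing the hypothesis, after a small amount of bookkeeping on the valuations of the non-shared proposition letters. Fix a $(\sigma\cap\tau)$-bisimulation $Z\subseteq W_1\times W_2$ with $(w_1,w_2)\in Z$. Discarding proposition letters outside $\sigma\cup\tau$ (which affects neither the hypothesis nor the two conclusions, up to the identity bisimulation), we may view $M_1$ and $M_2$ as models over $\sigma\cup\tau$. Now normalize the private letters: let $M_1^+$ agree with $M_1$ on $\sigma$ and set $V^{M_1^+}(p)=W_1$ for every $p\in\tau\setminus\sigma$, and symmetrically let $M_2^+$ agree with $M_2$ on $\tau$ and set $V^{M_2^+}(p)=W_2$ for every $p\in\sigma\setminus\tau$. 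Since this does not change the valuation of any letter in $\sigma\cap\tau$, $Z$ remains a $(\sigma\cap\tau)$-bisimulation between $M_1^+$ and $M_2^+$, and $(M_1^+,w_1)\bisim_\sigma(M_1,w_1)$ and $(M_2^+,w_2)\bisim_\tau(M_2,w_2)$ via the identity. Put $N:=M_1^+\times_Z M_2^+$ and $v:=(w_1,w_2)$.

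Since bisimilarity composes, it suffices to show $(N,v)\bisim_\sigma(M_1^+,w_1)$ and $(N,v)\bisim_\tau(M_2^+,w_2)$. The natural candidates are the graphs of the two projections: $Z_1:=\{\,((a,b),a):(a,b)\in Z\,\}$ for $M_1^+$, and $Z_2:=\{\,((a,b),b):(a,b)\in Z\,\}$ for $M_2^+$, noting $(v,w_1)\in Z_1$ and $(v,w_2)\in Z_2$. The Forth clause for $Z_1$ is immediate: if $(a,b)\,R^N\,(a',b')$ then, by definition of the product frame, $a\,R^{M_1^+}\,a'$, and $(a',b')\in Z$ since the domain of $N$ is $Z$, so $a'$ is the required successor. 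The Back clause for $Z_1$ is exactly where the hypothesis that $Z$ is a bisimulation is used: given $a\,R^{M_1^+}\,a'$, apply the Forth clause of $Z$ to obtain $b'$ with $b\,R^{M_2^+}\,b'$ and $(a',b')\in Z$; then $(a,b)\,R^N\,(a',b')$ (a product edge between two points of the domain $Z$) and $((a',b'),a')\in Z_1$. The verification for $Z_2$ is entirely symmetric, using the Back clause of $Z$ in place of its Forth clause.

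It remains to check Atomic Harmony, which is the only place the valuation normalization is needed and essentially the only computation in the proof. Fix $(a,b)\in Z$ and $p\in\sigma$. By the product valuation, $(a,b)\in V^N(p)$ iff $a\in V^{M_1^+}(p)$ and $b\in V^{M_2^+}(p)$. If $p\in\sigma\setminus\tau$, the second conjunct holds by construction of $M_2^+$; if $p\in\sigma\cap\tau$, Atomic Harmony of $Z$ gives $a\in V^{M_1^+}(p)$ iff $b\in V^{M_2^+}(p)$. Either way, $(a,b)\in V^N(p)$ iff $a\in V^{M_1^+}(p)$, so $Z_1$ satisfies Atomic Harmony over $\sigma$; the dual computation (now using the normalization of $M_1^+$ on letters in $\tau\setminus\sigma$) gives Atomic Harmony for $Z_2$ over $\tau$. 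Hence $(N,v)\bisim_\sigma(M_1,w_1)$ and $(N,v)\bisim_\tau(M_2,w_2)$, as required. There is no real obstacle here: the argument is routine once one is careful about the treatment of the proposition letters in $(\sigma\setminus\tau)\cup(\tau\setminus\sigma)$ in the product's valuation, which is precisely what the normalization step handles.
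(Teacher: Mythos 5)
Your proposal is correct and follows essentially the same route as the paper: form the bisimulation product $M_1\times_Z M_2$ along the witnessing $(\sigma\cap\tau)$-bisimulation $Z$, interpret each private letter according to the corresponding projection, and check that the two projection graphs are $\sigma$- and $\tau$-bisimulations. Your ``normalization'' of the private letters in the factors before taking the product yields exactly the same valuation as the paper's post hoc extension of the product's valuation, and your explicit verification of Forth/Back/Atomic Harmony correctly fills in what the paper leaves as ``it can be verified''.
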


\begin{proof}
   Let $M_1=(W_1,R_1,V_1)$ and $M_2=(W_2,R_2,V_2)$, and
   let $Z$ be a $\sigma\cap\tau$-bisimulation between $M_1$ and $M_2$
   such that $\pair{w_1,w_2}\in Z$.
   %
   %
   Let $M_1\times_Z M_2$ be the bisimulation product in signature $\sigma\cap\tau$.
   Finally, we expand $M_1\times_Z M_2$ to a Kripke model $N=(W,R,V)$ in the full signature $\sigma\cup\tau$ by extending the valuation function
   $V$ of $M_1\times_Z M_2$ such that
   $V(p)=\{\pair{u,v}\in Z\mid u\in V_1(p)\}$ for $p\in \sigma\setminus\tau$, and setting
      $V(p)=\{\pair{u,v}\in Z\mid v\in V_2(p)\}$ for $p\in\tau\setminus\sigma$. 
It can be verified that the relations $Z_1=\{\pair{\pair{u,v},u}\mid \pair{u,v}\in Z\}$ and $Z_2=\{\pair{\pair{u,v},v}\mid \pair{u,v}\in Z\}$ are bisimulations 
from the structure thus obtained to $M_1$ and $M_2$ respectively, with respect to the signature $\sigma$ and $\tau$, respectively. 
\end{proof}

\begin{example}
This example builds on Example~\ref{ex:bisprod}.
Consider the following two Kripke models in the signature $\sigma=\{p,q\}$ and $\tau=\{q,r\}$, respectively:
    \[ M:
\begin{tikzcd}[row sep = -1mm]
                    &                                  & ~~ w^1_3 ~^{q,p}\\
~^q~w_1 \arrow[r,""] & w_2 \arrow[ru,""] \arrow[rd,""]  &   \\
                    &                                  & w^2_3  ~^q 
\end{tikzcd}
~~~~~~~~~ N:
\begin{tikzcd}[row sep = -1mm]
                                  & v^1_2  \arrow[rd,""] & \\
~^q~v_1 \arrow[ru,""]\arrow[rd,""] &                      & v_3 ~^q \\
                                  & v^2_2 \arrow[ru,""] &  \\
                                  & ~~~~~~^r
\end{tikzcd}
\]
Observe that the reducts of $M$ and $N$ to the common signature $\sigma\cap\tau$
are precisely the Kripke models in Example~\ref{ex:bisprod}. As we saw in 
Example~\ref{ex:bisprod},
the binary relation \[Z=\{\pair{w_1,v_1}, \pair{w_2,v_2^1}, \pair{w_2,v_2^2}, \pair{w_3^1,v_3}, \pair{w_3^2,v_3}\}\] is a bisimulation with respect to the 
common signature $\sigma\cap\tau$. Following the proof of Theorem~\ref{thm:amalgamation}, to amalgamate the two structures, we 
take the bisimulation product with respect to the common signature, and 
we expand it to the full signature $\sigma\cup\tau$ by interpreting
the proposition letters in $\sigma\setminus\tau$ according the left projection 
and interpreting the proposition letters in $\tau\setminus\sigma$ according
to the right projection. In this way, we obtain:
\[
\begin{tikzcd}[row sep = -1mm]
                           & \pair{w_2,v^1_2} \arrow[r,""] 
                             \arrow[rdd,""]                   & ~~ \pair{w^1_3,v_3}~^{q,p} \\
~^q~\pair{w_1,v_1} 
\arrow[ru,""]\arrow[rd,""] &                                   &   \\
                           & \pair{w_2,v^2_2} 
                              \arrow[r,""] \arrow[ruu,""]      & \pair{w^2_3,v_3}~^q  \\
                          & ~~~~~~~~~~~^r
\end{tikzcd}
\]
It can be verified that this amalgamated structure is bisimilar to $M$
w.r.t.~$\sigma$ and bisimilar to $N$ w.r.t.~$\tau$.
\eoe\end{example}

\begin{mainproof}
    Let $\phi, \psi$ be modal formulas such that $\phi\to\psi$ does \emph{not} have a Craig interpolant. Then, by Proposition~\ref{prop:interpolant-criterion}, there are pointed Kripke models
    $M,w\models\phi$ and $N,v\models\neg\psi$ that are bisimilar w.r.t. $\sig(\phi)\cap\sig(\psi)$. By Theorem~\ref{thm:amalgamation}, there is a pointed Kripke model that is bisimilar in signature $\sig(\phi)$ to $(M,w)$ and bisimilar in 
    signature $\sig(\psi)$ to $(N,v)$. It immediately follows that this 
    pointed Kripke structure satisfies $\phi\land\neg\psi$ and therefore
    $\phi\to\psi$ is not a valid modal implication.
\end{mainproof}

This 
    approach generalizes to the modal logic of any frame class defined by first-order frame conditions that are preserved by bisimulation products, or, alternatively, by full
    bisimulation products and generated subframes (since a bisimulation can always be ``upgraded'' to a full bisimulation by taking generated submodels). This
    includes all frame classes defined by universal Horn sentences, since these are closed under subdirect
    products and bisimulations are an instance of subdirect products. See also~Section~\ref{sec:algebra}.
    In
    \cite{tenCate2005}, a precise syntactic characterization is given of the first-order frame conditions that are preserved by bisimulation products (which forms a proper generalization of universal Horn sentences). In order to state this characterization, let a \emph{strict p-formula} be a first-order formula obtained from atomic formulas (including equality statements, $\top$ and $\bot$) using conjunction, disjunction, existential quantifiers, and bounded universal quantifiers of the 
    form $\forall y(Rxy\to\cdot)$, where $x$ and $y$ are distinct variables. 

    \begin{theorem}[\cite{tenCate2005}]
    A first-order sentence $\phi$ is preserved by bisimulation products iff $\phi$
is equivalent to a conjunction of sentences of the form
$\forall\textbf{x}(\psi\to\chi)$, where $\psi$ is a strict p-formula and $\chi$ is
 an atomic formula.
\end{theorem}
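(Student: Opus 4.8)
The plan is to prove the two directions separately: the ``if'' direction by a direct semantic verification, and the ``only if'' direction as a genuine preservation theorem, via the classical diagram-plus-compactness-plus-saturation recipe (in the style of the {\L}o\'{s}--Tarski theorem, specialised here to a fragment that properly extends universal Horn). For the ``if'' direction I would show that every sentence $\forall\textbf{x}(\psi\to\chi)$ with $\psi$ a strict p-formula and $\chi$ atomic is preserved by bisimulation products; finite conjunctions then inherit this. Fix Kripke models $M_1,M_2$ validating such a sentence, a bisimulation $Z$ between them, and the product $M_1\times_Z M_2$, whose domain is $Z$. Two ingredients suffice. First, an atomic formula -- including $\top$, $\bot$, equalities, $R$-atoms and $P$-atoms -- holds in the product at a tuple $\bar a$ iff it holds in $M_1$ at $\pi_1\bar a$ and in $M_2$ at $\pi_2\bar a$; this is immediate from the definitions of the product relation, the product valuation, and componentwise equality of pairs. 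Second, if $\xi$ is a strict p-formula and $M_1\times_Z M_2\models\xi[\bar a]$ for a tuple $\bar a$ from $Z$, then $M_1\models\xi[\pi_1\bar a]$ and $M_2\models\xi[\pi_2\bar a]$; this is an easy induction on $\xi$ in which the only case that is not purely formal is the bounded universal one, $\forall y(Rx_iy\to\xi')$, where the Forth (respectively Back) clause of $Z$ is used to convert a successor of $\pi_1 a_i$ in $M_1$ into a successor of $a_i$ in the product. Combining the two ingredients: if $M_1\times_Z M_2$ satisfies the antecedent $\psi[\bar a]$, then each $M_i$ satisfies $\psi$, hence $\chi$, at its projection, hence $M_1\times_Z M_2$ satisfies $\chi[\bar a]$.

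For the ``only if'' direction, assume $\phi$ is preserved by bisimulation products, and let $\Gamma$ be the set of all sentences of the form $\forall\textbf{x}(\psi\to\chi)$ -- $\psi$ a strict p-formula, $\chi$ atomic -- that are entailed by $\phi$. Since every member of $\Gamma$ follows from $\phi$, it is enough to prove $\Gamma\models\phi$: then $\Gamma\cup\{\neg\phi\}$ is unsatisfiable, compactness yields a finite $\Gamma_0\subseteq\Gamma$ with $\Gamma_0\models\phi$, and together with $\phi\models\bigwedge\Gamma_0$ this shows $\phi$ is equivalent to $\bigwedge\Gamma_0$, a finite conjunction of the required shape. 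So let $A\models\Gamma$; we must show $A\models\phi$. Since passing to an elementary extension affects neither ``$A\models\Gamma$'' nor the truth value of $\phi$ in $A$, we may assume $A$ is sufficiently saturated. The goal is then to exhibit $A$, up to isomorphism, as a (full) bisimulation product of two models of $\phi$: the preservation hypothesis then gives $A\models\phi$ at once.

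To build such a product one follows the usual pattern. Introduce a fresh constant $c_a$ for every $a\in A$, and consider the theory consisting of $\phi$ together with all strict-p facts $\psi(c_{a_1},\dots,c_{a_n})$ such that $A\models\psi[\bar a]$ -- exactly the information that a projection of a bisimulation product onto one of its factors is obliged to preserve. This theory is finitely satisfiable: a finite inconsistency would, after universally quantifying the constants away, yield a sentence $\forall\textbf{x}(\Psi\to\bot)$ with $\Psi$ a conjunction of strict p-formulas -- hence again a strict p-formula -- which is entailed by $\phi$ but false in $A$, contradicting $A\models\Gamma$. Compactness therefore produces a model $B\models\phi$ together with a strict-p-respecting map from $A$; carrying this out on both sides, and pairing the two maps, presents $A$ as a subdirect product of two models of $\phi$ whose index relation can be arranged to be a full bisimulation, i.e.\ as a full bisimulation product of two models of $\phi$.

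The step I expect to be the real obstacle is precisely this last realization -- upgrading ``$A$ is strict-p-connected to a bisimulation product of models of $\phi$'' to ``$A$ is, up to isomorphism, such a product''. Saturation is what makes it work: it is needed to turn the one-directional strict-p maps into genuine bounded morphisms meeting the zigzag (Back) condition, and to make the tightness of the subdirect product together with the Atomic Harmony, Forth and Back conditions of the index relation hold exactly rather than just in the limit. And the precise syntax of strict p-formulas -- atoms closed only under conjunction, disjunction, existential quantification, and bounded universal quantification $\forall y(Rxy\to\cdot)$ -- is exactly what guarantees that every obstruction that surfaces in the compactness step is again of the form $\forall\textbf{x}(\psi\to\chi)$ and so lies in $\Gamma$, where it can be played off against the assumption $A\models\Gamma$. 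Once $A$ is realized as such a product, the remaining verifications (that the projections are surjective bounded morphisms, that the index relation is a full bisimulation) are routine. The complete argument is carried out in~\cite{tenCate2005}.
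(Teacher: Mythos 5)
First, a point of reference: the paper does not prove this theorem at all — it is quoted from \cite{tenCate2005} without proof — so there is no in-paper argument to compare yours against. Judged on its own terms, your ``if'' direction is correct and essentially complete: the componentwise characterization of atomic formulas (including equality, $\top$, $\bot$) in $M_1\times_Z M_2$, and the downward transfer of strict p-formulas from the product to the two factors, with Forth/Back handling the bounded universal quantifier, are exactly the right two lemmas, and they combine as you say.

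The ``only if'' direction, however, has a genuine gap at precisely the point you flag. The reduction to ``every (sufficiently saturated) model $A$ of $\Gamma$ satisfies $\phi$'' and the consistency argument for $\phi$ together with the strict-p diagram of $A$ are both fine. But the step ``carrying this out on both sides, and pairing the two maps, presents $A$ as a \ldots\ full bisimulation product of two models of $\phi$'' does not work as described. Two independent compactness arguments produce maps $f_i\colon A\to B_i$ that merely preserve strict p-formulas from $A$ into $B_i$; the pairing $a\mapsto(f_1a,f_2a)$ need not be injective, the induced substructure on its image need not have the same atomic relations as $A$ (the $f_i$ preserve atoms, but nothing forces them to jointly reflect them), and the image need not satisfy Forth or Back as a relation between $B_1$ and $B_2$ — that would require, for instance, every $R$-successor of $f_1a$ in $B_1$ to be of the form $f_1a'$ for a suitable $a'\in A$, which saturation of $A$ alone cannot deliver, since it constrains $A$ and says nothing about the $B_i$ produced by compactness. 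Repairing this requires building $B_1$, $B_2$ and the maps in a coordinated way (e.g.\ an elementary-chain or back-and-forth construction in which the $B_i$ are themselves taken suitably saturated and all three structures are extended in tandem), and that is where the real content of the hard direction lives; deferring it to \cite{tenCate2005} leaves that half unproved. A smaller remark: it would suffice, and is strictly easier, to realize some structure elementarily equivalent to $A$ (or of which $A$ is an elementary substructure) as such a product, rather than $A$ itself up to isomorphism, since $\phi$ is a sentence.
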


    Many interesting first-order frame conditions are covered by this. For instance, all universal Horn conditions. A notable exception is the Church--Rosser condition
    $\forall xyz(Rxy\land Ryz\to \exists u(Ryu\land Rzu))$. Indeed, the modal logic of the frame class defined by this condition lacks Craig interpolation (cf.~\cite{tenCate2005}). The modal logic \logic{K.Alt_2} (i.e., the logic of the class of frames in which every world has at most two successors) is an example of a modal logic that is not closed under bisimulation products but still has Craig interpolation (cf.~\cite{DBLP:conf/stacs/JungK25}).
    
In summary:

\pro{The model theoretic proof establishes Craig interpolation for the modal logics of many first-order definable frame classes.}


\con{The model theoretic proof does not apply to non-elementary logics such as $\mathsf{GL}$ and $\mathsf{Grz}$, nor to non-elementary language extensions such as the modal $\mu$-calculus.}

\con{This proof technique does not extend in any obvious way to prove uniform interpolation.}

\con{The model theoretic proof does not yield an effective method for constructing Craig interpolants.}

\section{Second proof (proof-theoretic)}

In this section we turn to a proof-theoretic proof of Craig interpolation for \K. The proof is called proof-theoretic because it makes use of a proof system for \K, in this case a sequent calculus, which is introduced below. 
Proof-theoretic proofs usually have nice benefits:\footnote{We do not mean to imply that these benefits only hold for proof-theoretic proofs, other methods of proof may have similar or other benefits.}
most of them, and this certainly holds for the proof in this section, are constructive in that they contain a method to explicitly construct an interpolant from the proof of the implication that is interpolated. Another benefit is that they easily generalize to proof-theoretic proofs of CIP for other logics with a similar proof system. Also, strengthenings of the interpolation property are usually easy to obtain from proof-theoretic proofs, we will see examples of this in Section~\ref{sec:harvest} and in \refchapter{chapter:prooftheory}. 

The proof system of choice in this section is a {\it sequent calculus} (\GthK), a proof system that derives sequents instead of formulas. {\it Sequents} are expressions of the form $(\Ga \seq \De)$, where $\Ga$ and $\De$ are finite multisets of modal formulas. Their intended interpretation is $I(\Ga \seq \De) \defn (\ben\Ga \imp \bof\De)$. 
Because of the choice for sequents rather than formulas, rules for many logical operators (connectives, modalities) can be expressed elegantly, and the sequent calculus \GthK\ in this section is an example of this phenomenon. To prove that \K\ has interpolation, the notion is first generalized to sequents in Section~\ref{subsec:partitions} under the name sequent interpolation. Then it is shown in Section~\ref{subsec:main} that the calculus \GthK\ has sequent interpolation which, as we will see, implies that \K\ has interpolation. This method of proof is called the {\it Maehara method}, named after its inventor. A proof for classical logic that uses that method can be found in \cite{takeuti87}.

We prove that \GthK\ has sequent interpolation by showning that it has rule interpolation, which roughly means the following. The proof that a derivable sequent has a sequent interpolant proceeds by induction on the depth of the derivation of the sequent: the induction hypothesis assumes that the premise(s) of the last inference have interpolants and then it is shown that so has the conclusion. Rule interpolation means that for each rule separately the interpolant for the conclusion can be obtained from those of the premises, independent of the rest of the proof, it can be obtained {\it locally}. The precise formulation can be found in Section~\ref{subsec:partitions}. Usually, in proofs of CIP that use sequent calculi, this locality of the definition of interpolants is not explicitly mentioned, although in most cases the local version is what is actually proven.

\subsection{Preliminaries}
 \label{sec:preliminaries}

\paragraph*{Sequent calculi}
We denote finite multisets of formulas by $\Ga,\Pi,\De,\Sig$. $\Ga \cup \Pi$, also denoted $\Ga,\Pi$, is the multiset that contains all formulas of $\Ga$ and $\Pi$ (as multisets). 
A {\it sequent} is an expression $(\Ga \seq \De)$, with the intended interpretation $I(\Ga \seq \De) \defn (\ben\Ga \imp \bof\De)$, where $\ben\varnothing \defn \top$ and $\bof\varnothing \defn \bot$. 
A {\it (sequent) rule}  
is a pair consisting of a sequence of sequents (the premises) and one sequent (the conclusion). It is an {\it (sequent) axiom} if the set of premises is empty. A {\it sequent calculus} \G\ consists of a set of axioms and rules. Thus a sequent calculus is like a usual proof system but about sequents instead of formulas. 

As in the introduction of this section, the signature of a multiset $\Ga$ consists of all propositional letters that occur in some formula in $\Ga$ and is denoted $\sgn(\Ga)$. For sequents, $\sgn(\Ga \seq \De) \defn \sgn(\Ga)\cup \sgn(\De)$. The {\it common language} of two sequents $S_1$  and $S_2$ is 
$\sgn(S_1) \cap \sgn (S_2)$. A positive (negative) occurence of a prositional letter in a formula is defined as usual. A prositional letter occurs {\it positively (negatively)} in a sequent $S$ whenever it occurs positively (negatively) in $I(S)$.

\paragraph*{The calculus \GthK}
 \label{sec:calculiu}
Calculus $\GthK$ consists of the calculus $\Gthp$ (Figure~\ref{fig:gth}) plus the modal rule $R_{\fnbx}$ (Figure~\ref{fig:rulek}). $\Gthp$ is the standard calculus without structural rules for $\CPC$ from \cite{troelstra&schwichtenberg96}, which is a restriction to the propositional language of the calculus \Gth\ that already occurs in \cite[Chapter XV]{kleene52} under that same name. 
For simplicity we leave $\dm$ out of the language and treat it as an abbreviation of $\neg\bx\neg$, so that the only modal rules we need are for the $\bx$-operator. When convenient, axioms are sometimes called zero-premise rules. 
\begin{figure}[t]
\begin{center}\small
\dotfill
 \\ \
 \\
 $\begin{array}{lll}
 \multicolumn{3}{l}{
 \AxiomC{} 
 \RightLabel{Ax$_{\text{id}}$} 
 \UnaryInfC{$\Ga,p \seq p,\De$}
 \DisplayProof \ \ \ \ 
 \AxiomC{} 
 \RightLabel{Ax$_{\bot}$} 
 \UnaryInfC{$\Ga,\bot\seq \De$}
 \DisplayProof  \ \ \ \ 
 \AxiomC{} 
 \RightLabel{Ax$_{\top}$} 
 \UnaryInfC{$\Ga\seq \top,\De$}
 \DisplayProof } \\
 \\ 
 \AxiomC{$\Ga,\phi \seq \De$}
 \LeftLabel{$\neg_r$}
 \UnaryInfC{$\Ga \seq \neg\phi,\De$}
 \DisplayProof & \hspace{.7cm} & 
 \AxiomC{$\Ga\seq \phi,\De$}
 \LeftLabel{$\neg_l$}
 \UnaryInfC{$\Ga,\neg\phi \seq \De$}
 \DisplayProof \\
 \\
 \AxiomC{$\Ga\seq \phi,\De$}
 \AxiomC{$\Ga\seq \psi,\De$}
 \LeftLabel{$\en_r$} 
 \BinaryInfC{$\Ga \seq \phi \en \psi,\De$}
 \DisplayProof & & 
 \AxiomC{$\Ga, \phi, \psi \seq \De$}
 \LeftLabel{$\en_l$} 
 \UnaryInfC{$\Ga, \phi\en \psi \seq \De$}
 \DisplayProof \\
 \\
 \AxiomC{$\Ga \seq \phi,\psi,\De$}
 \LeftLabel{$\of_r$}
 \UnaryInfC{$\Ga \seq \phi\of\psi,\De$}
 \DisplayProof & & 
 \AxiomC{$\Ga, \phi \seq \De $}
 \AxiomC{$\Ga, \psi \seq \De $}
 \LeftLabel{$\of_l$}
 \BinaryInfC{$\Ga, \phi \of \psi \seq \De $}
 \DisplayProof 
 \end{array}$ 
 \\ \ 
 \\
 \dotfill
\end{center}
\vspace{-.3cm}
\caption{Sequent calculus $\Gthp$}
 \label{fig:gth}
\end{figure}

\begin{figure}[t]
\begin{center}\small 
 $\AxiomC{$\Ga \seq \phi$}
 \LeftLabel{$R_{\fnbx}$} 
 \UnaryInfC{$\Pi,\bx\Ga \seq \bx\phi,\De$}
 \DisplayProof$ 
\end{center}
\vspace{-.3cm}
\caption{Sequent calculus $\GthK$: \Gthp\ plus the rule $R_{\fnbx}$.}
 \label{fig:rulek}
\end{figure}

A {\it derivation} in \GthK\ of a sequent $S$ is a finite tree labelled with sequents such that the root is labelled with $S$ and for any node with label $S'$ that is not a leaf there is a(n instance of a) rule in \GthK\  such that its conclusion is $S'$ and its premises are exactly the direct successors of $S'$ in the tree. The derivation is a {\it proof} once all leafs are (instances of) axioms. $S$ {\it has a proof} or {\it is provable} in \GthK, denoted $\af S$, if there is a proof in \GthK\ that ends with $S$. The {\it depth} of a proof is the length of its longest branch. Thus axioms are proofs of depth one.  

The {\it principal}, {\it active}, and {\it auxiliary formulas} of a rule are defined as usual for the rules in \Gthp: the displayed formula in the conclusion is {\it principal}, and the displayed formulas in the premises are the {\it active} formulas. The formulas in $\Ga$ and $\De$ are {\it auxiliary}. In axioms Ax$_{\bot}$ and Ax$_{\top}$ formulas $\bot$ and $\top$ are {\it principal}, respectively, and in axiom Ax$_{\rm id}$ both $p$ are {\it principal}.  
For $R_{\fnbx}$ the principal formula is $\bx\phi$ and all formulas in $\bx\Ga$ in the conclusion and all formulas in $\Ga$ as wel as $\phi$ in the premise are active. The auxiliary formulas are the formulas in $\Pi$ and $\De$.  

The following theorem establishes that proof system \GthK\ is sound and complete with respect to logic \K, a proof can be found in e.g.~\cite{poggiolesi11}.  

\begin{theorem} 
For any sequent $S$: $S$ is provable in \GthK\ if and only if $I(S)$ holds in \K. 
\end{theorem}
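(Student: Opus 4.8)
The plan is to prove the two directions separately, reading ``$I(S)$ holds in \K'' as ``$I(S)$ is valid in every Kripke frame'' (legitimate, since \K\ is sound and complete with respect to its Kripke semantics).

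\emph{Soundness} ($\af S$ implies $\models I(S)$). I would induct on the depth of a proof of $S$ in \GthK. For the three axioms the interpretation is a plain validity: e.g.\ $I(\Ga,p \seq p,\De)$ is $(\ben\Ga \en p)\imp(p \of \bof\De)$, true at every world of every model. For each propositional rule of \Gthp, validity of the interpretations of the premises propositionally entails validity of the interpretation of the conclusion. The only case with modal content is $R_{\fnbx}$: the induction hypothesis gives $\models \ben\Ga \imp \phi$, and we must show $\models (\ben\Pi \en \ben\bx\Ga)\imp(\bx\phi \of \bof\De)$. Take any pointed model $M,w$ with $M,w \models \ben\Pi \en \ben\bx\Ga$; then $M,w \models \bx\gam$ for each $\gam \in \Ga$, so every $R$-successor $w'$ of $w$ satisfies all of $\Ga$, hence $M,w' \models \ben\Ga$ and thus $M,w' \models \phi$ by the induction hypothesis; therefore $M,w \models \bx\phi$, and a fortiori $M,w \models \bx\phi \of \bof\De$.

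\emph{Completeness} ($\models I(S)$ implies $\af S$). I would simulate the Hilbert-style axiomatisation of \K\ inside \GthK. The ingredients: (i) every propositional tautology $\chi$ is provable as $(\seq\chi)$ --- this is completeness of \Gthp\ for \CPC; (ii) the axiom $\bx(\chi_1 \imp \chi_2)\imp(\bx\chi_1 \imp \bx\chi_2)$ is derivable in \GthK\ by applying $R_{\fnbx}$ to the (propositionally derivable) premise $\chi_1,\ \chi_1\imp\chi_2 \seq \chi_2$ and then applying the right rules $\neg_r$ and $\of_r$ for $\imp$; (iii) necessitation is admissible --- in fact it is literally the instance of $R_{\fnbx}$ with $\Ga$, $\Pi$ and $\De$ all empty; (iv) uniform substitution is admissible, by substituting throughout a derivation; (v) modus ponens is admissible, using admissibility of cut in \GthK\ together with invertibility of the propositional rules (from $\seq\chi$ and $\seq\chi\imp\chi'$ one inverts to get $\chi \seq \chi'$ and then cuts on $\chi$). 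Together (i)--(v) show that every theorem of \K\ is provable in \GthK\ as a sequent $(\seq\chi)$. Now if $\models I(S)$ with $S = (\Ga \seq \De)$, completeness of \K\ gives $(\ben\Ga \imp \bof\De) \in \K$, hence $\af (\seq \ben\Ga \imp \bof\De)$; inverting the propositional rules (and using admissibility of weakening, contraction and cut to dispose of the $\top$ and $\bot$ arising from empty $\Ga$ or $\De$) turns this into $\af(\Ga \seq \De)$, i.e.\ $\af S$.

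\emph{Main obstacle.} The crux is admissibility of cut in \GthK, needed for modus ponens. This is the standard Gentzen cut-elimination argument by a double induction on the complexity of the cut formula and on the heights of the two subderivations; the one genuinely modal case is a principal cut on a formula $\bx\chi$ introduced on both sides by $R_{\fnbx}$, which is permuted upward to a cut on $\chi$ of strictly smaller complexity after merging the contexts $\Pi$ and $\De$ appropriately. (The invertibility of \Gthp's rules and admissibility of weakening and contraction used above are the usual structural properties of such cut-free calculi.) An alternative route that bypasses cut elimination altogether is a proof-search/countermodel argument: saturate a candidate derivation of $S$ under backward application of the propositional rules and of $R_{\fnbx}$, with loop-checking to guarantee termination, and show that either the search yields a proof of $S$ or some saturated non-axiomatic branch directly describes a finite Kripke model falsifying $I(S)$.
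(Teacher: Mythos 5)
The paper does not actually prove this theorem itself --- it simply cites \cite{poggiolesi11} for it --- so there is no in-paper argument to compare against; what matters is whether your proof stands on its own, and it essentially does. The soundness direction is exactly right: the only non-propositional case is $R_{\fnbx}$, and your argument (every $R$-successor of $w$ satisfies all of $\Ga$, hence $\phi$, hence $M,w\models\bx\phi$) is the correct one; note that the validity of the conclusion does not even need the $\Pi$ and $\De$ hypotheses, since weakening is absorbed into the rule. For completeness, your plan of simulating the Hilbert axiomatisation of \K\ inside \GthK\ is a legitimate and standard route, and you correctly identify the one genuinely heavy ingredient, namely admissibility of cut (needed for modus ponens), together with the usual height-preserving admissibility of weakening and contraction and invertibility of the propositional rules in a G3-style calculus; the principal modal cut case you describe (permuting the cut on $\bx\chi$ up to a cut on $\chi$ between the premises of the two $R_{\fnbx}$ inferences and reapplying $R_{\fnbx}$) is the correct reduction. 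Two small points worth making explicit if you write this out in full: first, since $\imp$ is an abbreviation for $\neg\cdot\lor\cdot$ in this paper, your derivations of the K-axiom and of modus ponens should be phrased via the $\neg$ and $\of$ rules as you indicate; second, contraction admissibility for $\GthK$ requires a brief check for $R_{\fnbx}$ (contracting two copies of $\bx\gam$ in the conclusion corresponds to contracting two copies of $\gam$ in the premise), which goes through but should be stated. Your alternative proof-search/countermodel route is also sound and is in fact the more common way to prove completeness of cut-free modal sequent calculi directly, bypassing both cut elimination and the Hilbert system.
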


The following well-known and easy-to-prove fact about \GthK, which we need below, is its closure under weakening. For a proof, see \cite{troelstra&schwichtenberg96}. 

\begin{lemma}({\it Weakening lemma})\\
 \label{lem:weak}
If $\af \Ga\seq \De$, then $\af \Ga,\phi \seq \De$ and $\af \Ga \seq \phi,\De$.
\end{lemma}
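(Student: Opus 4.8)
The plan is to prove the Weakening Lemma by induction on the depth of the proof of $\Ga\seq\De$ in \GthK, establishing the two statements (left weakening, i.e.\ adding $\phi$ to the antecedent, and right weakening, i.e.\ adding $\phi$ to the succedent) \emph{simultaneously}. The structural fact that makes the argument go through without friction is that every rule of \GthK\ — both the propositional rules of \Gthp\ and the modal rule $R_{\fnbx}$ — is \emph{context-sharing}: it carries the auxiliary multisets (the side formulas $\Ga$, $\De$, $\Pi$) unchanged from the premises to the conclusion. Consequently a weakening formula can always be absorbed into such a context, either by re-applying the rule after invoking the induction hypothesis on the premises, or, in the case of $R_{\fnbx}$, directly.

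For the base case, suppose $\Ga\seq\De$ is an instance of an axiom. If it is Ax$_{\text{id}}$, say of the form $\Ga',p\seq p,\De'$, then both $\Ga',p,\phi\seq p,\De'$ and $\Ga',p\seq \phi,p,\De'$ are again instances of Ax$_{\text{id}}$; likewise Ax$_{\bot}$ and Ax$_{\top}$ remain axiom instances after enlarging their (arbitrary) side contexts. So the weakened sequents are axioms, hence provable.

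For the inductive step, we inspect the last rule of the derivation. For each propositional rule of \Gthp\ — $\neg_r$, $\neg_l$, $\en_r$, $\en_l$, $\of_r$, $\of_l$ — the auxiliary multisets $\Ga$ and $\De$ occur unchanged in all premises and in the conclusion, so to weaken the conclusion on the right by $\phi$ we apply the induction hypothesis (right weakening) to each premise to add $\phi$ to its succedent and then re-apply the same rule; left weakening is symmetric, using left weakening on the premises. (It is precisely here that handling both forms of weakening at once is needed: for $\neg_l$ and $\neg_r$, weakening the conclusion on one side reduces to weakening a premise on the \emph{other} side.) The modal rule $R_{\fnbx}$, whose conclusion is $\Pi,\bx\Ga\seq\bx\phi,\De$, is even easier: the auxiliary multisets $\Pi$ and $\De$ do not appear in the premise $\Ga\seq\phi$ at all, so from the very same premise, re-applying $R_{\fnbx}$ with $\psi$ added to $\Pi$ yields $\Pi,\psi,\bx\Ga\seq\bx\phi,\De$, and with $\psi$ added to $\De$ yields $\Pi,\bx\Ga\seq\psi,\bx\phi,\De$; no appeal to the induction hypothesis is required.

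There is no genuine obstacle here — the lemma is routine, which is exactly why the calculus \GthK\ is set up with context-sharing rules. The only points requiring attention are the bookkeeping for multiset unions and, as noted, phrasing the induction so that left and right weakening are proved together so the $\neg$-rules can be discharged.
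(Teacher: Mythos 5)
Your proof is correct. The paper itself does not prove this lemma---it calls it ``well-known and easy-to-prove'' and defers to Troelstra and Schwichtenberg---so the only sensible comparison is with the standard argument, which is exactly what you give: induction on the depth of the derivation, with axioms absorbing the extra formula into their arbitrary side contexts, the context-sharing propositional rules commuting with weakening via the induction hypothesis, and $R_{\fnbx}$ absorbing the weakening formula directly into $\Pi$ or $\De$ with no appeal to the induction hypothesis at all. One small inaccuracy worth flagging: your parenthetical claim that for $\neg_l$ and $\neg_r$ weakening the conclusion on one side reduces to weakening the premise on the \emph{other} side is not right. The weakening formula lands in the side context $\Ga$ or $\De$, and those contexts appear on the \emph{same} side in premise and conclusion; it is only the principal and active formulas that cross over. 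So left weakening of a conclusion always reduces to left weakening of the premises, and likewise on the right; proving both directions simultaneously is natural and harmless, but it is not forced by the negation rules. This slip does not affect the validity of your argument.
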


\subsection{Sequent interpolation}
 \label{sec:seqip}
As explained at the beginning of this section, there is a generalization of Craig interpolation to sequents and sequent calculi such that whenever a logic has a sequent calculus that satisfies this sequent interpolation, the logic has Craig interpolation. That \K\ has interpolation is shown by proving that \GthK\ has sequent interpolant, as is done in Section~\ref{subsec:main}.

\subsection{Split sequents}
 \label{subsec:partitions}
For the generalization of interpolation to sequents, 
we need the notion of split: a {\it split} of a sequent $\Ga \seq \De$ is an expression $\Ga_1;\Ga_2 \seq \De_1;\De_2$ (called a {\it split sequent}) such that $\Ga=\Ga_1\cup\Ga_2$ and $\De=\De_1\cup\De_2$ and $\Ga_1;\Ga_2$ and $\De_1;\De_2$ are ordered pairs, so that $\Ga_1;\Ga_2$ is in general not equal to $\Ga_2;\Ga_1$. The formulas in $\Ga_1,\De_1$ belong to the {\it left split} and the formulas in $\Ga_2,\De_2$ belong to the {\it right split}. Let $\parseq$ denote the set of all split sequents that consist of formulas in our modal language. We denote split sequents with $S^p$ or $S_i^p$. 

Using the terminology from \cite{gherardietal2020}, a formula $\varchi$ is a {\it split-interpolant} for split $\Ga_1;\Ga_2 \seq \De_1;\De_2$ if $\af \Ga_1 \seq \varchi,\De_1$ and $\af \Ga_2,\varchi \seq \De_2$ and $\varchi$ is in the common language of $\Ga_1\seq\De_1$ and $\Ga_2\seq\De_2$, i.e.\ $\sgn(\varchi) \subseteq \sgn(\Ga_1\seq\De_1)\cap\sgn(\Ga_2\seq\De_2)$. We say {\it interpolant} instead of {\it split-interpolant} when the context determines which interpolant we mean, which is most of the time. A sequent calculus has {\it sequent interpolation} if any split of any derivable sequent has an interpolant. 

\subsection{Rule interpolation}
The proof that \K\ has CIP follows from a stronger statement, Lemma~\ref{lem:ruleip}, that states that every rule in \GthK\ interpolates, which is defined as follows. Here axioms are considered to be zero-premise rules. Let $\form$ denote the set of all formulas in our modal language. A one-premise (two- and zero-premise) rule $R$ {\it interpolates} if there exists a function $\delta_R:\form \times \parseq\imp \form$ ($\delta_R:\form\times\form \times \parseq\imp \form$ and $\delta_R:\parseq\imp \form$ in the two-premise and zero-premise case resp.) such that for every instance of the rule with conclusion $S$ and premise(s) $S_1$ (and $S_2$), for every partition $S^p$ of $S$, there exist splits $S_1^p$ (and $S_2^p$) of the premises such that for any interpolant(s) $\varchi_1$ of $S_1^p$ (and $\varchi_2$ of $S_2^p$) formula $\delta_R(\varchi_1,S^p)$ \big($\delta_R(\varchi_1,\varchi_2,S^p)$ and $\delta_R(S^p)$ in the two-premise and zero-premise case resp.\big) is an interpolant of $S^p$. Such a function $\delta_R$ is an {\it interpolant function} for $R$. Note that the domain of the delta-function for axioms is $\parseq$, so that in this case the interpolant of the conclusion only depends on the split of the conclusion, as it should be since in the case of axioms there are no premises. 

\subsection{Rule interpolation for \GthK}
 \label{sec:ipfunction}
In Figures~\ref{fig:ruleipaxioms} and ~\ref{fig:ruleiprules} we define for every rule of \GthK\ a function $\delta_R$ that we prove to be an interpolation function of $R$ in Lemma~\ref{lem:ruleip}, in the following way. 
For the axioms we distinguish cases according to the principal formula(s) (in blue) being in the left or the right side of the split. Note that for axiom Ax$_{\rm id}$ there are four cases because it has two principal formulas. 
For a given split $S^p$ of axiom $R$, consider in Figure~\ref{fig:ruleipaxioms} that particular split of $R$ and let $\delta_R(S^p)$ be the formula (in red) on top of that sequent arrow. For example, 
\begin{center}\small
 $\delta_{\text{Ax}_{\text{id}}}(\Ga_1;\Ga_2,p \seq p,\De_1;\De_2) = \neg p.$
\end{center}
For the rules, Figure~\ref{fig:ruleiprules}, we distinguish cases according to the presence of the principal formula (in blue) in the left or the right side of the split. As one can see, given such a split of the conclusion, the split that is chosen for the premise(s) is in all cases the one in which the active formula(s) is (are) in the same part of the split as the principal formula, and all other formulas remain as they are. The formula(s) on top of the sequent arrow(s) in the premise(s) is (are) the assumed interpolant(s) $\varchi_1$ (and $\varchi_2$) of the split premises and $\delta_R(\varchi_1,S^p)$ \big($\delta_R(\varchi_1,\varchi_2,S^p)$\big) is defined to be the formula (in red) on top of the sequent arrow in the conclusion. For example, 
\begin{center}\small
 $\delta_{R_{\bx}}\big(\varchi,(\Pi_1,\bx\Ga_1;\Pi_2,\bx \Ga_2 \seq \De_1;\De_2,\bx\phi)\big) = \bx\varchi.$
\end{center}

\begin{figure}[t]
\begin{center}\small
 \dotfill
 \\ \ 
 \\
$\begin{array}{lll}
 \AxiomC{}
 \LeftLabel{\tiny{$LL$(Ax$_{\text{id}})$}}
 \UnaryInfC{$\boldsymbol{p},\Ga_1; \Ga_2 \stackrel{\raisebox{2pt}{\textbf{$\bot$}}}{\seq} \boldsymbol{p},\De_1;\De_2$}
 \DisplayProof & \hspace{.7cm} & 
 \AxiomC{}
 \LeftLabel{\tiny{$LR$(Ax$_{\text{id}})$}}
 \UnaryInfC{$\boldsymbol{p},\Ga_1; \Ga_2 \stackrel{\raisebox{2pt}{$p$}}{\seq} \De_1;\De_2,\boldsymbol{p}$}
 \DisplayProof \\
 \\
 \AxiomC{}
 \LeftLabel{\tiny{$RL$(Ax$_{\text{id}})$}}
 \UnaryInfC{$\Ga_1; \Ga_2,\boldsymbol{p} \stackrel{\raisebox{2pt}{$\neg p$}}{\seq} \boldsymbol{p},\De_1;\De_2$}
 \DisplayProof & \hspace{.7cm} & 
 \AxiomC{}
 \LeftLabel{\tiny{$RR$(Ax$_{\text{id}})$}}
 \UnaryInfC{$\Ga_1; \Ga_2,\boldsymbol{p} \stackrel{\raisebox{2pt}{$\top$}}{\seq} \De_1;\De_2,\boldsymbol{p},$}
 \DisplayProof \\
 \\
 \AxiomC{}
 \LeftLabel{\tiny{$L$(Ax$_{\bot})$}}
 \UnaryInfC{${\bf \bot},\Ga_1; \Ga_2 \stackrel{\raisebox{2pt}{$\bot$}}{\seq} \De_1;\De_2$}
 \DisplayProof & \hspace{.7cm} & 
 \AxiomC{}
 \LeftLabel{\tiny{$R$(Ax$_{\bot})$}}
 \UnaryInfC{$\Ga_1; \Ga_2,{\bf \bot} \stackrel{\raisebox{2pt}{$\top$}}{\seq} \De_1;\De_2$}
 \DisplayProof \\
 \\
 \AxiomC{}
 \LeftLabel{\tiny{$L$(Ax$_{\top})$}}
 \UnaryInfC{$\Ga_1; \Ga_2 \stackrel{\raisebox{2pt}{$\bot$}}{\seq} {\bf \top},\De_1;\De_2$}
 \DisplayProof & \hspace{.7cm} & 
 \AxiomC{}
 \LeftLabel{\tiny{$R$(Ax$_{\top})$}}
 \UnaryInfC{$\Ga_1; \Ga_2 \stackrel{\raisebox{2pt}{$\top$}}{\seq} \De_1;\De_2,{\bf \top}$}
 \DisplayProof 
 \end{array}$
 \\ \ 
 \\
 \dotfill
\end{center}
 \vspace{-.3cm}
 \caption{Rule interpolation for the axioms of \GthK}
   \label{fig:ruleipaxioms}
 \end{figure}

\begin{figure}[t]
\begin{center}\small
 \dotfill 
 \\ \ 
 \\
 $\begin{array}{@{}ll@{}}
 \AxiomC{$\Ga_1 ; \Ga_2 \stackrel{\raisebox{2pt}{$\varchi$}}{\seq} \boldsymbol{\phi},\De_1;\De_2$}
 \LeftLabel{{\tiny $L(\neg_l)$}}
 \UnaryInfC{$\boldsymbol{\neg\phi},\Ga_1; \Ga_2 \stackrel{\raisebox{2pt}{$\varchi$}}{\seq} \De_1;\De_2$}
 \DisplayProof 
 & 
 \AxiomC{$\Ga_1 ; \Ga_2\stackrel{\raisebox{2pt}{$\varchi$}}{\seq} \De_1;\De_2,\boldsymbol{\phi} $}
 \LeftLabel{{\tiny $R(\neg_l)$}}
 \UnaryInfC{$\Ga_1; \Ga_2,\boldsymbol{\neg\phi} \stackrel{\raisebox{2pt}{$\varchi$}}{\seq}\De_1;\De_2$}
 \DisplayProof  
 \\
 \\
 \AxiomC{$\boldsymbol{\phi} ,\Ga_1 ; \Ga_2 \stackrel{\raisebox{2pt}{$\varchi$}}{\seq} \De_1;\De_2$}
 \LeftLabel{{\tiny $L(\neg_r)$}}
 \UnaryInfC{$\Ga_1; \Ga_2 \stackrel{\raisebox{2pt}{$\varchi$}}{\seq} \boldsymbol{\neg\phi},\De_1;\De_2$}
 \DisplayProof 
 &  
 \AxiomC{$\Ga_1 ; \Ga_2,\boldsymbol{\phi} \stackrel{\raisebox{2pt}{$\varchi$}}{\seq}\De_1;\De_2$}
 \LeftLabel{{\tiny $R(\neg_r)$}}
 \UnaryInfC{$\Ga_1; \Ga_2 \stackrel{\raisebox{2pt}{$\varchi$}}{\seq}\De_1;\De_2,\boldsymbol{\neg\phi}$}
 \DisplayProof 
 \\
 \\
 \AxiomC{$\boldsymbol{\phi,\psi}, \Ga_1 ; \Ga_2 \stackrel{\raisebox{2pt}{$\varchi$}}{\seq}\De_1;\De_2$}
 \LeftLabel{{\tiny $L(\en_l)$}}
 \UnaryInfC{$\boldsymbol{\phi\en\psi},\Ga_1; \Ga_2 \stackrel{\raisebox{2pt}{$\varchi$}}{\seq}\De_1;\De_2$}
 \DisplayProof 
 & 
 \AxiomC{$\Ga_1 ; \Ga_2,\boldsymbol{\phi,\psi}\stackrel{\raisebox{2pt}{$\varchi$}}{\seq}\De_1;\De_2$}
 \LeftLabel{{\tiny $R(\en_l)$}}
 \UnaryInfC{$\Ga_1; \Ga_2,\boldsymbol{\phi\en\psi} \stackrel{\raisebox{2pt}{$\varchi$}}{\seq}\De_1;\De_2$}
 \DisplayProof 
 \\
 \\
 \AxiomC{$\Ga_1 ; \Ga_2 \stackrel{\raisebox{2pt}{$\varchi$}}{\seq}\boldsymbol{\phi,\psi},\De_1;\De_2$}
 \LeftLabel{{\tiny $L(\of_r)$}}
 \UnaryInfC{$\Ga_1; \Ga_2 \stackrel{\raisebox{2pt}{$\varchi$}}{\seq}\boldsymbol{\phi\of\psi},\De_1;\De_2$}
 \DisplayProof 
 & 
 \AxiomC{$\Ga_1 ; \Ga_2 \stackrel{\raisebox{2pt}{$\varchi$}}{\seq},\boldsymbol{\phi,\psi},\De_1;\De_2$}
 \LeftLabel{{\tiny $R(\of_r)$}}
 \UnaryInfC{$\Ga_1; \Ga_2 \stackrel{\raisebox{2pt}{$\varchi$}}{\seq}\De_1;\De_2,\boldsymbol{\phi\of\psi}$}
 \DisplayProof 
 \\
 \\
 \AxiomC{$\boldsymbol{\phi},\Ga_1 ; \Ga_2 \stackrel{\raisebox{2pt}{$\varchi_1$}}{\seq}\De_1;\De_2$}
 \AxiomC{$\boldsymbol{\psi},\Ga_1 ; \Ga_2 \stackrel{\raisebox{2pt}{$\varchi_2$}}{\seq}\De_1;\De_2$}
 \LeftLabel{{\tiny $L(\of_l)$}}
 \insertBetweenHyps{\hskip -1pt}
 \BinaryInfC{$\boldsymbol{\phi \of \psi},\Ga_1; \Ga_2 \stackrel{\raisebox{2pt}{$\varchi_1\of\varchi_2$}}{\seq}\De_1;\De_2$}
 \DisplayProof 
 & 
 \AxiomC{$\Ga_1 ; \Ga_2,\boldsymbol{\phi}\stackrel{\raisebox{2pt}{$\varchi_1$}}{\seq}\De_1;\De_2$}
 \AxiomC{$\Ga_1 ; \Ga_2,\boldsymbol{\psi} \stackrel{\raisebox{2pt}{$\varchi_2$}}{\seq}\De_1;\De_2$}
 \LeftLabel{{\tiny $R(\of_l)$}}
 \insertBetweenHyps{\hskip -1pt}
 \BinaryInfC{$\Ga_1; \Ga_2,\boldsymbol{\phi\of\psi} \stackrel{\raisebox{2pt}{$\varchi_1 \en\varchi_2$}}{\seq}\De_1;\De_2$}
 \DisplayProof 
 \\
 \\
 \AxiomC{$\Ga_1 ; \Ga_2  \stackrel{\raisebox{2pt}{$\varchi_1$}}{\seq}\boldsymbol{\phi},\De_1;\De_2$}
 \AxiomC{$\Ga_1 ; \Ga_2 \stackrel{\raisebox{2pt}{$\varchi_2$}}{\seq}\boldsymbol{\psi},\De_1;\De_2$}
 \LeftLabel{{\tiny $L(\en_r)$}}
 \insertBetweenHyps{\hskip -1pt}
 \BinaryInfC{$\Ga_1; \Ga_2 \stackrel{\raisebox{2pt}{$\varchi_1 \of \varchi_2$}}{\seq}\boldsymbol{\phi \en\psi},\De_1;\De_2$}
 \DisplayProof 
 & 
 \AxiomC{$\Ga_1 ; \Ga_2 \stackrel{\raisebox{2pt}{$\varchi_1$}}{\seq}\De_1;\De_2,\boldsymbol{\phi}$}
 \AxiomC{$\Ga_1 ; \Ga_2 \stackrel{\raisebox{2pt}{$\varchi_2$}}{\seq}\De_1;\De_2,\boldsymbol{\psi}$}
 \LeftLabel{{\tiny $R(\en_r)$}}
 \insertBetweenHyps{\hskip -1pt}
 \BinaryInfC{$\Ga_1; \Ga_2\stackrel{\raisebox{2pt}{$\varchi_1 \en \varchi_2$}}{\seq}\De_1;\De_2,\boldsymbol{\phi \en \psi}$}
 \DisplayProof 
 \\
 \\
 \AxiomC{$\Ga_1;\Ga_2 \stackrel{\raisebox{2pt}{$\varchi$}}{\seq}\boldsymbol{\phi} ;\ $}
 \LeftLabel{{\tiny $L(R_{\bx})$}}
 \UnaryInfC{$\Pi_1,\bx\Ga_1;\Pi_2,\bx\Ga_2 \stackrel{\raisebox{2pt}{$\neg\bx\neg\varchi$}}{\seq}\boldsymbol{\bx\phi},\De_1;\De_2$}
 \DisplayProof 
 & 
 \AxiomC{$\Ga_1;\Ga_2 \stackrel{\raisebox{2pt}{$\varchi$}}{\seq} \ ;\boldsymbol{\phi} $}
 \LeftLabel{{\tiny $R(R_{\bx})$}}
 \UnaryInfC{$\Pi_1,\bx\Ga_1;\Pi_2,\bx\Ga_2 \stackrel{\raisebox{2pt}{$\bx\varchi$}}{\seq} \De_1; \De_2,\boldsymbol{\bx\phi}$}
 \DisplayProof
\end{array}$ 
\\ \ 
\\
\dotfill 
\end{center}
 \vspace{-.3cm}
\caption{Rule interpolation for the rules of \GthK}
 \label{fig:ruleiprules}
\end{figure}

Note that the definition of the interpolant function in Figures~\ref{fig:ruleipaxioms} and ~\ref{fig:ruleiprules} is schematic in the sense that for the definition of $\delta_R$ we do not have to distinguish different instances of rule $R$, we just distinguish according to the location of the principal formula(s) in the split of the conclusion.

\subsection{Main theorem}
 \label{subsec:main}
 
\begin{lemma} ({\it Rule interpolation})
 \label{lem:ruleip}\\
Every rule $R$ in \GthK\ interpolates, with interpolant function $\delta_R$. 
\end{lemma}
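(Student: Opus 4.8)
\emph{Overview.} The plan is to prove the lemma by a direct case analysis over the axioms and rules of \GthK, checking for each of them that the function $\delta_R$ displayed in Figures~\ref{fig:ruleipaxioms} and~\ref{fig:ruleiprules} meets the definition of an interpolant function. Fix an instance of a rule $R$ with conclusion $S$ and premise(s), and fix a split $S^p$ of $S$. The figures already prescribe which split(s) $S_1^p$ (and $S_2^p$) to choose for the premise(s): the active formulas go to the same side of the split as the principal formula, and every auxiliary formula keeps its side; the figures also fix the value of $\delta_R$. So for each case it only remains to verify, using as hypotheses the three defining properties of $\varchi_1$ (and $\varchi_2$) as an interpolant of the corresponding premise split, that (a)~$\af \Ga_1 \seq \delta_R(\ldots),\De_1$, (b)~$\af \Ga_2,\delta_R(\ldots) \seq \De_2$, and (c)~$\sgn(\delta_R(\ldots)) \subseteq \sgn(\Ga_1\seq\De_1)\cap\sgn(\Ga_2\seq\De_2)$. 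The only proof-theoretic ingredients needed are the rule $R$ itself, the Weakening Lemma (Lemma~\ref{lem:weak}), and a handful of applications of the propositional rules $\neg_l,\neg_r,\of_l,\of_r,\en_l,\en_r$ to move formulas across the sequent arrow. Since the two one-sided signatures only grow when one passes from a premise split to the conclusion split, (c) reduces in each case to inspecting $\sgn(\delta_R(\ldots))$.

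\emph{Axioms and propositional rules.} For the axioms this is finite bookkeeping: in each of the split cases $\delta_R(S^p)$ is one of $\top,\bot,p,\neg p$, and the two required sequents are instances of an axiom of \GthK, sometimes after one application of $\neg_l$ or $\neg_r$. For instance, for the case of $\mathrm{Ax}_{\mathrm{id}}$ with $\delta=\neg p$ (antecedent $p$ in the right split, succedent $p$ in the left split), one derives $\af\Ga_1\seq\neg p,p,\De_1$ by $\neg_r$ followed by $\mathrm{Ax}_{\mathrm{id}}$, and $\af\Ga_2,p,\neg p\seq\De_2$ by $\neg_l$ followed by $\mathrm{Ax}_{\mathrm{id}}$, while $\sgn(\neg p)=\{p\}$ lies in both one-sided signatures since $p$ occurs on each side. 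For the one-premise propositional rules the interpolant is passed through unchanged, and the two conclusion-sequents follow from the two premise-sequents by one application of $R$ on the appropriate side (plus a weakening step where the figure inserts an extra formula). For the branching rules $\of_l,\en_r$ the two premise-interpolants are combined with $\of$ or $\en$; here one applies the branching rule on one side and weakening together with the dual one-premise rule on the other. For example, in $L(\of_l)$ one obtains $\af\Ga_1,\phi\of\psi\seq\varchi_1\of\varchi_2,\De_1$ from $\af\Ga_1,\phi\seq\varchi_1,\De_1$ and $\af\Ga_1,\psi\seq\varchi_2,\De_1$ by weakening, $\of_r$ in each branch, and a final $\of_l$, whereas $\af\Ga_2,\varchi_1\of\varchi_2\seq\De_2$ comes directly from $\af\Ga_2,\varchi_1\seq\De_2$ and $\af\Ga_2,\varchi_2\seq\De_2$ by $\of_l$; for (c), each $\sgn(\varchi_i)$ lies in the common language of the $i$-th premise split, which is contained in the common language of the conclusion split.

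\emph{The modal rule.} The one case with genuine content is the modal rule $R_{\fnbx}$. Take $L(R_{\bx})$: here the conclusion split is $\Pi_1,\bx\Ga_1;\Pi_2,\bx\Ga_2\seq\bx\phi,\De_1;\De_2$ with $\bx\phi$ in the left split, the chosen premise split is $\Ga_1;\Ga_2\seq\phi;$ with empty right succedent (a split of $\bx\Ga$ corresponds uniquely, via the bijection $\gam\mapsto\bx\gam$, to a split of $\Ga$, so this is well defined), and $\delta=\neg\bx\neg\varchi$. From the hypotheses $\af\Ga_1\seq\varchi,\phi$ and $\af\Ga_2,\varchi\seq$ one argues as follows. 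To prove $\af\Pi_1,\bx\Ga_1\seq\neg\bx\neg\varchi,\bx\phi,\De_1$, apply $\neg_r$ to reduce the goal to $\af\Pi_1,\bx\Ga_1,\bx\neg\varchi\seq\bx\phi,\De_1$, then apply $R_{\bx}$ to reduce it to $\af\Ga_1,\neg\varchi\seq\phi$, then apply $\neg_l$ to reduce it to the hypothesis $\af\Ga_1\seq\varchi,\phi$. To prove $\af\Pi_2,\bx\Ga_2,\neg\bx\neg\varchi\seq\De_2$, apply $\neg_l$ to reduce to $\af\Pi_2,\bx\Ga_2\seq\bx\neg\varchi,\De_2$, then $R_{\bx}$ to reduce to $\af\Ga_2\seq\neg\varchi$, then $\neg_r$ to reduce to the hypothesis $\af\Ga_2,\varchi\seq$. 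The symmetric case $R(R_{\bx})$, with $\delta=\bx\varchi$, is similar and slightly simpler: $\af\Pi_1,\bx\Ga_1\seq\bx\varchi,\De_1$ is a single application of $R_{\bx}$ to $\af\Ga_1\seq\varchi$, and $\af\Pi_2,\bx\Ga_2,\bx\varchi\seq\De_2,\bx\phi$ is a single application of $R_{\bx}$ to $\af\Ga_2,\varchi\seq\phi$ after regrouping $\bx\Ga_2,\bx\varchi$ as $\bx(\Ga_2,\varchi)$. For (c), in both cases $\sgn(\delta)=\sgn(\varchi)$, which by hypothesis lies in the common language of the premise split; since $\sgn(\bx\Ga_i)=\sgn(\Ga_i)$ and $\sgn(\bx\phi)=\sgn(\phi)$, that common language is contained in the common language of the conclusion split.

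\emph{Main obstacle.} The only step I expect to require care is the modal rule -- not because any single manipulation is deep, but because $R_{\fnbx}$ is rigid: it insists on a single formula $\bx\phi$ in the succedent of the premise and on the whole left-hand side of the conclusion having the shape $\Pi,\bx\Ga$. Routing the interpolant $\varchi$ through it therefore forces the little $\neg_l/\neg_r$ detour that turns $\neg\bx\neg\varchi$ (resp.\ $\bx\varchi$) into a goal the modal rule can discharge, and then unwinds it back to the hypothesis on $\varchi$. Everything else is routine; the one point to keep in mind is that $\delta_R$ need only be defined, and need only behave correctly, on the inputs that actually occur (splits of conclusions of $R$, paired with genuine interpolants of the associated premise splits), so nothing beyond the displayed cases has to be checked.
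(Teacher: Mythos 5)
Your proposal is correct and follows essentially the same route as the paper: a case-by-case verification that each $\delta_R$ from Figures~\ref{fig:ruleipaxioms} and~\ref{fig:ruleiprules} satisfies the interpolant-function conditions, using only the rule itself, the Weakening Lemma, and the $\neg_l/\neg_r$ detour for the $L(R_{\bx})$ case of the modal rule (which is exactly how the paper handles the derivation of $\neg\bx\neg\varchi$). The only difference is which representative cases are worked out in full ($\of_l$ in your write-up versus $\en_r$ in the paper), which is immaterial.
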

\begin{proof}
Recall that we consider axioms to be zero-premise rules, so the statement in the lemma covers axioms as well. 
In Figures~\ref{fig:ruleipaxioms} and ~\ref{fig:ruleiprules} the function $\delta_R$ is defined for each rule $R$ in \GthK, Section~\ref{sec:ipfunction} explains how the figures have to be interpreted. We prove the lemma by showing that for any rule $R$, $\delta_R$ is an interpolant function. We treat rules Ax$_{\rm id}$, $\neg_l$, $\en_r$ and $R_{\fnbx}$, the other rules can be treated in a similar way.

Ax$_{\rm id}$: We consider two cases: $RL(\text{Ax}_{\rm id})$ and $RR(\text{Ax}_{\rm id})$, the other two are analogous. In the first case, $S^p = \Ga_1;\Ga_2,p \seq p,\De_1;\De_2$ and $\delta_{{\rm Ax}_{\rm id}}(S^p) =\neg p$. And indeed, $\neg p$ is an inpolant of $S^p$, as $\af \Ga_1 \seq p,\neg p,\De_1$ and $\af\Ga_2,p,\neg p \seq \De_2$. In the second case, $S^p = \Ga_1;\Ga_2,p \seq \De_1;\De_2,p$ and $\delta_{{\rm Ax}_{\rm id}}(S^p) =\top$. Since $\af \Ga_1 \seq \top,\De_1$ and $\af\Ga_2,p \seq p,\De_2$, $\top$ is indeed an interpolant for $S^p$.

$\neg_l$: Consider instance 
\begin{center}
 $\AxiomC{$\Ga\seq \phi,\De$}
 \LeftLabel{$\neg_l$}
 \UnaryInfC{$\Ga,\neg\phi \seq \De$}
 \DisplayProof$
\end{center}
and a split $S^p=(\Ga_1;\Ga_2 \seq \De_1;\De_2)$ of the conclusion. We consider the case that $\neg\phi$ belongs to $\Ga_1$, the case that it belongs to $\Ga_2$ is analogous. Thus $\Ga_1=\Pi,\neg\phi$ and the split of the premise is, by definition, $\Pi;\Ga_2 \seq \phi,\De_1;\De_2$. Suppose $\varchi$ is an interpolant of the split premise, then $\delta_{L\neg}(\varchi,S^p)=\varchi$. By assumption,  
$\af \Pi \seq \varchi, \phi,\De_1$ and $\af \Ga_2,\varchi \seq \De_2$. Clearly, the former implies $\af \Pi,\neg\phi \seq \varchi, \De_1$, which means $\af \Ga_1 \seq \varchi, \De_1$. Thus $\delta_{L\neg}(\varchi,S^p)$ is indeed an interpolant of $S^p$. 

$\en_r$: Consider instance 
\begin{center}
 $\AxiomC{$\Ga\seq \phi,\De$}
 \AxiomC{$\Ga\seq \psi,\De$}
 \LeftLabel{$\en_r$}
 \BinaryInfC{$\Ga\seq \phi \en\psi,\De$}
 \DisplayProof$ 
\end{center}
and a split $S^p=(\Ga_1;\Ga_2 \seq \De_1;\De_2)$ of the conclusion. 
First, we treat the case that the principal formula $\phi\en\psi$ belongs to $\De_1$, which means that $\De_1 = \phi \en \psi, \Sig$ for some multiset $\Sig$.
Thus the inherited splits of the left and right premise are $\Ga_1;\Ga_2 \seq \phi,\Sig;\De_2$ and $\Ga_1;\Ga_2 \seq \psi,\Sig;\De_2$. Suppose that $\varchi_1$ and $\varchi_2$ are interpolants of the respective splits: 
\begin{center}
 $\af \Ga_1 \seq \varchi_1,\phi,\Sig \ \ \ \ \af \Ga_2,\varchi_1 \seq \De_2 \ \ \ \ \ 
 \af \Ga_1 \seq \varchi_2,\psi,\Sig \ \ \ \ \af \Ga_2,\varchi_2 \seq \De_2.$
\end{center}
Using the rules of the calculus and its closure under weakening (Lemma~\ref{lem:weak}) this clearly implies that  
\begin{center}
 $\af \Ga_1 \seq \varchi_1\of\varchi_2,\phi \en\psi,\Sig \ \ \ \ 
 \af \Ga_2,\varchi_1\of\varchi_2 \seq \De_2,$ 
\end{center}
proving that $\delta_{(\en_r)}(\varchi_1,\varchi_2, S^p) = \varchi_1\of\varchi_2$ is an interpolant of $S^p$. 

Second, we treat the case that $\phi\en\psi$ belongs to $\De_2$, so that $\De_2 = \phi \en \psi, \Sig$ for some multiset $\Sig$. Thus the inherited splits of the left and right premise are $\Ga_1;\Ga_2 \seq \De_1;\phi,\Sig$ and $\Ga_1;\Ga_2 \seq \De_1;\psi,\Sig$. For any interpolants $\varchi_1$ and $\varchi_2$ of the respective splits we have 
\begin{center}
 $\af \Ga_1 \seq \varchi_1,\De_1 \ \ \ \ \af \Ga_2,\varchi_1 \seq \phi,\Sig \ \ \ \ \ 
 \af \Ga_1 \seq \varchi_2,\De_1 \ \ \ \ \af \Ga_2,\varchi_2 \seq \psi,\Sig$. 
\end{center}
Therefore, again using the rules of the calculus and its closure under weakening,  
\begin{center}
 $\af \Ga_1 \seq \varchi_1\en\varchi_2,\De_1 \ \ \ \ 
 \af \Ga_2,\varchi_1\en\varchi_2 \seq \phi\en\psi,\Sig$. 
\end{center}
This shows that $\delta_{(\en_r)}(\varchi_1,\varchi_2, S^p) = \varchi_1\en\varchi_2$ is an interpolant of $S^p$.

$R_{\fnbx}$: 
Consider instance 
\begin{center}
 $\AxiomC{$\Ga \seq \phi$}
 \LeftLabel{$R_{\fnbx}$} 
 \UnaryInfC{$\Pi,\bx\Ga \seq \bx\phi,\De$}
 \DisplayProof$ 
\end{center}
Let $S^p=(\Pi_1,\bx\Ga_1;\Pi_2,\bx\Ga_2 \seq \De_1;\De_2)$ be a split of the conclusion. There are two cases, depending on whether $\bx\phi$ belongs to $\De_1$ or $\De_2$.
The second case is straightforward: $\De_2 = \bx\phi,\Sig$ for some $\Sig$ and the split premise is $\Ga_1;\Ga_2 \seq \ ;\phi$. Thus for any interpolant $\varchi$ of the split premise, $\Ga_1 \seq \varchi$ and $\Ga_2,\varchi \seq \phi$  are derivable, and an application of $R_{\bx}$ to each of them shows that $\bx\varchi$ is an interpolant of $S^p$. 

We turn to the first case, where the split $S^p$ and the inherited split of the premise of $R_{\bx}$ have respective forms 
\begin{center}
 $\Pi_1,\bx\Ga_1;\Pi_2,\bx\Ga_2 \seq \bx\phi,\Sig ; \De_2 
 \ \ \ \  
 \Ga_1;\Ga_2 \seq \phi;\ .$
\end{center} 
For any interpolant $\varchi$ of the split premise we have  
$\af \Ga_1 \seq \varchi,\phi$ and $\af \Ga_2,\varchi \seq \ $. 
Hence $\af \Ga_1, \neg\varchi \seq \phi \ $ and $\af \Ga_2 \seq \neg\varchi $. An application of $R_{\bx}$ gives 
\begin{center}
 $\af \Pi_1,\bx\Ga_1,\bx\neg\varchi \seq \bx\phi,\Sig  
 \ \ \ \ 
 \af \Pi_2,\bx\Ga_2 \seq \bx\neg\varchi, \De_2$.
\end{center}
This implies $\af\Pi_1,\bx\Ga_1\seq \neg\bx\neg\varchi,\bx\phi,\Sig$ and $\af\Pi_2,\bx\Ga_2,\neg\bx\neg\varchi \seq \De_2$, as desired. 
\end{proof}

\begin{theorem} ({\it Sequent interpolation}) 
 \label{thm:seqip}\\
\GthK\ has sequent interpolation: 
If $\af S$, then any split of $S$ has a split-interpolant.
\end{theorem}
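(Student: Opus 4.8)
The plan is a routine induction on the depth $d$ of the \GthK-proof of $S$, feeding each inference step through Lemma~\ref{lem:ruleip}. The key observation is that every node of a \GthK-derivation is the conclusion of an instance of some rule $R$ — with axioms counted as zero-premise rules — so the base case $d=1$ and the inductive case $d>1$ are really the same case, differing only in the number of premises. The induction hypothesis should be formulated uniformly over all splits of $S$: ``if $\af S$ by a proof of depth ${<}\,d$, then \emph{every} split of $S$ has a split-interpolant.'' This universal quantification over splits is what makes the argument go through, since Lemma~\ref{lem:ruleip} will hand us back some particular split of the premise(s), and we need the hypothesis available for that split.

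For the step, suppose the proof of $S$ ends with an instance of a rule $R$ whose premises $S_1$ (and $S_2$, in the two-premise case) are provable by strictly shorter proofs; fix an arbitrary split $S^p$ of $S$. By Lemma~\ref{lem:ruleip}, $\delta_R$ is an interpolant function for $R$, so there exist splits $S_1^p$ (and $S_2^p$) of the premises such that, for \emph{any} split-interpolant $\varchi_1$ of $S_1^p$ (and $\varchi_2$ of $S_2^p$), the formula $\delta_R(\varchi_1,S^p)$ (respectively $\delta_R(\varchi_1,\varchi_2,S^p)$) is a split-interpolant of $S^p$. Applying the induction hypothesis to $S_1$ (and $S_2$) yields such interpolants $\varchi_1$ (and $\varchi_2$), and hence $S^p$ has a split-interpolant. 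In the base case $R$ is an axiom, there are no premises, and $\delta_R(S^p)$ is directly a split-interpolant of $S^p$ by the zero-premise clause of Lemma~\ref{lem:ruleip}; concretely, this is read off from Figure~\ref{fig:ruleipaxioms}.

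I do not expect a genuine obstacle: the mathematical content has been isolated into Lemma~\ref{lem:ruleip}, whose whole point is that interpolation can be performed \emph{locally}, rule by rule, so the global statement follows by pure bookkeeping. The only care needed is the uniform-over-splits phrasing of the hypothesis noted above. It is worth remarking that, because the functions $\delta_R$ do not depend on the surrounding proof, the same induction in fact produces a canonical split-interpolant for each \GthK-proof of $S$ and each split $S^p$, obtained by composing the $\delta_R$'s bottom-up along the proof tree; in particular the interpolant is computed effectively from the proof. Finally, specialising Theorem~\ref{thm:seqip} to the sequent $\phi\seq\psi$ with the split taking $\Ga_1=\phi$, $\De_2=\psi$ and $\Ga_2,\De_1$ empty, and invoking soundness and completeness of \GthK\ (so that $\models\phi\to\psi$ gives $\af\phi\seq\psi$), yields Theorem~\ref{thm:modal-cip} for \K.
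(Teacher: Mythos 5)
Your proof is correct and follows exactly the paper's own argument: an induction on the depth of the \GthK-proof, with the induction hypothesis quantified over all splits of the premise sequents, and each step discharged by the rule-interpolation property of $\delta_R$ from Lemma~\ref{lem:ruleip} (axioms handled as the zero-premise case). The additional remarks on effectivity and on deriving Craig interpolation for \K\ match what the paper records separately in Theorem~\ref{thm:kcip}.
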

\begin{proof}
Suppose $\af S$ and consider an arbitrary split $S^p=(\Ga_1;\Ga_2 \seq \De_1;\De_2)$ of $S$. We use induction to the depth $d$ of the proof of $S$ in \GthK\ to show that $S^p$ has a split-interpolant, using 
Lemma~\ref{lem:ruleip} which proves that for any rule $R$ in \GthK, $\delta_R$ is an interpolant function for $R$.

In case $S$ is an instance of an axiom $R$, since $\delta_R$ is an interpolant function for $R$, $\delta(S^p)$ is an interpolant for $S^p$, and we are done. For $d>1$ we consider the last rule of the proof, $R$, and the split of the premises according to Figure~\ref{fig:ruleiprules}, to which we apply the induction hypothesis. This gives interpolant(s) $\varchi$ ($\varchi_1, \varchi_2$) of the split premises. By the definition of interpolant functions, $\delta_{R}(\varchi,S^p)$ ($\delta_R(\varchi_1, \varchi_2,S^p)$) is an interpolant of $S^p$.
\end{proof}

\begin{theorem} ({\it Interpolation for \K }) 
 \label{thm:kcip}\\
The modal logic \K\ has the Craig interpolation property. 
\end{theorem}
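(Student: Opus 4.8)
The plan is to obtain Craig interpolation for \K\ as an immediate corollary of sequent interpolation for \GthK\ (Theorem~\ref{thm:seqip}), by reading an implication $\phi\imp\psi$ as a particular split of the one-formula sequent $\phi\seq\psi$.

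Concretely, suppose $\models\phi\imp\psi$. Since $I(\phi\seq\psi)$ is, up to the singleton-multiset conventions, just the formula $\phi\imp\psi$, the soundness and completeness theorem for \GthK\ yields $\af\phi\seq\psi$. I would then apply Theorem~\ref{thm:seqip} to the split $S^p$ of $\phi\seq\psi$ that places $\phi$ on the left and $\psi$ on the right, i.e.\ $S^p=(\phi;\ \seq\ ;\psi)$ with $\Ga_1=\phi$, $\Ga_2=\De_1=\varnothing$, and $\De_2=\psi$. By sequent interpolation this split has a split-interpolant $\varchi$, which by definition satisfies $\af\phi\seq\varchi$ and $\af\varchi\seq\psi$, and whose signature is contained in the common language of the two halves, namely $\sgn(\phi\seq\varnothing)\cap\sgn(\varnothing\seq\psi)=\sig(\phi)\cap\sig(\psi)$.

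It then remains only to translate the two provability statements back into validities, once more via soundness of \GthK: from $\af\phi\seq\varchi$ we get $\models I(\phi\seq\varchi)$, i.e.\ $\models\phi\imp\varchi$, and from $\af\varchi\seq\psi$ we get $\models\varchi\imp\psi$. Together with the signature containment established above, $\varchi$ is precisely a Craig interpolant for $\phi\imp\psi$ in the sense of Theorem~\ref{thm:modal-cip}, which completes the proof.

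There is essentially no remaining obstacle at this stage: all the real work has already been carried out in Lemma~\ref{lem:ruleip} (rule interpolation, verified rule by rule, with the modal rule $R_{\fnbx}$ the only genuinely interesting case) and in its propagation up a derivation by induction on proof depth in Theorem~\ref{thm:seqip}. The only points that require a little care here are the degenerate multiset conventions, so that $I(\phi\seq\psi)$ really does coincide with $\phi\imp\psi$ (and likewise for $I(\phi\seq\varchi)$ and $I(\varchi\seq\psi)$), and the small amount of bookkeeping needed to confirm that the common language of the chosen split of $\phi\seq\psi$ is exactly $\sig(\phi)\cap\sig(\psi)$ rather than a larger set.
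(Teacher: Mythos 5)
Your proposal is correct and follows exactly the paper's own argument: pass from validity of $\phi\imp\psi$ to $\af\phi\seq\psi$ via completeness, apply Theorem~\ref{thm:seqip} to the split $(\phi;\ \seq\ ;\psi)$, and read the resulting split-interpolant back as a Craig interpolant via soundness. The extra care you take with the multiset conventions and the signature bookkeeping is harmless elaboration of the same proof.
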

\begin{proof}
Suppose $\af_K \phi \imp \psi$. Then $\af_{\GthK}\phi\seq \psi$. Consider split 
$(\phi ; \ \seq \ ; \psi)$. By Theorem~\ref{thm:seqip}, calculus \GthK\ has sequent interpolation, thus there exists an interpolant $\varchi$ in the common language of $\phi$ and $\psi$ such that 
$\af_{\GthK}\phi\seq \varchi$ and $\af_{\GthK}\varchi \seq \psi$. Thus $\varchi$ is an interpolant for $\phi\imp\psi$ in \K. 
\end{proof}

\subsection{Harvest from sequent interpolation}
 \label{sec:harvest}
The constructive proof that \GthK\ has sequent interpolation, Theorem~\ref{thm:seqip}, allows us to infer further strengthenings of the theorem. On the basis of the proof it is, for example, easy to observe that any interpolant of a sequent $S$ does not contain a modal operator unless $S$ does.

A less trivial and more important consequence of the constructive proof is that it shows that \K\ has {\it Lyndon interpolation} by showing that \GthK\ has  {\it sequent Lyndon interpolation}. 
A logic has {\it Lyndon interpolation} if for any valid implication $\phi\imp\psi$ there exists an interpolant $\varchi$ such that any propositional letter that occurs positively (negatively) in $\varchi$ occurs positively (negatively) in both $\phi$ and $\psi$ (the definition of positive and negative occurrences in formulas and sequents can be found in Section~\ref{sec:preliminaries}). 
A calculus has {\it sequent Lyndon interpolation} if for any split $\Ga_1;\Ga_2 \seq \De_1;\De_2$ of any derivable sequent $S$, there is a sequent interpolant $\varchi$ such that any propositional letter that occurs positively (negatively) in $\varchi$ occurs positively (negatively) in both $\Ga_1\seq\De_1$ and $\Ga_2\seq\De_2$. 
Inspection of the proofs of Lemma~\ref{lem:ruleip} and Theorem~\ref{thm:seqip} shows that \GthK\ has this form of interpolation, which clearly implies that \K\ has Lyndon interpolation.  

\begin{corollary} ({\it Lyndon interpolation}) 
 \label{cor:seqlip}\\
\GthK\ has sequent Lyndon interpolation and \K\ has Lyndon interpolation. 
\end{corollary}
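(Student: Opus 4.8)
The plan is to strengthen the invariant carried through the second proof so that it also tracks \emph{polarities}, and then replay the two inductions unchanged. Call a split-interpolant $\varchi$ of a split sequent $\Ga_1;\Ga_2\seq\De_1;\De_2$ a \emph{Lyndon split-interpolant} if, in addition to the two provability conditions and the signature condition, every propositional letter occurring positively (negatively) in $\varchi$ occurs positively (negatively) in both $\Ga_1\seq\De_1$ and $\Ga_2\seq\De_2$, where polarities in a sequent are read off via $I(\cdot)$ as in Section~\ref{sec:preliminaries}. The heart of the matter is the Lyndon-refinement of Lemma~\ref{lem:ruleip}: \emph{for every rule $R$ of \GthK, the very interpolant function $\delta_R$ displayed in Figures~\ref{fig:ruleipaxioms} and~\ref{fig:ruleiprules} maps Lyndon split-interpolants of the inherited splits of the premise(s) to a Lyndon split-interpolant of the given split of the conclusion.} Granting this, the corollary is immediate: one re-runs the induction on proof depth from the proof of Theorem~\ref{thm:seqip} verbatim, reading ``Lyndon split-interpolant'' for ``split-interpolant'' throughout, obtaining sequent Lyndon interpolation for \GthK; and the passage to Lyndon interpolation for \K\ is then precisely the argument of Theorem~\ref{thm:kcip} applied to the split $(\phi ; \ \seq \ ; \psi)$, a Lyndon split-interpolant of which directly yields a Lyndon interpolant for $\phi\imp\psi$.

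So the whole content lies in inspecting the interpolant functions rule by rule, alongside the proof of Lemma~\ref{lem:ruleip}. For the axioms one checks the handful of cases directly: the interpolants $\top$ and $\bot$ contain no propositional letters and satisfy the Lyndon condition vacuously, while in the two $\mathrm{Ax}_{\mathrm{id}}$ cases producing $p$ or $\neg p$ the principal letter $p$ occurs, with the required polarity, in both the left and the right split of the axiom. For the propositional rules other than the negation rules, $\delta_R$ either relays $\varchi$ unchanged or forms $\varchi_1\en\varchi_2$ or $\varchi_1\of\varchi_2$; since $\en$ and $\of$ do not flip polarity, every letter occurring positively (negatively) in $\delta_R(\dots)$ does so in one of the $\varchi_i$, so it only remains to observe that each side of the inherited split of a premise carries the same letters with the same polarities as the corresponding side of the split of the conclusion (an active formula lies on the side of the split of its principal formula and contributes there the same polarities). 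The negation rules $\neg_l$ and $\neg_r$ relay $\varchi$ unchanged as well, and the same observation applies once one notes that moving a formula between the antecedent and the succedent underneath a $\neg$ leaves the polarity of each of its letters, as recorded by $I(\cdot)$, unchanged.

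The cases to watch are the two modal rules, since $L(R_{\bx})$ is the only point at which a negation is \emph{introduced} into the interpolant. For $R(R_{\bx})$ the interpolant is $\bx\varchi$ and $\bx$ preserves polarity, so the condition transfers at once. For $L(R_{\bx})$ the interpolant is $\neg\bx\neg\varchi$: here the two negations cancel — $\neg$ flips polarity and $\bx$ preserves it — so a letter occurring positively (negatively) in $\varchi$ occurs positively (negatively) in $\neg\bx\neg\varchi$, and one then checks, exactly as in the previous cases, that the two sides of the premise's inherited split carry their letters with the same polarities as the two sides of the conclusion's split. Thus the main (indeed only) obstacle is to confirm that the particular shapes of interpolant dictated purely by \emph{provability} considerations in the proof of Lemma~\ref{lem:ruleip} — above all the double-negated modal interpolant $\neg\bx\neg\varchi$ — are polarity-correct as well; once this has been verified rule by rule, Corollary~\ref{cor:seqlip} follows.
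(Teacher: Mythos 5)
Your proposal is correct and follows exactly the route the paper intends: the paper's own ``proof'' of this corollary consists of the single remark that inspection of the proofs of Lemma~\ref{lem:ruleip} and Theorem~\ref{thm:seqip} shows the interpolant functions $\delta_R$ already respect polarities, and you have simply carried out that inspection, correctly isolating the only nontrivial points (the polarity bookkeeping for the inherited splits, the axiom cases, and the cancellation of the two negations in the $\neg\bx\neg\varchi$ interpolant of $L(R_{\bx})$). Your write-up is in fact more detailed than the paper's.
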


We conclude this section by listing some further advantages and disadvantages of the proof-theoretic method to prove interpolation: 

\pro{
The proof-theoretic method can be used to establish Craig interpolation for many modal logics with a sequent calculus similar to calculus \GthK\ for \K.}
 
\pro{
The proof-theoretic proof is modular: to establish that a certain extension of the calculus
for \K\ has sequent interpolation, it suffices to establish that the additional rules interpolate.}

\pro{
The proof-theoretic method presented here provides a bound on the size of the interpolant of a sequent $S$ in the size of the proofs of $S$ in \GthK.}

\con{For many logics no sequent calculus is available that has all the necessary properties to apply the Maehara method. }

\section{Third proof (syntactic)}
\label{sec:nabla}

Recall from \refchapter{chapter:propositional} how Craig interpolation for propositional logic can be proved syntactically via 
\emph{disjunctive normal form} (DNF):
to construct an interpolant for a propositional implication $\phi\to\psi$, we first rewrite $\phi$
into disjunctive normal form (i.e., as a disjunction of consistent conjunctions of literals) and then we simply ``drop'' all (positive and negative) occurrences of proposition letters 
$p\in\sig(\psi)\setminus\sig(\phi)$ from this DNF formula to obtain our Craig interpolant. Since converting a propositional formula to DNF incurs at most an 
exponential blowup, this yields an interpolant (in fact, a uniform interpolant) of 
exponential size. In this section, we will describe an interpolation method for modal logic that can be viewed as a modal analogue of this technique. It was used in \cite{tenCate2006:definitorially} and can be 
summarized in one sentence as follows: \emph{to obtain an interpolant for $\phi\to\psi$, we bring $\phi$ in a suitable normal form, and then drop all occurrences of 
proposition letters  $p\in\sig(\phi)\setminus\sig(\psi)$ from $\phi$.} 

The precise normal form that we need is known as \emph{$\nabla$-normal form}~\cite{barwise1996:vicious,janin1995:automata}.
In order to define it, we must first introduce the $\nabla$ operator. It takes a finite set of formulas as argument, and its semantics is as follows:
$M,w\models\nabla\Phi$ holds if \emph{each $\phi\in\Phi$ is true in some successor and each successor satisfies some $\phi\in\Phi$}. In other words, $\nabla\Phi$ is equivalent to
$\bigwedge_{\phi\in\Phi}\Diamond\phi\land\Box\bigvee_{\phi\in\Phi}\phi$. Note that $\nabla\emptyset$ is equivalent
to $\Box\bot$.

A modal formula is in $\nabla$-normal form if it generated by the following grammar:
\[\phi ~~::=~~ \top ~\mid~ \bot ~\mid~ \pi ~\mid~ \nabla\Phi ~\mid~ \pi\land \nabla\Phi ~\mid~ (\phi\lor\phi)\]
where $\pi$ is a consistent conjunction of literals (in other words, $\pi$ is of the form  $(\neg)p_1 \land \cdots \land (\neg) p_n$, with $n> 0$, where $p_1, \ldots, p_n$ are distinct proposition letters),
and $\Phi$ is a finite set of formulas in $\nabla$-form.

\begin{example}
    The modal formula $\Diamond p\land \Box(p\lor q)$ can be written in $\nabla$-normal form as $\nabla\{p\}\lor\nabla\{p, q\}$.
Similarly, the modal formula $\Diamond p\land \Box(q\lor r)$ can be written in $\nabla$-normal form as $\nabla\{p\land q, q\}\lor\nabla\{p\land q, q,r\}\lor\nabla\{p\land r, r\}\lor\nabla\{p\land r, q, r\}$. Finally,
the modal formula $\Box(p_1\lor p_2)\land \Box(q_1\lor q_2)$ can be written in $\nabla$-normal form as \[\bigvee_{\Phi\subseteq \{p_i\land q_j\mid i,j\in\{1,2\}\}}\nabla\Phi.\]
\eoe
\end{example}

\begin{theorem}
Every modal formula  $\phi$ is equivalent to a formula in $\nabla$-normal 
form, which can be computed from $\phi$ in exponential time.
\end{theorem}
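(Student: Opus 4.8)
The plan is to prove this by induction on the structure of $\phi$, following the standard ``$\nabla$-calculus'' approach from \cite{janin1995:automata,barwise1996:vicious}. The key facts I will need are the distributive laws that govern how $\nabla$ interacts with Boolean connectives and with plain $\Diamond$ and $\Box$; once these are available, the induction is essentially a normalization procedure.

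\begin{proof}[Proof sketch]
We argue by induction on the structure of $\phi$, simultaneously showing that the resulting $\nabla$-normal form can be computed in exponential time. The base cases $\phi = \top$, $\phi=\bot$, $\phi = p$ and $\phi = \neg p$ are immediate ($p$ and $\neg p$ are already consistent conjunctions of literals $\pi$). For $\phi = \phi_1 \lor \phi_2$ we simply take the disjunction of the two normal forms obtained from the induction hypothesis. For $\phi = \neg\psi$ we first push the negation inward: applying De Morgan and the modal duality laws one can bring $\neg\psi$ into NNF, which has the same structure but with negations only in front of proposition letters, and then proceed with the remaining cases; alternatively, one works directly with formulas in NNF throughout the induction. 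The interesting cases are conjunction, $\Diamond$, and $\Box$.

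For $\phi = \phi_1 \land \phi_2$, by the induction hypothesis each $\phi_i$ is a disjunction of ``basic'' disjuncts of the form $\top$, $\bot$, $\pi$, $\nabla\Phi$, or $\pi \land \nabla\Phi$. Distributing $\land$ over $\lor$, it suffices to normalize the conjunction of two basic disjuncts. The only genuinely nontrivial subcase is $(\pi_1 \land \nabla\Phi_1) \land (\pi_2 \land \nabla\Phi_2)$, which reduces to handling $\pi_1 \land \pi_2$ (check consistency of the combined literal set; if inconsistent the disjunct becomes $\bot$) together with $\nabla\Phi_1 \land \nabla\Phi_2$. For the latter I will use the standard \emph{$\nabla$-conjunction law}: $\nabla\Phi_1 \land \nabla\Phi_2$ is equivalent to $\bigvee \nabla\{\,\alpha \land \beta \mid (\alpha,\beta)\in Z\,\}$, where the disjunction ranges over all relations $Z \subseteq \Phi_1 \times \Phi_2$ that are ``full'' in the sense that every element of $\Phi_1$ and every element of $\Phi_2$ appears in $Z$; then one normalizes each $\alpha\land\beta$ by the induction hypothesis. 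For $\phi = \Diamond\psi$ and $\phi = \Box\psi$, I normalize $\psi$ first, distribute through the disjunction, and then use that $\Diamond\psi \equiv \nabla\{\psi,\top\}$ (equivalently $\nabla\{\psi\} \lor \nabla\{\psi,\top\}$ depending on the exact grammar, but $\{\psi,\top\}$ works uniformly) and $\Box\psi \equiv \nabla\{\psi\} \lor \nabla\emptyset \equiv \nabla\{\psi\} \lor \Box\bot$, again normalizing the formulas inside the $\nabla$'s recursively. Collecting the recursive normal forms inside a single $\nabla$ is legitimate because the grammar allows $\Phi$ to be any finite set of formulas already in $\nabla$-form.

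The correctness of each step is a routine semantic verification from the definition $\nabla\Phi \equiv \bigwedge_{\phi\in\Phi}\Diamond\phi \land \Box\bigvee_{\phi\in\Phi}\phi$ and the Kripke semantics in Table~\ref{tab:semantics}. For the complexity bound, one observes that at each modal depth the $\nabla$-conjunction law produces a disjunction over subsets/relations whose number is at most exponential in the sizes of the arguments, and the recursion depth is bounded by the modal depth of $\phi$; a careful bookkeeping (or an appeal to the fact that the whole construction can be carried out level-by-level over the modal depth) shows the total blowup is singly exponential, hence computable in exponential time. The main obstacle is the conjunction case: getting the $\nabla$-conjunction distributive law right — in particular the ``fullness'' condition on the relation $Z$, which is exactly what makes $\nabla$ well-behaved under conjunction — and then ensuring that the recursive normalization of the conjuncts $\alpha\land\beta$ does not blow up the size more than singly exponentially.
\end{proof}
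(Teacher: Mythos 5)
Your argument establishes the \emph{existence} of an equivalent $\nabla$-normal form correctly: the equivalences $\Diamond\psi\equiv\nabla\{\psi,\top\}$ and $\Box\psi\equiv\nabla\{\psi\}\lor\nabla\emptyset$ are right, and the $\nabla$-conjunction law with the fullness condition on $Z$ is sound (your recursion on $\alpha\land\beta$ is well-founded by modal depth, though note it is not literally the structural induction hypothesis on $\phi$, since $\alpha\land\beta$ is not a subformula of $\phi$; you need to set up the recursion on modal depth explicitly). This is the classical Janin--Walukiewicz route and differs from the paper's proof precisely in the conjunction case, and that is where the genuine gap lies: the \emph{exponential-time} claim does not follow from your construction as described. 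When you normalize the conjuncts first and then combine via the binary $\nabla$-conjunction law, the argument sets multiply: conjoining $\nabla\Phi_1$ and $\nabla\Phi_2$ yields $\nabla$-sets of size up to $|\Phi_1|\cdot|\Phi_2|$, so iterating over $n$ conjuncts produces $\nabla$-sets of size up to $2^n$, and the disjunction at the next step ranges over all full relations between two such sets --- up to $2^{|\Phi_1|\cdot|\Phi_2|}$ of them, which is then doubly exponential. Neither ``careful bookkeeping'' nor ``level-by-level over the modal depth'' is enough to dismiss this; you would need an additional argument (e.g., identifying elements up to the set of original subformulas they are conjunctions of, and showing that the surviving $Z$'s can be enumerated within a single exponential), and you have not supplied one.

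The paper sidesteps this entirely by never invoking the $\nabla$-conjunction law on already-normalized formulas. Its conjunction case first flattens and distributes at the level of the \emph{original} subformulas, splitting a conjunction $\bigwedge\Phi$ into literals $\Phi_{lit}$, diamonds $\Phi_\Diamond$ and boxes $\Phi_\Box$, and then emits a \emph{single} $\nabla$ over the linear-size set $\{\nf(\psi\land\bigwedge\Phi_\Box)\mid\psi\in\Phi_\Diamond\}\cup\{\ldots\}$ --- no disjunction over full relations at all. The recursive calls are then on conjunctions of subsets of $\subf(\phi)$, of which there are at most $2^{|\phi|}$, with modal depth decreasing, which is what yields the $2^{O(|\phi|^2)}$ size and exponential-time bound. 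If you want to keep your route, you should either switch to this $n$-ary decomposition for the conjunction case, or prove a lemma bounding the cumulative blowup of the iterated binary law; as written, the complexity half of the theorem is not established.
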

\begin{proof}
    We will use a normalization procedure that was given in~\cite{tenCate2006:definitorially}.
    Let $\phi$ be any modal formula. We will assume that $\phi$ is in
    negation normal from. For convenience in what follows we will treat
    conjunction as an operator that can take any number of arguments.
    \[\begin{array}{llll}
    \nf(\phi) &=& \phi & \text{for $\phi$ of the form $(\neg)p, \top, \bot$}\\
    \nf(\Diamond\phi) &=& \nabla\{\nf(\phi),\top\} \\
    \nf(\Box\phi) &=& \nabla\{\nf(\phi)\}\lor \nabla\emptyset \\
    \nf(\phi\lor\psi) &=& \nf(\phi)\lor \nf(\psi) \\
    \end{array}\]
    The only difficult case is for conjunction, i.e., if
    $\phi$ is of the form $\bigwedge\Phi$. If $\Phi$ contains a formula of the form $\top, \bot$ or a 
    disjunction or conjunction, then we can apply the following simplification rules:
    \[\begin{array}{llll}
    \nf(\bigwedge(\Phi\cup\{\top\})) &=& \nf(\bigwedge\Phi) \\
    \nf(\bigwedge(\Phi\cup\{\bot\})) &=& \bot \\
    \nf(\bigwedge(\Phi\cup\{\phi\lor\psi\})) &=& \nf(\bigwedge(\Phi\cup\{\phi\}))\lor \nf(\bigwedge(\Phi\cup\{\psi\})) \\
    \nf(\bigwedge(\Phi\cup\{\bigwedge\Psi\})) &=& \nf(\bigwedge(\Phi\cup\Psi))     \end{array}\]

    This leaves us with the case where
    $\Phi$ consists of literals and formulas whose main 
    connective is a $\Diamond$ or $\Box$. 
    Let $\Phi_{lit}$ be the 
    set of literals in $\Phi$, let $\Phi_{\Diamond}$ be the set 
    of formulas $\psi$ for which $\Diamond\psi\in\Phi$ and let 
    $\Phi_\Box$ be the set of formulas $\psi$ for which $\Box\psi\in \Phi$.
    We distinguish three cases:
    \begin{enumerate}
        \item

    If $\Phi_{lit}$ contains a proposition letter and its negation,
    then clearly $\bigwedge\Phi$ is inconsistent, and we let $\nf(\bigwedge\Phi)=\bot$. 

    \item Otherwise,     if $\Phi_{\Diamond}=\emptyset$ we define
    $\nf(\bigwedge\Phi)$ as
    $\bigwedge\Phi_{lit}\land (\nabla\{\nf(\bigwedge\Phi_\Box)\}\lor \nabla\emptyset)$

    \item Otherwise, we define $\nf(\bigwedge\Phi)$
    as 
    $\bigwedge\Phi_{lit}\land \nabla(\{\nf(\psi\land\bigwedge\Phi_{\Box})\mid \psi\in \Phi_\Diamond\}\cup\{\bigwedge\Phi_{\Box}\})$
    \end{enumerate}    
    It can be verified that $\nf(\phi)$ is in $\nabla$-normal form, $\nf(\phi)$
    is equivalent to $\phi$,  
    $|\nf(\phi)| = 2^{O(|\phi|^2)}$,
    and this size bound on $\nf(\phi)$ also 
    bounds the time needed to compute $\nf(\phi)$.
\end{proof}

Intuitively, what makes the $\nabla$-normal form special is that formulas can only contain a
limited, ``interaction-free'' form of conjunction.
This is reflected in the remarkable fact (which will not be used in our proof of Craig interpolation)
that the satisfiability of a modal formula in $\nabla$-normal form can be decided in linear time by a simple recursive procedure.


As we will see in Section~\ref{sec:automata}
(cf.~Remark~\ref{rem:automata-nabla}), 
modal formulas in $\nabla$-normal form 
correspond in an intuitive sense to non-deterministic tree automata.

Next, we define a ``removal'' operator
$\tilde\exists$ that takes a modal formula
in $\nabla$-normal form and removes all occurrences
of a specified set of proposition letters. Formally,
given a formula $\phi$ in $\nabla$-normal form
and a set $\mathcal{P}$ of proposition letters, 
we define $\tilde\exists\mathcal{P}.\phi$ as follows:

\[\begin{array}{lll}
\tilde\exists\mathcal{P}.\top &=& \top \\
\tilde\exists\mathcal{P}.\bot &=& \bot \\
\tilde\exists\mathcal{P}.(\pi) &=& \pi' \\
\tilde\exists\mathcal{P}.(\pi\land\nabla\Psi) &=& \pi'\land\nabla\{\tilde\exists\mathcal{P}.\psi\mid \psi\in\Psi\} \\
\tilde\exists \mathcal{P}(\phi\lor\psi) &=&
\tilde\exists \mathcal{P}(\phi) \lor \tilde\exists \mathcal{P}(\psi)
\end{array}\]
where $\pi'$ is obtained from $\pi$ by dropping all
(positive and negative) occurrences of proposition letters in $\mathcal{P}$.

\begin{proposition}\label{prop:bisim-quant}
 For all modal formulas $\phi$ in $\nabla$-normal form over propositional signature $\sigma$, sets of proposition letters $\mathcal{P}$, and pointed Kripke models $(M,w)$,
 the following are equivalent:
 \begin{enumerate}
     \item $M,w\models\tilde\exists\mathcal{P}.\phi$,
     \item There is a pointed Kripke model $N,v\models\phi$ such that $(M,w) \bisim_{\sigma\setminus \mathcal{P}} (N,v)$.
 \end{enumerate}
\end{proposition}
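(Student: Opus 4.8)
The plan is to prove both directions by induction on the structure of the $\nabla$-normal form formula $\phi$, exploiting the fact that the $\tilde\exists$ operator commutes with the recursive structure of $\nabla$-normal forms.

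For the direction from 2 to 1, suppose $N,v \models \phi$ and $Z$ is a $(\sigma \setminus \mathcal{P})$-bisimulation with $(w,v) \in Z$. I would argue that $M,w \models \tilde\exists\mathcal{P}.\phi$ by induction on $\phi$. The base cases $\top$, $\bot$ are trivial. For $\phi = \pi$ a conjunction of literals, since $N,v\models\pi$ and the literals of $\pi'$ use only proposition letters outside $\mathcal{P}$, Atomic Harmony gives $M,w\models\pi'$. For $\phi = \pi\land\nabla\Psi$: Atomic Harmony handles $\pi'$ as before; for the $\nabla$ part I must show each $\tilde\exists\mathcal{P}.\psi$ (for $\psi\in\Psi$) holds at some $R$-successor of $w$, and every $R$-successor of $w$ satisfies some $\tilde\exists\mathcal{P}.\psi$. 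The first uses \textbf{Forth} on $Z$ together with the fact that some successor of $v$ satisfies $\psi$ (from $N,v\models\nabla\Psi$) and the inductive hypothesis; the second uses \textbf{Back} on $Z$ together with the fact that every successor of $v$ satisfies some $\psi\in\Psi$ and the inductive hypothesis. The disjunction case is immediate from the inductive hypothesis.

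For the direction from 1 to 2, suppose $M,w\models\tilde\exists\mathcal{P}.\phi$. I would construct $N$ by induction on $\phi$, building it essentially as a "copy" of (the relevant generated submodel of) $M$ in which the proposition letters of $\mathcal{P}$ are reinterpreted according to the witnessing $\nabla$-normal form structure, while the letters outside $\mathcal{P}$ are kept exactly as in $M$. Concretely, for $\phi = \pi$ or $\phi = \pi\land\nabla\Psi$, we set the $\mathcal{P}$-valuation at $w$'s copy to match the literals of $\pi$ (consistent, so this is possible), and for each $R$-successor $u$ of $w$ we use $M,u\models\tilde\exists\mathcal{P}.\psi$ for a suitable $\psi\in\Psi$ (which exists by the $\Box$-part of $\nabla$) to recursively attach a model for $\psi$ rooted at a copy of $u$; additionally, for each $\psi\in\Psi$ we use the $\Diamond$-part to pick a successor $u_\psi$ of $w$ with $M,u_\psi\models\tilde\exists\mathcal{P}.\psi$ and attach an extra edge to a copy of $u_\psi$ carrying a model for $\psi$. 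The relation $Z$ relating each world of $N$ to the world of $M$ it was copied from is then a $(\sigma\setminus\mathcal{P})$-bisimulation by construction: Atomic Harmony holds because letters outside $\mathcal{P}$ are copied verbatim; \textbf{Forth} holds because the copies of successors of $w$ are successors of $w$'s copy; \textbf{Back} holds because every $N$-successor of $w$'s copy is a copy of some $M$-successor of $w$. The disjunction case picks the disjunct satisfied at $w$ and applies the inductive hypothesis.

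The main obstacle is managing the bookkeeping in the 1-to-2 direction: because a single world $u$ of $M$ may need to satisfy $\tilde\exists\mathcal{P}.\psi$ for several different $\psi$ — or a given $\psi$ may be witnessed at different successors for its $\Diamond$-conjunct versus being forced at a successor by the $\Box$-conjunct — the $\mathcal{P}$-valuation cannot in general be defined consistently "in place" on $M$ itself; this is exactly why $N$ is built as a tree-unravelling-like structure with possibly duplicated copies of worlds of $M$, one copy per "role" a world plays. Making the recursion precise (ensuring well-foundedness via the modal depth of $\phi$, and ensuring the copies glue into a genuine Kripke model with a genuine bisimulation back to $M$) is the part that requires care, though each individual step is routine. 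An alternative, slicker route for this direction is to invoke Proposition~\ref{prop:interpolant-criterion}-style reasoning or the semantics of $\tilde\exists$ directly via a "projective" semantics of $\nabla$, but the explicit inductive construction is the most self-contained.
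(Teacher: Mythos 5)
Your proof is correct and follows essentially the same structural induction on the $\nabla$-normal form as the paper, including the key construction for the 1-to-2 direction: a fresh root satisfying $\pi$ attached to the disjoint union of inductively obtained witness models, one per element of $\Psi$ (for the $\Diamond$-part) and one per successor of $w$ (for the $\Box$-part), which is exactly how the paper resolves the ``one copy per role'' bookkeeping you flag. The only differences are cosmetic: in the 2-to-1 direction the paper first derives $N,v\models\tilde\exists\mathcal{P}.\phi$ by applying the inductive hypothesis with the identity bisimulation and then transfers the result to $M,w$ using bisimulation-invariance of the $\mathcal{P}$-free formula $\tilde\exists\mathcal{P}.\phi$, whereas you thread the bisimulation $Z$ through the induction directly via Atomic Harmony, Forth and Back (note that with $Z\subseteq W^M\times W^N$ and $(w,v)\in Z$, your two invocations of ``Forth'' and ``Back'' are interchanged, though the substance is unaffected).
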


\begin{proof}
    By induction on $\phi$.
    The only interesting cases in the induction argument are for formulas $\phi$ of the form $\pi$ or $\nabla\Psi$ or $\pi\land\nabla\Psi$. Of these three cases, we will discuss the latter (the former being simpler). Therefore, let 
    $\phi = \pi\land\nabla\Psi$.
    Recall that by definition,
    $\tilde\exists\mathcal{P}.\phi = \pi'\land\nabla\{\tilde\exists\mathcal{P}.\psi\mid \psi\in\Psi\}$.

    From 1 to 2:
    let $M,w\models\tilde\exists\mathcal{P}.\phi$.
    Then, for each $\psi_i\in\Psi$, there is a successor $w_i$ of $w$ such
    that $M,w_i\models\tilde\exists\mathcal{P}.\psi_i$. By induction, there is a $N_i,v_i\models\psi_i$
    such that $(M,w_i) \bisim_{\sigma\setminus \mathcal{P}} (N_i,v_i)$. 
    Similarly, for each successor $u_j$ of $w$ there is some
    $\psi_j\in\Psi$ such that $M,u_j\models\tilde\exists\mathcal{P}.\psi_j$, and, by 
    induction hypothesis, there is some $N'_j,v'_j\models\psi_j$ such
    that $(M,u_j) \bisim_{\sigma\setminus \mathcal{P}} (N'_j,v'_j)$. 
    Let $N,v$ be obtained by 
    taking the disjoint union of all the the aforementioned pointed Kripke models $N_i,v_i$ as well as $N'_j,v'_j$, adding
    a node $v$ with an edge from $v$ to each $v_i$ and to each $v'_j$, and making $\pi$
    true at $v$. It is easy to see that $N,v\models\pi\land\nabla\Psi$ and that $(N,v) \bisim_{\sigma\setminus \mathcal{P}} (M,w)$.

    From 2 to 1: assume that $(M,w) \bisim_{\sigma\setminus \mathcal{P}} (N,v)$ and 
    $N,v\models\phi$. In particular, each 
    successor of $v$ in $N$ satisfies some $\psi\in\Psi$,
    and each $\psi\in\Psi$ is
    satisfied in $N$ by a successor of $v$.
    It follows by the induction hypothesis (with the identity bisimulation) that
    each successor of $v$ in $N$ satisfies $\tilde\exists\mathcal{P}.\psi$ for some $\psi\in \Psi$,
    and for each $\psi\in\Psi$ there is a successor of $v$ in $N$ that satisfies $\tilde\exists\mathcal{P}.\psi$.
    It follows that $N,v\models\tilde\exists\mathcal{P}.\phi$.
    Since $\tilde\exists\mathcal{P}.\phi$ does 
    not contain any occurrences of the proposition letters in 
    $\mathcal{P}$, it is invariant for bisimulations over the signature without the proposition letters in $\mathcal{P}$. Therefore also 
    $M,w\models\tilde\exists\mathcal{P}.\phi$. 
\end{proof}

Finally, we get our proof of interpolation:

\begin{mainproof}
Let $\phi\to\psi$ be a valid modal implication.
We first rewrite $\phi$ to 
$\nabla$-normal form. Then, we take
$\vartheta = \tilde\exists\mathcal{P}.\phi$ with 
$\mathcal{P} =  \sig(\phi)\setminus\sig(\psi)$
We claim that $\vartheta$ is a Craig interpolant.
It is clear from the construction that
$\sig(\vartheta)\subseteq\sig(\phi)\cap\sig(\psi)$.
Therefore, it remains to show that $\phi\to\vartheta$ and $\vartheta\to\psi$ are valid.
Let $M,w\models\phi$. Then, by Proposition~\ref{prop:bisim-quant} 
(choosing  $(N,v)$ to be $(M,w)$ and using the identity bisimulation) it follows that 
$M,w\models\vartheta$.
Next, let $M,w\models\vartheta$. By Proposition~\ref{prop:bisim-quant}, there is a pointed Kripke model $N,v\models\phi$ such 
that $M,w$ and $N,v$ are bisimilar w.r.t.~all proposition letters except possibly those in $\mathcal{P}$. Therefore 
$N,v\models\psi$. It follows that $M,w\models\psi$, since $\psi$ does not
contain any occurrences of proposition letters in $\mathcal{P}$. 
\end{mainproof}

Observe that this proof, in fact, yields something stronger than Craig interpolation, namely uniform interpolation.
Furthermore, Proposition~\ref{prop:bisim-quant} shows that
the ``$\tilde\exists\mathcal{P}$'' operation can be interpreted as 
expressing a form of \emph{second-order quantification modulo bisimulations}. Indeed, this is known as \emph{bisimulation quantifiers}. Bisimulation quantifiers were originally introduced in \cite{Pitts92}  in the context of intuitionistic logic, and then used in \cite{Ghilardi2002:sheaves} and \cite{Visser1996:uniform} as a tool to prove uniform interpolation for modal logic. It was subsequently used to prove uniform interpolation for the modal $\mu$-calculus \cite{DAgostino2000:logical} and to prove 
a restricted form of uniform interpolation for the guarded fragment~\cite{DAgostino2015:guarded}.
It remains unclear to what extent the method can be adapted to other logics. For instance:
\begin{question}
    Can the above $\nabla$-normal form based method be adapted and used to prove uniform interpolation for the basic tense logic $K_t$?
\end{question}

In summary:

\pro{This proof, in fact, yields something stronger than Craig interpolation, namely uniform interpolation. }

\pro{Since 
normalization to $\nabla$-normal form can be performed in exponential time, it yields uniform interpolants of single exponential size.}

\pro{Lyndon interpolants can be obtained similarly by dropping only the positive (negative) occurrences 
of proposition letters that do not occur positively (negatively) in $\psi$). Note that the conversion 
to $\nabla$-normal form preserves polarities of proposition letters.}

\pro{The same construction also works for the modal $\mu$-calculus
(although the proof of the $\nabla$-normal form theorem is more
involved here, cf.~\cite{Venema:lectures}).} 

\con{On the other hand, since this technique yields uniform interpolation, it does not allow us to prove Craig interpolation for logics that lack uniform interpolation (eg., S4, or
the extension of modal logic with the global modality).}


\section{Fourth proof (automata-theoretic)}
\label{sec:automata}

A well-known consequence of bisimulation-invariance is the fact that
modal logic has the \emph{tree model property}: when studying properties of modal formulas, one can safely restrict attention to tree models.
\footnote{In the case of the modal logic \logic{K}, one may even restrict attention to finite trees, although this is not relevant here.}
This opens the possibility of using \emph{tree automata}. Indeed, many results in modal logic have been obtained with the help of tree automata. 
Craig interpolation, in particular, can be proved via tree automata. The high-level outline of this approach is as follows: we first define
a class of tree automata that have the same expressive power as modal logic, allowing us to translate back and forth between modal formulas and automata. Given a valid modal implication $\models\phi\to\psi$,
we then translate the formulas in question to 
tree automata, we construct a Craig interpolant
as a tree automaton, by making use of the automata-theoretic operation of \emph{projection}, and finally we translate the automaton back to a modal formula. 
As we will see (cf.~Remark~\ref{rem:automata-nabla}), 
this techniques turns out to be closely related to 
the syntactic approach from Section~\ref{sec:nabla}.

We present this approach in detail, here, for
the basic modal logic \logic{K}. As it turns out, for this case, we only need a rather simple type of tree automaton, which we will call a \emph{modal automaton}. The input of such an automaton will 
be a tree-shaped Kripke model. Formally,
let $\tau$ be any finite propositional signature.
A \emph{$\tau$-tree} is a (possibly infinite) pointed Kripke model
$(M,w)$ such that, for every world $v$ there exists 
exactly one directed path from $w$ to $v$.
Such a $\tau$-tree is said to be \emph{fat}
if every non-root world has infinitely
many bisimilar siblings, i.e.,
for each $(v,v')\in R$, there are infinitely
many $v''$ such that $(v,v'')\in R$ and $(M,v')\bisim_\tau (M,v'')$.

\begin{figure}[t]
\[\begin{tikzcd}[row sep = 0mm]
&                                 & w_1 \arrow[rd,""] & \\
M_1:& w_0 \arrow[ru,""] \arrow[rd,""]  &  & w_3~^p\\
&                                 & w_2 \arrow[ru,""] 
\\ ~ \\ ~ \\
& & v_1 \arrow[r,""] & v_3 ~^p\\
M_2:&v_0 \arrow[ru,""] \arrow[rd,""]  &   \\
&& v_2 \arrow[r,""] & v_4 ~^p
\\ ~ \\ ~ \\
&&& u_{11} ~^p \\
&                    &                              u_{1} \arrow[ru,""] \arrow[rd,""]  & \vdots \\
M_3: & u \arrow[ru,""]\arrow[rdd,""] & \vdots & u_{1n} ~^p   \\
&&& \vdots \\
&                    &                                u_{n}  \arrow[rd,""]\\
&                    & \vdots& u_{n1} ~^p \\
&&&\vdots
\end{tikzcd}\]
\caption{Three examples of Kripke models}
\label{fig:example-models-tree}
\end{figure}

\begin{example}
    Consider the three  Kripke models over propositional signature $\tau=\{p\}$ depicted in Figure~\ref{fig:example-models-tree}.
The pointed Kripke model $(M_1,w_0)$ is \emph{not} a $\tau$-tree, whereas
$(M_2,v_0)$ is a non-fat $\tau$-tree, and 
$(M_3,u)$ is a fat $\tau$-tree.
All three pointed Kripke models are $\tau$-bisimilar.
\end{example}

The following proposition explains why we may 
restrict attention to tree-shaped Kripke models.

\begin{proposition}
    Let $(M,w)$ be a pointed Kripke model 
    over propositional signature $\tau$. Then there is a
    fat $\tau$-tree, which we will denote by $(unr(M),\langle w\rangle)$ such that $(M,w)\bisim_\tau (unr(M), \langle w\rangle)$.
\end{proposition}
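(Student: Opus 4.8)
The plan is to construct the fat $\tau$-tree $(unr(M),\langle w\rangle)$ explicitly as a ``tree of paths'' with infinite multiplicity built in, and then exhibit a $\tau$-bisimulation witnessing $(M,w)\bisim_\tau(unr(M),\langle w\rangle)$. First I would fix the intended domain: let $M=(W,R,V)$ and let the worlds of $unr(M)$ be the nonempty finite sequences $\langle w_0,n_1,w_1,n_2,w_2,\dots,n_k,w_k\rangle$ where $w_0=w$, each $(w_{i-1},w_i)\in R$, and each $n_i\in\mathbb{N}$ is a ``copy index''. The copy indices are what create infinitely many bisimilar siblings: stripping them out yields an ordinary path in $M$. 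The root $\langle w\rangle$ is the length-one sequence. The accessibility relation puts $\langle w_0,n_1,\dots,w_k\rangle\mathrel{R'}\langle w_0,n_1,\dots,w_k,n_{k+1},w_{k+1}\rangle$ whenever $(w_k,w_{k+1})\in R$ and $n_{k+1}\in\mathbb{N}$ is arbitrary, and the valuation sets a sequence into $V'(p)$ iff its last world $w_k$ lies in $V(p)$.

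Next I would verify the three required properties. That $unr(M)$ is a $\tau$-tree is immediate: every sequence has a unique predecessor (delete the last two entries), so there is exactly one directed path from the root to it. That it is \emph{fat} follows because, given any sequence $s$ ending in $w_k$ and any $R'$-successor $s' = s^\frown\langle n_{k+1},w_{k+1}\rangle$, the sequences $s^\frown\langle m,w_{k+1}\rangle$ for all $m\in\mathbb{N}$ are infinitely many siblings of $s'$, and each is bisimilar to $s'$ via the obvious relation that ignores the copy index at that position (this bisimilarity is a special case of the main claim, applied to the subtrees generated by $s'$ and $s^\frown\langle m,w_{k+1}\rangle$, which are isomorphic). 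Finally, for the bisimulation: define $Z\subseteq W\times W'$ by $(v,s)\in Z$ iff $s$ is a sequence ending in $v$. Then $(w,\langle w\rangle)\in Z$. Atomic Harmony holds by definition of $V'$. For Forth, given $(v,s)\in Z$ and $(v,v')\in R$, pick $s'=s^\frown\langle 0,v'\rangle$; then $s\mathrel{R'}s'$ and $(v',s')\in Z$. For Back, given $(v,s)\in Z$ and $s\mathrel{R'}s'$, the sequence $s'$ has the form $s^\frown\langle n,v'\rangle$ with $(v,v')\in R$, and $(v',s')\in Z$, so $v'$ is the required $R$-successor.

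I do not expect any genuine obstacle here; the statement is essentially the standard unravelling construction decorated with multiplicities, and the main thing to be careful about is bookkeeping — making sure the copy indices are threaded through the definition consistently so that the tree is simultaneously a tree (unique path property), fat (infinitely many bisimilar siblings at every non-root), and bisimilar to $(M,w)$. The only point that requires a half-sentence of justification beyond ``by definition'' is fatness, since it refers to bisimilarity of siblings rather than mere existence of siblings; there one invokes the fact that the subtrees hanging below $s^\frown\langle n,v'\rangle$ for different $n$ are isomorphic (indeed the isomorphism just renames the index in position $k+1$), hence bisimilar, and bisimilarity is preserved under restriction to generated subtrees.
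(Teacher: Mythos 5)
Your construction is exactly the paper's: worlds of $unr(M)$ are finite alternating sequences of worlds and natural-number copy indices starting at $w$, with the successor relation as accessibility, the last-world valuation, and the bisimulation relating each path to its final world. The proposal is correct and in fact spells out the verification of the tree property, fatness, and the back-and-forth conditions in more detail than the paper, which leaves these to the reader.
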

\begin{proof}
    The proof uses a variant of the standard tree unraveling construction, cf.~\cite{BlackburnDeRijkeVenema}:
    let $M=(W,R,V)$. By a \emph{path from $w$} we will 
    mean a finite sequence 
    $\langle w_0, k_1,w_1,k_2,w_2\ldots, k_{n}, w_n\rangle$ ($n\geq 0$) where $w_0=w$,
    where $\langle w_i,w_{i+1}\rangle\in R$ for all $i<n$, and where each $k_i$ is a natural number.
    By a \emph{successor} of a path  $\langle w_0, \ldots, w_n\rangle$ we will mean a path that extends it with one more number and world, or, in other words,     a path of the form 
    $\langle w_0, \ldots, w_n, k_{n+1}, w_{n+1}\rangle$. 
    Let $unr(M)=(W',R',V')$ where
    $W'$ is the set of all paths from $w$,
    $R'$ is the successor relation on such paths,
    and $V'(p)$ is the set of all paths whose
    final world element belongs to $V(p)$. It can be
    verified that $(unr(M), \langle w\rangle)$ is a fat $\tau$-tree and that $(M,w)\bisim_\tau (unr(M), \langle w\rangle)$. Indeed, the relation $Z$
    that 
    consists of all pairs $(x,v)$ with
    $x$ a path and $v$ its final world element
    is a $\tau$-bisimulation.
\end{proof}

For a propositional signature $\tau$, we will denote by $2^\tau$ the set of all truth assignments for 
the proposition letters in $\tau$. To simplify notation,
we will equate each truth assignments for $\tau$ with the subset of $\tau$ consisting of the 
proposition letters that are assigned to true.
For a world $w$ in a given Kripke model, $M=(W,R,V)$, we will denote
by $lab_\tau(w)=\{p\in\tau\mid w\in V(p)\}$ the  truth assignment that
makes true precisely those $p\in\tau$ for which
it holds that $w\in V(p)$. This will allow us 
to view a $\tau$-tree as a node-labeled tree
(where $2^\tau$ is the set of node labels).

\begin{definition}[Modal automaton]
Let $\sigma$ be any finite propositional signature.
A modal automaton\footnote{The name ``modal automaton'' we have chosen here, may be considered slightly misleading: as we will see in a moment (cf.~Example~\ref{ex:modal-automaton}), these automata are strictly more expressive than modal formulas. We will be adding a further restriction to obtain an exact equivalence.} for $\sigma$ 
is a tuple $A=(\Sigma,Q,\delta,q_0,F)$
consisting of 
\begin{itemize}
\item An \emph{alphabet} $\Sigma = 2^\sigma$,
\item A finite set of \emph{states} $Q$,
\item A \emph{transition relation} $\delta\subseteq Q\times \Sigma\times \wp(Q)$,
\item An \emph{initial state} $q_0\in Q$, and
\item A set of \emph{accepting states} $F\subseteq Q$.
\end{itemize}
\end{definition}

\begin{remark}
    As customary in automata theory, we will use the letter $Q$ to denote the set of states of the automaton and we will use the letter $q$ to denote an element of $Q$, that is, a state of the automaton. Therefore, in the remainder of this section, $q$ will always denote a state (unlike elsewhere in this chapter, where we have been using $q$ occasionally as a proposition letter).
\end{remark}

In the same way that a modal formula evaluates to 
true or false in a given pointed Kripke model $(M,w)$,  a modal automaton can either \emph{accept} or 
\emph{reject} a given $\tau$-tree.

\begin{definition}[Acceptance]
An \emph{accepting run} of a modal automaton $A=(\Sigma,Q,\delta,q_0,F)$ for $\tau$ on $\tau$-tree $(M,w)$ with $M=(W,R,V)$ is a function $\rho:W\to Q$ such that the following conditions are met:
\begin{enumerate}
    \item $\rho(w)= q_0$,
    \item whenever $\rho(v) = q$, either $q\in F$
or $(q,lab_\tau(v),\{\rho(v')\mid (v,v')\in R\})\in\delta$.
\end{enumerate}
We say that $(M,w)$ is accepted by $A$ if there is an accepting run of $A$ on $(M,w)$.
\end{definition}

Note that a run of a modal automaton may accepting even if an accepting state does not appear on every branch.
Indeed, the final states can be dispensed with entirely, by introduce a special state $q^*$ together with all transitions of the form $(q_f,\ell,\{q^*\}$ for $q_f\in F$ and $a\in\Sigma$, as well as all transitions of the form $(q^*,a,\{q^*\}$ and $(q^*,a,\emptyset\}$. However, having an explicit set of accepting states in the specification of automata will help us define ``acyclicity'' below. 
\begin{example}\label{ex:modal-automaton}
Consider the propositional signature $\tau=\{p\}$.
Consider the automaton $A$ with two states, $q_0, q_1$, 
where $q_0$ is the initial state and $q_1$ is an accepting state, that has transitions
$(q_0,\emptyset,\{q_0\})$, $(q_0,\{p\}, \emptyset)$ and $(q_0,\{p\},\{q_1\})$.

Let $(M,w_1)$ be the $\tau$-tree depicted as follows:
\[\begin{tikzcd}[row sep = -1mm]
&&& p \\
                    &                                  & w_2 \arrow[r,""] & w_4 \\
w_0 \arrow[r,""] & w_1 \arrow[ru,""] \arrow[rd,""]  &   \\
                    &                                  & w_3 \arrow[r,""] & w_5 \\
                    && p
\end{tikzcd}\]
Then $A$ accepts $(M,w_0)$. Indeed, the function $\rho$
given by 
\[\rho(v)=\begin{cases}
            q_0 & \text{ for $v\in\{w_0, w_1, w_2,w_3,w_4\}$} \\
            q_1 & \text{ for $v\in\{w_5\}$ }
          \end{cases}
\]
is an accepting run.
More generally, $A$ accepts precisely the $\tau$-trees in which
\emph{every path ending in a leaf contains a node that satisfies $p$}.
There is no modal formula that defines the same property, although it can be expressed in extensions of modal logic such as \logic{CTL^*} (by the formula $A(Fp\lor GF\top)$) and the modal $\mu$-calculus (by the formula $\nu x(p\lor \Box x)$).
\eoe
\end{example}

As the above example shows, there is not an exact correspondence between modal automata and modal formulas, since modal automata are too expressive.
To solve this, we will restrict attention to acyclic automata.
We say that a modal automaton is \emph{acyclic} 
if there is a function $rank:Q\to \mathbb{N}$
such that whenever $(q,a,S)\in\delta$ and $q'\in S$,
then $rank(q')<rank(q)$. If a 
modal automaton is acyclic, intuitively, this means that it does not allow for an infinite sequence of transitions. Note that the automaton in the above example is \emph{not} acyclic.

\begin{theorem}
\label{thm:automata-formulas}
Fix any finite propositional signature $\sigma$.
\begin{enumerate}
    \item
    For every modal formula $\chi$, there
    is an acyclic modal automaton $A_\chi$ such that
    $A_\chi$ accepts precisely the $\sigma$-trees that satisfy
     $\chi$. 
\item
    Conversely, for every acyclic modal automaton $A$
    there is a modal formula $\chi_A$ such that
    a fat $\sigma$-tree satisfies $\chi_A$ iff it is accepted by $A$.
\end{enumerate}
\end{theorem}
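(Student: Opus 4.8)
The plan is to prove both directions by structural recursion, moving between modal formulas (in NNF) and acyclic modal automata. For part~1, I would first bring $\chi$ into negation normal form and then proceed by induction on $\chi$, building $A_\chi$ compositionally. The base cases $p$, $\neg p$, $\top$, $\bot$ are handled by automata with two states, an initial state $q_0$ and the special accepting state $q^*$: for $\top$ put all transitions $(q_0,a,\emptyset)$ for every $a\in\Sigma$; for $\bot$ put no transitions out of $q_0$; for $p$ put $(q_0,a,\emptyset)$ exactly for those $a\in 2^\sigma$ with $p\in a$; for $\neg p$ those with $p\notin a$. For $\chi_1\land\chi_2$ and $\chi_1\lor\chi_2$ I would take a product-style construction on the initial states while keeping the rest of the two automata disjoint: for $\lor$, the new initial state fires a transition to a set $S$ iff $S$ is a valid transition-target of $q_0^1$ \emph{or} of $q_0^2$; for $\land$, combine a transition-target $S_1$ of $q_0^1$ with a transition-target $S_2$ of $q_0^2$ into $S_1\cup S_2$ (this is where it matters that the transition relation records a \emph{set} of successor states). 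For $\Diamond\chi'$ I would add a fresh initial state with transitions $(q_0, a, S)$ for every $a\in\Sigma$ and every $S$ containing the initial state of $A_{\chi'}$; for $\Box\chi'$, transitions $(q_0,a,S)$ for every $a$ and every $S\subseteq\{q_0^{A_{\chi'}}\}$ — i.e.\ $S=\emptyset$ or $S=\{q_0^{A_{\chi'}}\}$. Acyclicity is maintained at each step: assign the fresh initial state a rank strictly above the ranks already used in the sub-automata, which is possible precisely because the construction never introduces a transition from an old state back into the new one. The correctness of each inductive step is a routine unfolding of the acceptance definition against the Kripke semantics in Table~\ref{tab:semantics}, using that a run labels each node by exactly one state.

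For part~2, given an acyclic modal automaton $A=(\Sigma,Q,\delta,q_0,F)$ with a rank function $rank:Q\to\mathbb N$, I would define, by induction on $rank(q)$, a modal formula $\chi_q$ for each state $q$, and then set $\chi_A = \chi_{q_0}$. The idea is
\[
\chi_q \;=\; \bigvee_{\substack{a\in\Sigma}}\Big(\,\lambda_a \;\land \bigvee_{(q,a,S)\in\delta}\big(\textstyle\bigwedge_{q'\in S}\Diamond\chi_{q'}\;\land\;\Box\bigvee_{q'\in S}\chi_{q'}\big)\Big)
\;\lor\;\bigvee_{\substack{q\in F}}\top,
\]
where $\lambda_a$ is the conjunction of literals fully describing the assignment $a$ (each $p\in a$ positively, each $p\in\sigma\setminus a$ negatively), and where the last disjunct is to be read as ``include $\top$ as a disjunct when $q\in F$'' (otherwise it is absent). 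Because $\delta$ only sends $q$ to states of strictly smaller rank, each $\chi_{q'}$ used on the right is already defined, so the recursion is well founded and $\chi_q$ is a genuine (finite) modal formula. The verification that a fat $\sigma$-tree $(M,w)$ satisfies $\chi_{q_0}$ iff $A$ accepts it is again by induction on rank: from an accepting run $\rho$ one reads off satisfaction of $\chi_{\rho(v)}$ at each $v$ directly; conversely, from satisfaction one builds an accepting run top-down, and here the \emph{fatness} hypothesis is exactly what lets us choose, for each successor, a child witnessing the needed state, since $\bigwedge_{q'\in S}\Diamond\chi_{q'}$ only guarantees \emph{some} successor per $q'\in S$ while a run must assign states to \emph{all} successors — fatness supplies enough bisimilar siblings to cover both the $\Diamond$-witnesses and the $\Box$-constraint without conflict.

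The main obstacle I anticipate is this last point: the mismatch between the ``local'' acceptance condition of the automaton (which partitions the successors of a node among transition targets) and the semantics of $\Diamond$/$\Box$ (which does not), and consequently the need for fatness in the converse direction and, in part~1, the need to be careful that the $\land$-construction on transition relations genuinely captures conjunction on arbitrary (not image-finite) trees. I would handle this by working throughout with bisimulation-invariance (Theorem~\ref{thm:bisim-modal-equiv}) and the unraveling-to-fat-tree construction, so that one may always assume enough duplicated siblings are present; the remaining bookkeeping — that ranks can always be chosen consistently, and that the product constructions preserve acyclicity — is straightforward.
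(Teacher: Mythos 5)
Your part~2 is essentially the paper's proof: the formula $\bigwedge_{q'\in S}\Diamond\chi_{q'}\land\Box\bigvee_{q'\in S}\chi_{q'}$ is exactly the $\nabla\{\chi_{q'}\mid q'\in S\}$ used there, the recursion on $rank$ is the same, and your use of fatness to turn satisfaction into an accepting run (distinct witnessing children for the $\Diamond$-obligations, the $\Box$-part handling the rest) is precisely the paper's key observation. No issues there.

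Part~1 is where there is a genuine gap: the compositional construction fails at conjunction. In an accepting run each child of a node is assigned \emph{one} state, and the transition records the \emph{set} of states so assigned. If you build $A_{\chi_1\land\chi_2}$ by keeping $A_{\chi_1}$ and $A_{\chi_2}$ disjoint and declaring $S_1\cup S_2$ a valid target whenever $S_i$ is a valid target of $q_0^i$, then a single child can never be forced to discharge obligations from \emph{both} conjuncts. Concretely, take $\chi=\Box p\land\Box q$: your automaton has the target set $\{q_0^{A_p}\}\cup\{q_0^{A_q}\}$, so it accepts a tree in which some children satisfy $p\land\neg q$ (assigned $q_0^{A_p}$) and others satisfy $q\land\neg p$ (assigned $q_0^{A_q}$), even though that tree falsifies $\Box p\land\Box q$. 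The same problem arises for $\Diamond p\land\Box q$, where the $\Diamond$-witness must also carry the $\Box$-obligation. Your proposed remedy --- pass to fat trees via unraveling --- does not help: part~1 of the theorem quantifies over \emph{all} $\sigma$-trees, and in any case the counterexample above survives fattening. The standard fix, and what the paper does, is to make the states of $A_\chi$ \emph{sets} of subformulas $q_\Psi$ (closed under ``decisive refinement'', a Hintikka-set condition), so that a single state can encode several simultaneous obligations on one child; the transition relation then threads every $\Box$-obligation into every successor state and places each $\Diamond$-obligation into some successor state. (This is also exactly why the $\nabla$-normal-form conversion in Section~\ref{sec:nabla} forms $\nf(\psi\land\bigwedge\Phi_\Box)$ for each $\psi\in\Phi_\Diamond$ rather than treating the conjuncts independently.) A smaller slip of the same kind: your base automaton for $p$ with only the transitions $(q_0,a,\emptyset)$ rejects every tree whose root has a child, and in the $\Diamond\chi'$ case you must say which states besides $q_0^{A_{\chi'}}$ may appear in $S$ (it should be only the universal state $q^*$); both are repairable, but they point to the same mismatch between the exact-set acceptance condition and the local semantics of the connectives that sinks the conjunction case.
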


\begin{proof}
    1. We may assume without loss of generality that $\chi$ is in negation normal form.
    Let $X$ be the set of all subformulas of $\chi$. 
    For $\Psi,\Psi'\subseteq X$,
    we say that $\Psi'$ is a 
    \emph{decisive refinement} of $\Psi$ if 
    \begin{enumerate}
        \item $\Psi\subseteq\Psi'$,
        \item whenever $\chi_1\land\chi_2\in \Psi'$, then $\chi_1\in \Psi'$ and $\chi_2\in\Psi'$, 
        \item whenever $\chi_1\lor\chi_2\in \Psi'$, then $\chi_1\in\Psi'$ or $\chi_2\in\Psi'$,  and
        \item $\bot\not\in\Psi'$,     \end{enumerate}    
    We define the \emph{rank} of $\Psi\subseteq X$ to be $0$ if 
    $\Psi=\emptyset$, and $1+n$
    otherwise, where $n$ is the maximal modal depth of a formula in $\Psi$.
    
    We now construct our automaton $A_\chi$ as follows:
    \begin{itemize}
        \item $\Sigma = 2^\sigma$
        \item $Q = \{q_\Psi\mid \Psi\subseteq X \}$.
        \item $(q_\Psi,a,S)\in \delta$ iff there is a  decisive refinement $\Psi'$ of $\Psi$ such that
        \begin{enumerate}
        \item[(i)] all the literals in $\Psi'$ are satisfied under
        the propositional valuation $a$;
        \item[(ii)] for each formula in $\Psi'$ of the form $\Diamond\psi$, there is some $q_{\Psi''}\in S$ with $\psi\in\Psi''$; 
        \item[(iii)] for each formula in $\Psi'$ of the form $\Box\psi$ and for every $q_{\Psi''}\in S$,  $\psi\in\Psi''$; and
        \item[(iv)] for each $q_{\Psi''}\in S$, the  rank of $\Psi''$ is strictly less than the rank of $\Psi$.
        \end{enumerate}
        \item The initial state $q_0$ is $q_{\{\chi\}}$,
        \item $F=\{q_{\emptyset}\}$.
    \end{itemize}

    Note that this automaton is, by construction, acyclic. Furthermore, it can be shown that
    $A_\chi$ accepts precisely those 
    $\sigma$-trees that satisfy
    $\chi$. More precisely, let us say
    that $A_\chi$ accepts a $\sigma$-tree $(M,w)$ \emph{from a state $q_\Psi$} if it accepts $(M,w)$
    when using $q_\Psi$ (instead of 
    $q_{\{\chi\}}$ as the initial state.
    Then, it can be shown that
    \emph{$A_\chi$ accepts a $\sigma$-tree $(M,w)$ from a state $q_\Psi$ if and only if 
    $M,w\models\psi$ for all $\psi\in\Psi$.}
    The proof, which is by induction on the rank of $\Psi$, is left as an exercise to the reader.

    2. Let $A=(\Sigma,Q,\delta,q_0,F)$ be an acyclic modal automaton, with $\Sigma=2^\sigma$. Let $rank:Q\to\mathbb{N}$ be a witness  for the fact that $A$ is acyclic.
    We will construct a modal formula $\chi_q$ for each $q\in Q$, by induction on the rank of $q$. If 
    $q\in F$, we can pick
    $\chi_q$ to be $\top$. Otherwise, we can define $\chi_q$ as follows (by induction on rank):
    \[\chi_q := \bigvee_{(q,a,S)\in\delta}\Big(\bigwedge_{p\in a}p\land\bigwedge_{p\in\sigma\setminus a} \neg p\land \nabla\{\chi_{q'}\mid q'\in S\}\Big)\]
    Here, we conveniently make use of the $\nabla$ operator that we defined in Section~\ref{sec:nabla} (which can also be written out in terms of $\Diamond$ and $\Box$).
    
    \medskip\par\noindent\emph{Claim: }
    A fat $\sigma$-tree $(M,w)$ satisfies 
    $\chi_q$  if and only if it is
    accepted by $A$ from state $q$. 

    \medskip\par\noindent
    The claim can be proved by induction on the 
    rank of the state $q$. For the induction
    step, the ``if'' direction is entirely straightforward, while the ``only if'' direction relies on the following fact about fat $\sigma$-trees: whenever $\nabla\{\phi_1, \ldots, \phi_n\}$ is satisfied by at a world $w$
    in a fat $\sigma$-tree, then there are 
    \emph{distinct} successors $v_1, \ldots, v_n$
    of $w$ such that each $v_i$ satisfies $\phi_i$.

    It follows from the above claim that $\chi_A=\chi_{q_0}$ is
    satisfied by those fat $\sigma$-trees that are accepted by $A$. 
\end{proof}

One of the fundamental insights underlying much of automata theory, is that the languages accepted by finite-state automata are closed under natural operations such as \emph{intersection},
\emph{complementation}, and \emph{projection}. Of these operations, \emph{projection} turns out to be closely related to (uniform) Craig interpolation. For modal automata, the projection operation is most naturally defined as follows:

\begin{definition}[Projection]
Let $A=(2^{\sigma},Q,\delta,q_0,F)$ be a modal automaton over signature $\sigma$,
and let $\tau\subseteq \sigma$. The \emph{projection of $A$ to $\tau$}, denoted
$Proj_\tau(A)$, is the modal automaton
$A=(2^{\tau},Q,\delta',q_0,F)$ where
$\delta = \{(q,a\cap\tau,S)\mid (q,a,S)\in \delta\}$.
\end{definition}

For a Kripke model $M=(W,R,V)$ over a propositional signature
$\sigma$ and for $\tau\subseteq\sigma$, by
the \emph{$\tau$-reduct} of $M$ we will mean
the Kripke model over signature $\tau$
obtained by restricting $V$ to the proposition 
letters in $\tau$.

\begin{proposition}\label{prop:projection}
    Let $A$ be a modal automaton over signature $\sigma$ and let $\tau\subseteq\sigma$.
    Then $Proj_\tau(A)$ accepts precisely 
    those $\tau$-trees that are the $\tau$-reduct of a  $\sigma$-tree accepted by $A$.
\end{proposition}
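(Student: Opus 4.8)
The plan is to prove both inclusions between the two tree-languages by chasing accepting runs through the projection construction, exploiting the fact that $Proj_\tau(A)$ has exactly the same state set, initial state, and accepting states as $A$ — only the letters on transitions have been ``forgotten'' down to $\tau$.

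First I would prove the easy inclusion: if a $\tau$-tree $(N,u)$ is the $\tau$-reduct of a $\sigma$-tree $(M,w)$ accepted by $A$, then $(N,u)$ is accepted by $Proj_\tau(A)$. Here $N$ and $M$ share the same underlying frame (worlds and edges), and for every world $v$ we have $lab_\tau(v) = lab_\sigma(v)\cap\tau$. Take an accepting run $\rho:W\to Q$ of $A$ on $(M,w)$; I claim the very same function $\rho$ is an accepting run of $Proj_\tau(A)$ on $(N,u)$. Condition~1 ($\rho(u)=q_0$) is immediate. For condition~2, fix a world $v$ with $\rho(v)=q$; either $q\in F$ (done), or $(q,lab_\sigma(v),\{\rho(v')\mid (v,v')\in R\})\in\delta$, and by definition of $\delta'$ the triple $(q,\,lab_\sigma(v)\cap\tau,\,\{\rho(v')\mid (v,v')\in R\})=(q,lab_\tau(v),\{\rho(v')\mid (v,v')\in R\})$ lies in $\delta'$. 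So $\rho$ witnesses acceptance.

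For the converse, suppose $(N,u)$ is a $\tau$-tree accepted by $Proj_\tau(A)$ via an accepting run $\rho:W_N\to Q$. I need to produce a $\sigma$-tree $(M,w)$ with $\tau$-reduct $(N,u)$ that $A$ accepts with the \emph{same} run $\rho$. Keep the same frame $(W_N,R_N)$ and the same $\tau$-part of the valuation; the task is to decide, for each world $v$ and each proposition letter $p\in\sigma\setminus\tau$, whether $v\in V_M(p)$. Process worlds so that whenever $\rho(v)\notin F$, the transition used at $v$ in $Proj_\tau(A)$ is $(\rho(v),\,a_v\cap\tau,\,\{\rho(v')\mid (v,v')\in R_N\})\in\delta'$ for some $a_v\in 2^\sigma$ with $(\rho(v),a_v,\{\rho(v')\mid \ldots\})\in\delta$; the projection definition guarantees such an $a_v$ exists, and since $a_v\cap\tau = lab_\tau(v)$ we may set $lab_\sigma(v):=a_v$ — this is consistent with the already-fixed $\tau$-labels. (If $\rho(v)\in F$, choose any $\sigma$-label extending $lab_\tau(v)$, e.g. add no new letters.) With this $M$, the same $\rho$ satisfies condition~2 of acceptance for $A$ by construction, so $A$ accepts $(M,w)$, and its $\tau$-reduct is $(N,u)$ by design.

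The main obstacle — really the only subtlety — is making sure the choices are locally consistent: each world $v$ gets a single $\sigma$-label $a_v$, and that label must simultaneously restrict to the given $lab_\tau(v)$ \emph{and} be the first coordinate's witness for the transition of $\delta$ lying over the $\delta'$-transition actually used by $\rho$ at $v$. Since the transition at $v$ depends only on $\rho(v)$, $lab_\tau(v)$, and the multiset of successor states (all already determined), there is no circular dependency and the per-world choices can be made independently; one should just note explicitly that distinct worlds' choices never interfere because $M$'s valuation is defined world-by-world. I would also remark that no fatness or acyclicity hypothesis is needed for this proposition — it is a pure unfolding of the definitions of run and of $Proj_\tau$.
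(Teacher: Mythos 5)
Your proof is correct and follows essentially the same route as the paper's: the forward inclusion reuses the accepting run of $A$ verbatim as a run of $Proj_\tau(A)$ on the reduct, and the converse picks, for each world $v$ with $\rho(v)\notin F$, a witnessing letter $a_v\in 2^\sigma$ with $a_v\cap\tau=lab_\tau(v)$ from a $\delta$-transition lying over the used $\delta'$-transition, and uses these to extend the valuation to $\sigma$ world-by-world (with an arbitrary extension at worlds whose state is accepting). Your explicit remark that the per-world choices are independent and cannot conflict is a nice touch the paper leaves implicit.
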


\begin{proof}
    Suppose $A$ accepts $(M,w)$, and let $\rho$
    be a witnesssing accepting run of $A$ for $(M,w)$. It follows
    immediately from the construction of $Proj_\tau(A)$ that the same function $\rho$
    is also an accepting run of $Proj_\tau(A)$ for the $\tau$-reduct of $(M,w)$.

    Conversely, suppose that $Proj_\tau(A)$ accepts
    a $\sigma$-tree $(M,w)$, where $M=(W,R,V)$.
    Let $\rho$ be a witnessing accepting run.
    By definition, this means that, for each
    $v\in W$ with $\rho(v)\not\in F$, the triple of the form $(\rho(v),lab_\tau(w),\{\rho(u)\mid (v,u)\in R\})$ belongs to the transition relation of $Proj_\tau(A)$. The transition relation of $A$ must therefore contain a
    triple $(\rho(v),a_v,\{\rho(u)\mid (v,u)\in E\})$ such that $a_v\cap\tau = lab_\tau(v)$.
    Now, choose any such triple and let $M'=(W,R,V')$ where $V':\sigma\to\wp(W)$ is given by $v\in V(p)$ iff $p\in a_v$. Note that this is only well defined for $\rho(v)\not\in F$.
    For those $v\in W$ with $q(v)\in F$, we can simply
    let $V'$ agree with $V$ on all proposition letters. By construction,
    $M$ is the $\tau$-reduct of $M'$, and
    $\rho$ is an accepting run of $A$ for
    $(M',w)$.    
\end{proof}

Proposition~\ref{prop:projection} can be viewed as an 
automata-theoretic analogue of Proposition~\ref{prop:bisim-quant}.
Note that the projection operation preserves the acyclicity of a modal automaton.

\begin{mainproof}
    Let $\models\phi\to\psi$ be a valid modal 
    implication. We construct $\vartheta$ by
    taking the modal automaton of $\phi$,
    projecting it to $\tau = \sig(\phi)\cap\sig(\psi)$, 
    and translating it back to a modal formula.
    That is, $\vartheta = \chi_{Proj_{\tau}(A_\phi)}$.
    We claim that $\vartheta$ is a Craig interpolant. It is clear from the construction that
$\sig(\vartheta)\subseteq\sig(\phi)\cap\sig(\psi)$.
Therefore, it remains to show that $\phi\to\vartheta$ and $\vartheta\to\psi$ are valid.

Let $M,w$ be any pointed Kripke model over $\sig(\phi)\cup\sig(\psi)$ such that
$M,w\models\phi$. We may assume
without loss of generality that 
$M,w$ is a fat $\sig(\phi)\cup\sig(\psi)$-tree. By Proposition~\ref{prop:projection},
the $\tau$-reduct of $M,w$ satisfies $\vartheta$
and hence $M,w\models\vartheta$.

Next, let $M,w$ be any pointed Kripke model over $\sig(\phi)\cup\sig(\psi)$ such that $M,w\models\vartheta$. Again, we may assume without loss of generality that $M,w$ is a fat $\sig(\phi)\cup\sig(\psi)$-tree. Since
$\sig(\vartheta)\subseteq\tau$, the $\tau$-reduct
of $M,w$ already satisfies $\vartheta$. 
It follows that the $\tau$-reduct
of $M,w$ is accepted by $Proj_{\tau(A_\phi)}$.
By Proposition~\ref{prop:projection}
and Theorem~\ref{thm:automata-formulas},
the $\tau$-reduct of $M,w$ is equal to the $\tau$-reduct
of some $\sig(\phi)\cup\sig(\psi)$-tree $N,v\models\phi$. In other words, $M,w$ and
$N,v$ are $\tau$-isomorphic. Let $f:M\cong N$
be a $\tau$-isomorphism. Next, we extend
$f$ to an $\sig(\psi)$-isomorphism by changing
$N$. More precisely, 
let $M=(W^M,R^M,V^M)$ and $N=(W^N,R^N,V^N)$ and let 
let $N'$ be a copy of $N$
in which the valuation of each proposition letter
$p\in \sig(\psi)\setminus\tau$ is replaced by
$\{f(u)\mid u\in V^M(p)\}$. By construction,
$f$ is a $\sig(\psi)$-isomorphism between $(M,w)$ and $(N',v)$, and $N'$ differs from $N$ only
on the interpretation of proposition letters 
that do not belong to $\tau$. 
Since $N,v\models\phi$
and $N'$ differs from $N$ only
with respect to proposition letters that do not appear in $\phi$, we have
that $N',v\models\phi$ and hence
$N',v\models\psi$.
This allows us to conclude,
finally, that $M,w\models\psi$.  
\end{mainproof}

\begin{remark}
    A close analysis of the proof of Theorem~\ref{thm:automata-formulas} reveals that $A_\chi$ can be constructed from $\chi$ in exponential time. Moreover a DAG representation of $\chi_A$ can be computed from $A$ in polynomial time. Since the projection operation can be performed in polynomial time (in fact, in linear time) this gives us a method for constructing interpolants (indeed, uniform interpolants in $\nabla$-normal form) of singly exponential size if we allow DAG representations.
\end{remark}

\begin{remark}
    Many different types of tree automata are in use, in the context of modal logic and elsewhere.
    These include \emph{deterministic}, \emph{non-deterministic} and \emph{alternating} automata,
    that take as input either \emph{ranked trees}
    (where each non-leaf node has a fixed number of children) or \emph{unranked-trees}. Moreover,
    the trees in question can be either \emph{finite}
    or \emph{infinite}, and in the case of infinite
    trees, a variety of different \emph{acceptance conditions} can be used to determine if a given 
    run is accepting. 
    The modal automata we used here are non-deterministic automata that run on (finite or infinite) unranked trees and that have a very simple type of acceptance condition. For more
    expressive extensions of the modal language, 
    such as the modal $\mu$-calculus, one typically
    needs more complex acceptance conditions, which
    speak about the states of the automaton that are allowed to appear infinitely often on the different branches of an infinite run.
    See~\cite{TATA} for a 
    comprehensive introduction to different types of tree automata. The use of automata-theoretic techniques in the study of modal logics goes back to~\cite{Vardi1986:automata}.
\end{remark}

\begin{remark}\label{rem:automata-nabla}
The automata-theoretic proof of Craig interpolation
we presented here is, at a technical level, very close
to the proof of 
interpolation via $\nabla$-normal form (Section~\ref{sec:nabla}). Indeed, the 
automata-theoretic proof produces a (uniform) interpolant that is in $\nabla$-normal form. 
The reader might wonder what the advantage is 
of going through automata when there is a
direct syntactic construction that already
yields a uniform interpolant of single exponential size.
The reason we have included the automata theoretic
proof is that it helps us understand better what
is happening at a conceptual level. Indeed, in 
some sense, a modal formula in $\nabla$-normal form
\emph{is} just an acyclic modal automaton
(and, moreover, this correspondence comes with
an efficient translation if we allow a DAG-representation for the formula).
\end{remark}

In light of Remark~\ref{rem:automata-nabla}, we omit a list of pros and cons, as they correspond to the ones that were already given in Section~\ref{sec:nabla}.

\section{Fifth proof (via quasi-models and type elimination sequences)}

The interpolation method we will now describe  is one that was successfully applied to the guarded-fragment and the guarded-negation fragment in~\cite{Benedikt2015:effective} to compute interpolants of worst-case optimal size.
Furthermore, in \cite{Jung2022:more} a refinement of this approach was used to obtain decision procedures 
for the existence of interpolants in some extensions of modal logic that lack Craig interpolation.
In both cases the  constructions used are quite involved. Here, we illustrate the method in the 
simpler setting of the basic modal language.

We first review quasi-models (a.k.a.~type models) and type elimination sequences as a  technique for deciding validity (or, equivalently, satisfiability). This technique can be traced back to~\cite{Pratt1979} where it was used to show that the satisfiability problem for Propositional Dynamic Logic (PDL) is in \textsf{ExpTime}.
After this review, we will show how to extract Craig interpolants (and, in fact, Lyndon interpolants) from type elimination sequences.

A quasi-model is, essentially, a consistent
collection of ``types'', where a type is a subset $X\subseteq Y$. Here, $Y$
is some finite set of relevant formulas. For instance, $Y$ may be the set of all subformulas of 
a given formula whose validity we are trying to determine.

Our presentation is specifically tailored to the case where the input formula 
(for which we want to test validity) is an implication
between a pair of modal formulas. This will facilitate the 
interpolant-construction later. 
In addition, some of our definitions (e.g., the definition of $\subf$ below) are
tailored in anticipation of the specific use case of constructing Lyndon interpolants.

Recall the \emph{negation normal form (NNF)} for modal formulas
(cf.~Section~\ref{sec:basic}).
 
\begin{definition}[SUBF, and LITERALS] \ 
\begin{itemize}
  \item If $\chi$ is a formula in NNF,  we denote by $\subf(\chi)$ the following set:
   \[\begin{array}{lll}
   \subf(\alpha) &=& \{\alpha\} \text{ for all $\alpha$ of the form $p$, $\neg p$, $\top$ or $\bot$} \\
   \subf(\chi_1\land\chi_2) &=& \subf(\chi_1)\cup \subf(\chi_2)\cup \{\chi_1\land\chi_2\} \\
   \subf(\chi_1\lor\chi_2) &=& \subf(\chi_1)\cup \subf(\chi_2)\cup \{\chi_1\lor\chi_2\} \\
   \subf(\Diamond\chi) &=& \subf(\chi)\cup\{\Diamond\chi\} \\   
   \subf(\Box\chi) &=& \subf(\chi)\cup\{\Box\chi\} 
   \end{array}\]
   Note that $\subf(\neg p)$ does \emph{not} include p.
   \item If $\chi$ is a formula in NNF, we denote by $\lits(\chi)$ the set of all formulas in $\subf(\chi)$  of the form $p$ or $\neg p$. For example, 
   $\lits(\Box p\land\Diamond\neg q) = \{p, \neg q\}$.
   \end{itemize}
\end{definition}

Fix a modal implication $\phi\to\psi$.
The next few definitions are all relative to the given choice of the modal formulas $\phi$ and $\psi$.

\begin{definition}[Types; Overlap-Consistency] \ 
\begin{enumerate}
\item By a \emph{locally-consistent} subset of $\subf(\chi)$ we mean a subset $X\subseteq \subf(\chi)$ such that (i) whenever $\chi_1\land\chi_2\in X$, then $\chi_1, \chi_2\in X$; (ii)
  whenever $\chi_1\lor\chi_2\in X$, then at least one of $\chi_1,\chi_2$ belongs to $X$; (iii)
    $\bot\not\in X$; and (iv) it is not the case that $p, \neg p \in X$ for some proposition letter $p$.
\item An \emph{\textsc{l}-type} is a locally-consistent subset of $\subf(\nnf(\phi))$, and an \emph{\textsc{r}-type} is a  locally-consistent subset of $\subf(\nnf(\neg\psi))$. A combined \emph{type} (or just $\emph{type}$ for short) is a pair $\tau=(\tau_{\textsc{l}},\tau_{\textsc{r}})$ where $\tau_{\textsc{l}}$ is an \textsc{l}-type and $\tau_{\textsc{r}}$ is an \textsc{r}-type.
\item A  type $\tau=(\tau_{\textsc{l}},\tau_{\textsc{r}})$ is \emph{overlap-consistent} if there does not exist a proposition letter $p$ such that $p\in\tau_{\textsc{l}}$ and $\neg p\in \tau_{\textsc{r}}$ or vice versa.
\end{enumerate}
\end{definition}

It is worth pointing out that the above definition of \emph{local consistency} is closely related to the definition of \emph{decisive refinements} used in the proof of Theorem~\ref{thm:automata-formulas}. Moreover, both are variants of the concept of \emph{Hintikka sets} that is typically used in completeness proof for semantic tableau systems.

Note that the ``\textsc{r}'' in \emph{\textsc{r}-types} is not related to the accessibility relation in a Kripke model, which is coincidentally also usually denoted by $R$.

\begin{definition}[Type of a world]
Given a Kripke model $M=(W^M,R^M,V^M)$,
and $v\in W^M$, the type $\tau(v)=(\tau_{\textsc{l}}(v), \tau_{\textsc{r}}(v))$ is defined as follows: $\tau_{\textsc{l}}(v)$ is the set of formulas from $\subf(\nnf(\phi))$ true at $v$, 
and $\tau_{\textsc{r}}(v)$ is the set of formulas from $\subf(\nnf(\neg\psi))$ true at $v$.
By construction, this is an overlap-consistent type.
\end{definition}

\begin{definition}[Viable successor]
Let $\tau, \tau'$ be D-types, for $D\in\{L,R\}$.
We say that $\tau'$ is a viable successor for $\tau$ (notation: $\tau\Rightarrow \tau'$) if for every formula of the form $\Box\chi$ belonging to $\tau$, $\chi$ belongs to $\tau'$. This definition extends naturally to combined types:
we say that $\tau\Rightarrow\tau'$ holds if $\tau_{\textsc{l}}\Rightarrow \tau'_{\textsc{l}}$ and $\tau_{\textsc{r}}\Rightarrow \tau'_{\textsc{r}}$.
\end{definition}

\begin{definition}[Quasi-model] \label{def:quasi}
A \emph{quasi-model} is a set $\mathcal{X}$ of (combined) types, such that
\begin{enumerate}
    \item Every $\tau\in \mathcal{X}$ is overlap-consistent.
    \item For all $\tau\in \mathcal{X}$ and $\Diamond\chi\in \tau_D$ (with $D\in\{L,R\}$), there is $\tau'\in \mathcal{X}$ such that $\tau\Rightarrow\tau'$ and 
    $\chi\in\tau'_D$. 
\end{enumerate}
\end{definition}

Note that the empty set of types is, in particular, a quasi-model.

This completes the definition of quasi-models. 
The next theorem states that quasi-models can be used to decide validity of $\phi\to\psi$.

\begin{theorem} {\bf (Soundness and completeness of quasi-models)} \label{thm:quasi1}
The modal formula $\phi\land\neg\psi$ is satisfiable if and only if 
there is a quasi-model $\mathcal{X}$ with a type $\tau=(\tau_{\textsc{l}},\tau_{\textsc{r}})\in \mathcal{X}$ such that
$\nnf(\phi)\in\tau_{\textsc{l}}$ and $\nnf(\neg\psi)\in \tau_{\textsc{r}}$.
\end{theorem}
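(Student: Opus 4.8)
The plan is to prove the two directions of the biconditional separately, the ``if'' direction by constructing a genuine Kripke model out of a quasi-model, and the ``only if'' direction by collecting the types realized in a given model of $\phi\land\neg\psi$.

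For the ``if'' direction, suppose $\mathcal{X}$ is a quasi-model containing a type $\tau^*=(\tau^*_{\textsc{l}},\tau^*_{\textsc{r}})$ with $\nnf(\phi)\in\tau^*_{\textsc{l}}$ and $\nnf(\neg\psi)\in\tau^*_{\textsc{r}}$. I would build $M=(W,R,V)$ by setting $W=\mathcal{X}$, letting $R$ be the viable-successor relation (i.e.\ $\tau\,R\,\tau'$ iff $\tau\Rightarrow\tau'$), and defining $V(p)=\{\tau\in\mathcal{X}\mid p\in\tau_{\textsc{l}}\text{ or }p\in\tau_{\textsc{r}}\}$. Overlap-consistency together with clause (iv) of local consistency is exactly what makes $V$ behave correctly: whenever $\neg p$ occurs in $\tau_{\textsc{l}}$ or in $\tau_{\textsc{r}}$, then $p$ occurs in neither component, so $\tau\notin V(p)$. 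The heart of this direction is a \emph{one-directional truth lemma}: for every $\tau\in\mathcal{X}$ and every $\chi\in\subf(\nnf(\phi))$, if $\chi\in\tau_{\textsc{l}}$ then $M,\tau\models\chi$, and symmetrically for $\subf(\nnf(\neg\psi))$ and $\tau_{\textsc{r}}$. This is proved by induction on the structure of $\chi$: the literal cases use local and overlap consistency; the $\land$ and $\lor$ cases use clauses (i) and (ii); the $\Box\chi'$ case uses the definition of viable successor (every $R$-successor of $\tau$ contains $\chi'$ since $\Box\chi'\in\tau_{\textsc{l}}$); and the $\Diamond\chi'$ case uses clause~2 of Definition~\ref{def:quasi} to produce a successor $\tau'$ with $\chi'\in\tau'_{\textsc{l}}$. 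Applying the truth lemma to $\tau^*$ gives $M,\tau^*\models\nnf(\phi)$ and $M,\tau^*\models\nnf(\neg\psi)$, hence $M,\tau^*\models\phi\land\neg\psi$ since $\nnf(\cdot)$ preserves logical equivalence.

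For the ``only if'' direction, suppose $M,w\models\phi\land\neg\psi$ with $M=(W^M,R^M,V^M)$. I would take $\mathcal{X}=\{\tau(v)\mid v\in W^M\}$, the set of all combined types realized in $M$. Each such $\tau(v)$ is overlap-consistent by construction. Clause~2 of the quasi-model definition holds because if $\Diamond\chi\in\tau_D(v)$ for $D\in\{L,R\}$, then $M,v\models\Diamond\chi$, so some $R^M$-successor $v'$ of $v$ satisfies $\chi$; then $\tau(v')\in\mathcal{X}$, we have $\chi\in\tau_D(v')$ because $\chi\in\subf(\Diamond\chi)$, and $\tau(v)\Rightarrow\tau(v')$ because every $\Box\xi$ true at $v$ forces $\xi$ true at the successor $v'$ (this holds for the $L$- and $R$-components simultaneously, as $\Rightarrow$ on combined types is just the conjunction of the two component conditions). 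Finally the distinguished type is $\tau(w)$: from $M,w\models\phi$ we get $M,w\models\nnf(\phi)$, hence $\nnf(\phi)\in\tau_{\textsc{l}}(w)$, and likewise $\nnf(\neg\psi)\in\tau_{\textsc{r}}(w)$.

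The only step needing real care is the truth lemma in the ``if'' direction: one should notice that only the implication ``membership in the type $\Rightarrow$ truth in $M$'' is available and needed—types are not assumed maximal, so the converse genuinely fails—and then verify that the local-consistency clauses, overlap-consistency, the definition of viable successor, and quasi-model clause~2 are precisely what is required for the corresponding induction cases, with taking $R$ to be \emph{all} of $\Rightarrow$ validating the $\Box$ case and clause~2 validating the $\Diamond$ case. The rest is routine unfolding of definitions.
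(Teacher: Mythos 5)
Your proposal is correct and follows essentially the same route as the paper's (sketched) proof: collecting realized types for the ``only if'' direction, and building a Kripke model on the set of types with the viable-successor relation and a one-directional truth lemma for the ``if'' direction. You supply the inductive details that the paper leaves implicit, and your observation that only the ``membership implies truth'' half of the truth lemma is available (and suffices) is exactly the right point of care.
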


\begin{proof} (sketch)
In one direction, let $M=(W^M,R^M,V^M)$ be a Kripke model and suppose $M,w\models\phi\land\neg\psi$
for some $w\in W^M$.
Let $\mathcal{X} = \{\tau(v)\mid \text{$v\in W^M$}\}$. Then it is easy to show that 
$\mathcal{X}$ is a quasi-model.
Note that whenever $v R^M u$, then $\tau(u)$ is indeed a viable successor
for $\tau(v)$.

Conversely, suppose $\mathcal{X}$ is a quasi-model with a type $\tau=(\tau_{\textsc{l}},\tau_{\textsc{r}})\in \mathcal{X}$ such that $\nnf(\phi)\in\tau_{\textsc{l}}$ and $\nnf(\neg\psi)\in \tau_{\textsc{r}}$. We construct a model as follows:
for each $\tau'\in \mathcal{X}$ we create a world $w_{\tau'}$. The accessibility relation connects a pair of worlds $(w_{\tau'}, w_{\tau''})$ whenever $\tau'\Rightarrow \tau''$. Finally, a proposition letter $p$ is set to true at $w_{\tau'}$ whenever $p\in\tau'_{\textsc{l}}\cup\tau'_{\textsc{r}}$. It can be shown by induction that the resulting model $M_{\mathcal{X}}$ satisfies the following truth lemma: 
\[M_{\mathcal{X}},w_{\tau'}\models\chi \text{ for all } \chi\in\tau'_{\textsc{l}}\cup \tau'_{\textsc{r}}\]
In particular, it follows that $M_{\mathcal{X}},w_\tau\models \phi\land\neg\psi$.
\end{proof}

Clearly, a quasi-model is a finite object. More precisely, since a type is a 
polynomial-sized object, a quasi-model is an object of singly exponential size
(as a function of the size of $\phi$ and $\psi$).
Therefore, Theorem~\ref{thm:quasi1} provides us with a decision procedure for testing satisfiability of $\phi\land\neg\psi$. The immediate
upper bound we get from this theorem is \textsf{NExpTime} (by non-deterministically 
guessing the quasi-model). In fact, we can do a little better.

\begin{definition}[Type elimination sequence]
A \emph{type elimination sequence} is a sequence $\mathcal{X}_0, \ldots, \mathcal{X}_n$
where 
\begin{enumerate}
    \item $\mathcal{X}_0$ is the set of all types (for the given formulas $\phi, \psi$), 
    \item each $\mathcal{X}_{i+1}$ is obtained from $\mathcal{X}_i$ by removing a type $\tau\in \mathcal{X}_i$ that fails to satisfy condition 1 or 2 from the definition of quasi-models.
    \item $\mathcal{X}_n$ is a quasi-model.
    \end{enumerate}
\end{definition}

It is easy to see that a type elimination sequence always exists
(note that $\mathcal{X}_0$ is finite and that $\mathcal{X}_n$ may be empty). Even if, for a given choice of $\phi$ and $\psi$, there does not exist \emph{any} type, there is still a type elimination
sequence, namely where $\mathcal{X}_0=\mathcal{X}_n=\emptyset$.

\begin{theorem} \label{thm:quasi2} \ 
\begin{enumerate}
\item All type elimination sequences (for the given formulas $\phi, \psi$) end in the 
same quasi-model $\mathcal{X}_n$, which can equivalently be characterized as the maximal quasi-model, and as the union of all quasi-models.
    \item The modal formula $\phi\land\neg\psi$ is satisfiable if and only if $\mathcal{X}_n$ contains a type $\tau=(\tau_{\textsc{l}},\tau_{\textsc{r}})$
with $\nnf(\phi)\in \tau_{\textsc{l}}$ and $\nnf(\neg\psi)\in\tau_{\textsc{r}}$.
\end{enumerate}
\end{theorem}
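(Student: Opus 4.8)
My plan is to prove the two parts in order, with part~1 doing most of the work and part~2 following almost immediately from part~1 together with Theorem~\ref{thm:quasi1}.

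For part~1, the key observation is a \emph{monotonicity/confluence} property of the elimination step: if $\mathcal{Y}\subseteq\mathcal{Z}$ and a type $\tau$ fails condition~1 or~2 of Definition~\ref{def:quasi} relative to $\mathcal{Z}$, then $\tau$ also fails relative to $\mathcal{Y}$ --- because overlap-consistency (condition~1) does not depend on the rest of the set at all, and condition~2 only gets harder to satisfy when the set of available successors shrinks. In other words, a ``bad'' type stays bad as we remove other types. From this I would extract the central claim: \emph{every quasi-model $\mathcal{Q}$ is contained in $\mathcal{X}_n$, for every type elimination sequence $\mathcal{X}_0,\ldots,\mathcal{X}_n$.} This is proved by induction on $i$, showing $\mathcal{Q}\subseteq\mathcal{X}_i$ for all $i$: the base case is $\mathcal{Q}\subseteq\mathcal{X}_0$ since $\mathcal{X}_0$ is the set of \emph{all} types; for the inductive step, the type $\tau$ removed in passing from $\mathcal{X}_i$ to $\mathcal{X}_{i+1}$ fails condition~1 or~2 relative to $\mathcal{X}_i$, hence (by monotonicity, since $\mathcal{Q}\subseteq\mathcal{X}_i$) it would fail relative to $\mathcal{Q}$ as well if it were in $\mathcal{Q}$; but $\mathcal{Q}$ is a quasi-model, so $\tau\notin\mathcal{Q}$, and therefore $\mathcal{Q}\subseteq\mathcal{X}_i\setminus\{\tau\}=\mathcal{X}_{i+1}$. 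Taking $i=n$ gives $\mathcal{Q}\subseteq\mathcal{X}_n$.

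Now $\mathcal{X}_n$ is itself a quasi-model (clause~3 of the definition of a type elimination sequence), so the claim shows it contains every quasi-model, i.e.\ it is the maximum quasi-model and equals the union of all quasi-models. Since ``the maximum quasi-model'' is a property that does not refer to any particular elimination sequence, all type elimination sequences must terminate in this same set $\mathcal{X}_n$. (One should also note that a type elimination sequence always exists and terminates, because $\mathcal{X}_0$ is finite and each step strictly decreases cardinality, and the process can only stop at a quasi-model --- if some $\mathcal{X}_i$ is not a quasi-model there is by definition a removable type --- which handles clause~3.)

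For part~2, the ``only if'' direction: if $\phi\land\neg\psi$ is satisfiable, then by Theorem~\ref{thm:quasi1} there is \emph{some} quasi-model $\mathcal{Q}$ containing a type $\tau=(\tau_{\textsc{l}},\tau_{\textsc{r}})$ with $\nnf(\phi)\in\tau_{\textsc{l}}$ and $\nnf(\neg\psi)\in\tau_{\textsc{r}}$; by part~1, $\mathcal{Q}\subseteq\mathcal{X}_n$, so $\tau\in\mathcal{X}_n$. Conversely, if $\mathcal{X}_n$ contains such a type $\tau$, then since $\mathcal{X}_n$ is a quasi-model, Theorem~\ref{thm:quasi1} immediately yields satisfiability of $\phi\land\neg\psi$. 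The main obstacle, such as it is, is getting the monotonicity lemma stated cleanly and checking it against the exact wording of conditions~1 and~2 of Definition~\ref{def:quasi}; everything else is a routine induction. No non-trivial model construction is needed here, since that work was already done in Theorem~\ref{thm:quasi1}.
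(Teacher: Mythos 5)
Your proposal is correct and follows essentially the same route as the paper: both rest on the key claim that every quasi-model is contained in each $\mathcal{X}_i$ (proved by induction on $i$), from which part~1 follows, and both derive part~2 directly from part~1 together with Theorem~\ref{thm:quasi1}. Your write-up merely makes explicit the monotonicity of the elimination condition that the paper leaves implicit in its ``straightforward induction.''
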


\begin{proof}
For Item 1, it suffices to observe that, if $\mathcal{X}_0, \ldots, \mathcal{X}_n$ is a
type elimination sequence and $\mathcal{X}$ is any quasi-model, then, $\mathcal{X}\subseteq \mathcal{X}_i$ for all $i\leq n$.
Indeed, this can be shown by a straightforward induction on $i$. 

For Item 2, if $\phi\land\neg\psi$ is satisfiable, then let $\mathcal{X}$ be the quasi-model given by Theorem~\ref{thm:quasi1}. By Item 1, we have that $\mathcal{X}\subseteq \mathcal{X}_n$, and hence, 
$\mathcal{X}_n$ contains a type $\tau=(\tau_{\textsc{l}},\tau_{\textsc{r}})$
with $\nnf(\phi)\in \tau_{\textsc{l}}$ and $\nnf(\neg\psi)\in\tau_{\textsc{r}}$.
If, on the other hand, $\phi\land\neg\psi$ is \emph{not} satisfiable, then it follows from Theorem~\ref{thm:quasi1} that there is no quasi-model containing a type $\tau$ with $\nnf(\phi)\in \tau_{\textsc{l}}$ and $\nnf(\neg\psi)\in\tau_{\textsc{r}}$. In particular, $\mathcal{X}_n$ does not contain such a type.
\end{proof}

This puts the complexity in \textsf{ExpTime}: it suffices to construct an arbitrary type elimination sequence and inspect the final type set of the sequence. Note that
the length of the sequence is single exponential because one type gets eliminated
each step.
Note that this complexity upper bound is still not optimal, because the satisfiability problem for the modal logic \logic{K} is in \textsf{PSpace}. However, the quasi-model method is quite generic and can be adapted to various extensions of modal logic that are \textsf{ExpTime}-complete, such as the extension with the global modality, or Propositional Dynamic Logic (PDL).

Next, we explain how to construct interpolants from type elimination sequences. In fact we will show how to construct Lyndon interpolants.
Recall that a \emph{Lyndon interpolant} for a valid implication $\phi\to\psi$ is a formula $\vartheta$
such that $\phi\to\vartheta$ and $\vartheta\to\psi$ are valid, and such that every proposition
letter occurring positively (negatively) in $\vartheta$ occurs positively (negatively) in both
$\phi$ and $\psi$. In particular, a Lyndon interpolant is a Craig interpolant (but not vice versa).

Fix modal formulas $\phi, \psi$ such that $\phi\to\psi$ is valid.
By Theorem~\ref{thm:quasi2} there is a type elimination sequence $\mathcal{X}_0, \ldots, \mathcal{X}_n$
such that $\mathcal{X}_n$ does not contain any type $\tau$ with $\nnf(\phi)\in\tau_{\textsc{l}}$ and 
$\nnf(\neg\psi)\in \tau_{\textsc{r}}$. For the remainder of this section, we can fix such a sequence.

The core result is:

\begin{theorem} \label{thm:main}
If a type $\tau=(\tau_{\textsc{l}},\tau_{\textsc{r}})$ gets eliminated in the sequence, then
there is a modal formula $\vartheta_\tau$, such that
\begin{itemize}
    \item 
$\models (\bigwedge\tau_{\textsc{l}})\to\vartheta_\tau$ and
\item $\models\vartheta_\tau\to\neg(\bigwedge\tau_{\textsc{r}})$ and
\item Every proposition letter occurring positively (negatively) in $\vartheta_\tau$
occurs positively (negatively) in both $\phi$ and $\psi$.
\end{itemize}
\end{theorem}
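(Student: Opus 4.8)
The plan is to prove Theorem~\ref{thm:main} by induction on the stage at which $\tau$ is eliminated, i.e., by induction on the least $i$ such that $\tau\notin\mathcal{X}_{i+1}$. The formula $\vartheta_\tau$ will be built compositionally from the interpolants of the types that remain in $\mathcal{X}_i$ at the moment $\tau$ is removed, mirroring the two possible reasons for elimination (failure of overlap-consistency, or failure of the diamond-witness condition). Throughout, ``interpolant'' of a type $\sigma=(\sigma_{\textsc{l}},\sigma_{\textsc{r}})$ means a formula $\vartheta_\sigma$ satisfying the three bullet points in the statement; note the Lyndon condition refers to $\phi$ and $\psi$ (not to $\bigwedge\tau_{\textsc{l}}$ and $\bigwedge\tau_{\textsc{r}}$), which is what makes the bookkeeping delicate.

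First I would handle the base-type case, where $\tau$ fails overlap-consistency: there is a proposition letter $p$ with $p\in\tau_{\textsc{l}}$ and $\neg p\in\tau_{\textsc{r}}$ (or symmetrically). Here $p$ occurs positively in $\phi$ (since $p\in\subf(\nnf(\phi))$ as a literal witnesses a positive occurrence) and negatively in $\psi$ (via $\neg p\in\subf(\nnf(\neg\psi))$, hence negatively in $\psi$). So we can take $\vartheta_\tau = p$; it is entailed by $\bigwedge\tau_{\textsc{l}}$ since $p\in\tau_{\textsc{l}}$, it contradicts $\bigwedge\tau_{\textsc{r}}$ since $\neg p\in\tau_{\textsc{r}}$, and its unique (positive) occurrence of $p$ is positive in both $\phi$ and $\psi$. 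The symmetric case ($\neg p\in\tau_{\textsc{l}}$, $p\in\tau_{\textsc{r}}$) gives $\vartheta_\tau = \neg p$ analogously. Next, the induction step where $\tau$ is eliminated because condition~2 of Definition~\ref{def:quasi} fails: there is a diamond formula $\Diamond\chi\in\tau_D$ (for $D\in\{\textsc{l},\textsc{r}\}$) such that no $\sigma\in\mathcal{X}_i$ with $\tau\Rightarrow\sigma$ has $\chi\in\sigma_D$. Every type $\sigma$ that \emph{could} have served as a witness — i.e., every $\sigma$ with $\tau\Rightarrow\sigma$ and $\chi\in\sigma_D$ — was eliminated at an earlier stage, hence by the induction hypothesis has an interpolant $\vartheta_\sigma$. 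The idea is to set, say for $D=\textsc{l}$,
\[
\vartheta_\tau \;=\; \Diamond\!\!\bigwedge_{\Box\alpha\in\tau_{\textsc{r}}}\!\!\Big(\alpha \to \text{(something)}\Big)\ \cdots
\]
— more precisely, $\vartheta_\tau$ should assert that there is a successor that satisfies $\chi$ together with all the $\Box$-consequences forced along $\tau\Rightarrow(\cdot)$, and that successor's ``$\textsc{r}$-obligations'' are impossible; the natural candidate is a $\Diamond$ of a disjunction over the finitely many $\textsc{r}$-types compatible with the forced $\textsc{l}$-part, each disjunct guarded by the corresponding $\vartheta_\sigma$, combined with a $\Box$-clause collecting the $\vartheta_\sigma$'s to rule out the remaining successors. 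I would verify the two validity conditions by a direct semantic argument: if $M,w\models\bigwedge\tau_{\textsc{l}}$ then $\Diamond\chi$ forces a successor $u$; the type of $u$ restricted to $\subf(\nnf(\phi))$ extends some $\sigma_{\textsc{l}}$ with $\tau\Rightarrow\sigma$, and since $\sigma$ (whatever its $\textsc{r}$-component) was eliminated, $\vartheta_\sigma$ fires at $u$, giving $M,w\models\vartheta_\tau$. Conversely $\vartheta_\tau\to\neg(\bigwedge\tau_{\textsc{r}})$ follows because any model of $\bigwedge\tau_{\textsc{r}}$ whose successors are constrained by $\vartheta_\tau$ would yield, via the $\vartheta_\sigma\to\neg(\bigwedge\sigma_{\textsc{r}})$ half of the IH, a contradiction with the viable-successor requirement that each $\Box\beta\in\tau_{\textsc{r}}$ pushes $\beta$ down.

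The hard part will be the Lyndon (polarity) bookkeeping in the inductive step, not the plain validity: I must ensure that the extra $\Diamond$, $\Box$, $\wedge$, $\vee$ and the guards I introduce do not create a proposition-letter occurrence of the wrong polarity, and in particular that letters appearing only in the $\textsc{r}$-parts of the witness types (hence only in $\psi$, perhaps with one polarity) are not smuggled in positively. The key observations that make this work are: (i) the IH already certifies each $\vartheta_\sigma$ is Lyndon-correct w.r.t.\ $\phi,\psi$; (ii) $\Diamond$, $\Box$, $\wedge$, $\vee$ are all polarity-preserving connectives, so building $\vartheta_\tau$ from the $\vartheta_\sigma$'s using only these operators (never negating a $\vartheta_\sigma$, never negating a proposition letter except in the literal base case handled above) keeps every occurrence's polarity inherited from some $\vartheta_\sigma$; and (iii) any literal we might additionally need — e.g.\ to pin down the propositional part of a type — is itself a member of some $\tau_{\textsc{l}}$ or $\tau_{\textsc{r}}$ and therefore, as in the base case, occurs with the matching polarity in $\phi$ respectively $\psi$; and since $\phi\to\psi$ is valid, a positive occurrence in $\phi$ forces (for Craig, this is automatic; for Lyndon it is the content of the known sharpening) a positive occurrence in $\psi$, so such literals are safe to use. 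Assembling these, $\vartheta_\tau$ is Lyndon-correct, completing the induction.
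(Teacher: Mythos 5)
Your overall strategy is the paper's: induction on the elimination stage, a literal interpolant in the overlap-inconsistency base case, and a modal formula assembled from earlier interpolants in the $\Diamond$-witness case. The base case is essentially right, though note a polarity slip: from $\neg p\in\subf(\nnf(\neg\psi))$ you conclude that $p$ occurs \emph{negatively in $\psi$}, but negation normal form preserves polarity relative to $\neg\psi$, and negating inverts it, so $p$ occurs \emph{positively} in $\psi$ --- which is what your final sentence (correctly) uses.

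The genuine gap is in the inductive step. You never commit to a formula, and the candidates you sketch would break the third bullet. Both the guard $\bigwedge_{\Box\alpha\in\tau_{\textsc{r}}}(\alpha\to\cdots)$ and the literals you propose ``to pin down the propositional part of a type'' insert into $\vartheta_\tau$ proposition letters taken from $\subf(\nnf(\phi))$ or $\subf(\nnf(\neg\psi))$, and such letters need not occur in \emph{both} $\phi$ and $\psi$ at all, let alone with the right polarity. The principle you invoke to excuse this --- that validity of $\phi\to\psi$ forces a positive occurrence in $\phi$ to reappear positively in $\psi$ --- is simply false ($\phi=p\land q$, $\psi=p$); eliminating exactly such non-shared letters is the entire content of interpolation. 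The resolution, which is what the paper does, is that no literals and no guards are needed: for $D=\textsc{l}$ one takes
\[
\vartheta_\tau \;=\; \Diamond\!\!\bigvee_{\substack{X \text{ an \textsc{l}-type with}\\ \chi\in X,\ \tau_{\textsc{l}}\Rightarrow X}}\ \ \bigwedge_{\substack{Y \text{ an \textsc{r}-type with}\\ \tau_{\textsc{r}}\Rightarrow Y}} \vartheta_{(X,Y)},
\]
and dually a $\Box$ for $D=\textsc{r}$. The outer disjunction already enumerates every possible \textsc{l}-type of a $\chi$-witnessing successor, and the inner conjunction covers every possible \textsc{r}-type of \emph{any} successor, so the two validity claims go through by the semantic argument you outline, and the Lyndon condition is immediate because $\vartheta_\tau$ is built from the inductively certified $\vartheta_{(X,Y)}$ using only the polarity-preserving connectives $\Diamond,\Box,\vee,\wedge$. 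As written, your proof does not reach this construction, and the step meant to cover for that relies on a false claim, so the inductive case is not established.
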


\begin{proof}
The proof is by induction on the stage at which the type gets eliminated.
If a type $\tau$ is eliminated, this is for one of two reasons:

1.  $\tau$ is not overlap-consistent. If $p\in \tau_{\textsc{l}}$ and $\neg p\in\tau_{\textsc{r}}$, we pick $\vartheta_\tau = p$. 
Note that $p\in \lits(\phi)$ and $\neg p\in \lits(\nnf(\neg\psi))$. 
Since polarity of proposition letter occurrences is preserved when bringing formulas into negation normal form, and polarity of proposition letter occurrences is inverted when negating a formula, it follows that $p$ occurs positively in both $\phi$ and $\psi$. The case where
$\neg p\in \tau_{\textsc{l}}$ and $p\in \tau_{\textsc{r}}$ is similar, except that we choose $\vartheta_\tau = \neg p$. By analogous reasoning, in this case, $p$ must occur negatively in both $\phi$ and $\psi$.

2. For some $D\in\{L,R\}$, $\tau_D$ contains a formula of the form $\Diamond\chi$, and  every type $\tau'=(\tau'_{\textsc{l}},\tau'_{\textsc{r}})$ satisfying $\tau\Rightarrow \tau'$ and $\chi\in\tau'_D$ has already been eliminated earlier on in the elimination sequence. In this case, we proceed as follows.
If $D=L$, we take

\[ \vartheta_\tau ~~~=~~~ \Diamond
   \bigvee_{\text{\begin{tabular}{c}$X$ an \textsc{l}-type with \\ $\chi\in X$ and $\tau_{\textsc{l}}\Rightarrow X$\end{tabular}}}
   \bigwedge_{\text{\begin{tabular}{c}$\tau'_{\textsc{r}}$ an \textsc{r}-type with \\ $\tau_{\textsc{r}}\Rightarrow Y$\end{tabular}}}
   \vartheta_{(X,Y)}
\]
We need to show that (i) $\models\bigwedge \tau_{\textsc{l}}\to \vartheta_\tau$ and (ii) $\models\vartheta_\tau\to\neg(\bigwedge\tau_{\textsc{r}})$.

For (i), let $M,w\models\bigwedge\tau_{\textsc{l}}$,
where $M=(W^M,R^M,V^M)$.
Then $M,w\models\Diamond\chi$.
Let $v$ be a witnessing successor, i.e., $(w,v)\in R^M$ and $M,v\models\chi$.
Let $X=\tau_{\textsc{l}}(v)$ be the \textsc{l}-type of $v$. Note
that, by construction, $\tau_{\textsc{l}}\Rightarrow X$.
By induction hypothesis, 
we have that $\models(\bigwedge X)\to\vartheta_{(X,Y)}$ for all
\textsc{r}-types $Y$ such that $\tau'=(X,Y)$ got eliminated previously.
It follows that \[M,v\models\bigwedge_{\text{$Y$ an \textsc{r}-type with $\tau_{\textsc{r}}\Rightarrow Y$}}\vartheta_{(X,Y)}\]
Hence, $M,w\models\vartheta_\tau$.

For (ii), by contraposition, let $M,w\models\bigwedge\tau_{\textsc{r}}$. We need to show
that $M,w\not\models\vartheta_\tau$. Let $v$ be any successor of $w$, and let 
$Y=\tau_{\textsc{r}}(v)$ be the \textsc{r}-type of $v$. Observe that $\tau_{\textsc{r}}\Rightarrow Y$.
It follows by the induction hypothesis that $M,v\not\models\vartheta_{(X,Y)}$, for all $X$ such that 
$(X,Y)$ got eliminated.
Hence, $M,w\not\models\vartheta_\tau$.

This concludes the argument for $D=L$.
If $D=R$, the argument is analogous, except that we reason dually, and we now take
\[ \vartheta_\tau ~~~=~~~ \Box
   \bigvee_{\text{\begin{tabular}{c}$X$ an \textsc{l}-type with \\ $\tau_{\textsc{l}}\Rightarrow X$\end{tabular}}}
   \bigwedge_{\text{\begin{tabular}{c}$Y$ an \textsc{r}-type with \\ $\chi\in Y$ and $\tau_{\textsc{r}}\Rightarrow Y$\end{tabular}}}
   \vartheta_{(X,Y)}
\]
We omit the details, which are straightforward. 
\end{proof}

From this theorem we get, as a corollary, the Lyndon interpolation theorem.

\begin{corollary} [Lyndon interpolation] 
If $\models\phi\to\psi$, then

\[ \vartheta = 
   \bigvee_{\text{\begin{tabular}{c}$X$ an \textsc{l}-type \\ with $\nnf(\phi)\in X$\end{tabular}}}
   \bigwedge_{\text{\begin{tabular}{c}$Y$ an \textsc{r}-type \\ with $\nnf(\neg\psi)\in Y$\end{tabular}}}
   \vartheta_{(X,Y)}
   \]

is a Lyndon interpolant for $\phi\to\psi$. 
\end{corollary}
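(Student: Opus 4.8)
The plan is to obtain the corollary as an essentially immediate consequence of Theorem~\ref{thm:main}, once we check that every pair $(X,Y)$ appearing in the displayed formula for $\vartheta$ really does get eliminated in the fixed type elimination sequence, so that $\vartheta_{(X,Y)}$ is defined. This is the one point that needs a small argument, and I would dispatch it first: since $\mathcal{X}_0$ consists of \emph{all} types, we have $(X,Y)\in\mathcal{X}_0$; and since $\phi\to\psi$ is valid, $\phi\land\neg\psi$ is unsatisfiable, so by Theorem~\ref{thm:quasi2}(2) the terminal quasi-model $\mathcal{X}_n$ contains no type $\tau$ with $\nnf(\phi)\in\tau_{\textsc{l}}$ and $\nnf(\neg\psi)\in\tau_{\textsc{r}}$. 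In particular every $(X,Y)$ with $\nnf(\phi)\in X$ and $\nnf(\neg\psi)\in Y$ is removed at some stage, and Theorem~\ref{thm:main} then supplies the corresponding $\vartheta_{(X,Y)}$. (Throughout I use the standard conventions that an empty disjunction is $\bot$ and an empty conjunction is $\top$, which covers the degenerate situations, e.g.\ when $\nnf(\phi)$ lies in no locally-consistent set.)

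Next I would check $\models\phi\to\vartheta$. Take any pointed Kripke model $M,w$ with $M,w\models\phi$, hence $M,w\models\nnf(\phi)$, and let $X:=\tau_{\textsc{l}}(w)$ be the \textsc{l}-type of $w$, i.e.\ the set of formulas of $\subf(\nnf(\phi))$ true at $w$. Then $X$ is a locally-consistent subset of $\subf(\nnf(\phi))$ containing $\nnf(\phi)$, so it is one of the disjuncts of $\vartheta$, and $M,w\models\bigwedge X$ by construction. For each \textsc{r}-type $Y$ with $\nnf(\neg\psi)\in Y$, the pair $(X,Y)$ is eliminated, so Theorem~\ref{thm:main} gives $\models(\bigwedge X)\to\vartheta_{(X,Y)}$ and hence $M,w\models\vartheta_{(X,Y)}$; thus $M,w$ satisfies the whole conjunction $\bigwedge_Y\vartheta_{(X,Y)}$, and therefore $M,w\models\vartheta$. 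The inclusion $\models\vartheta\to\psi$ is proved dually by contraposition: assume $M,w\models\neg\psi$, hence $M,w\models\nnf(\neg\psi)$, and let $Y:=\tau_{\textsc{r}}(w)$, so $\nnf(\neg\psi)\in Y$ and $M,w\models\bigwedge Y$. For an arbitrary \textsc{l}-type $X$ with $\nnf(\phi)\in X$, the pair $(X,Y)$ is eliminated, so Theorem~\ref{thm:main} gives $\models\vartheta_{(X,Y)}\to\neg(\bigwedge Y)$, and since $M,w\models\bigwedge Y$ we get $M,w\not\models\vartheta_{(X,Y)}$. As this $Y$ is among the conjuncts associated with $X$, we conclude $M,w\not\models\bigwedge_Y\vartheta_{(X,Y)}$ for \emph{every} disjunct $X$, so $M,w\not\models\vartheta$; that is, $\models\neg\psi\to\neg\vartheta$, equivalently $\models\vartheta\to\psi$.

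Finally, the Lyndon side condition is inherited componentwise: by Theorem~\ref{thm:main} each $\vartheta_{(X,Y)}$ has the property that a proposition letter occurring positively (negatively) in it occurs positively (negatively) in both $\phi$ and $\psi$, and forming arbitrary conjunctions and disjunctions preserves polarities of proposition-letter occurrences, while the units $\top$ and $\bot$ introduce no proposition letters. Hence $\vartheta$ satisfies the same condition and is a Lyndon interpolant for $\phi\to\psi$. I do not expect any real obstacle beyond Theorem~\ref{thm:main} itself; the only step requiring attention is the bookkeeping in the first paragraph, namely that validity of $\phi\to\psi$ is exactly what forces all the types $(X,Y)$ indexing $\vartheta$ out of the quasi-model, so that every $\vartheta_{(X,Y)}$ is available.
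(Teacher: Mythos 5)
Your proposal is correct and follows essentially the same route as the paper's own proof: well-definedness of the $\vartheta_{(X,Y)}$ via Theorem~\ref{thm:quasi2}, then the two entailments by taking the \textsc{l}-type (resp.\ \textsc{r}-type) of the world in question and invoking Theorem~\ref{thm:main}. The only difference is that you spell out the Lyndon polarity condition and the degenerate empty-index cases explicitly, which the paper leaves implicit.
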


\begin{proof}
If $\phi\land\neg\psi$ is unsatisfiable, then the type elimination sequence  ends in a quasi-model
that does not contain a type $\tau=(\tau_{\textsc{l}},\tau_{\textsc{r}})$ with $\phi\in \tau_{\textsc{l}}$ and
$\nnf(\neg\psi)\in\tau_{\textsc{r}}$. Therefore, every such type gets eliminated.
This shows that the above formula $\vartheta$ is indeed well defined.

To see that $\models\phi\to\vartheta$, let $M,w\models\phi$.
Let $X=\tau_{\textsc{l}}(w)$ be the \textsc{l}-type of $w$. By construction, $M,w\models\bigwedge X$. It follows from Theorem~\ref{thm:main} that,
for all \textsc{r}-types $Y$ containing $\nnf(\neg\psi)$, because $(X, Y)$ got eliminated, 
$M,w\models \vartheta_{(X,Y)}$. It follows that $M,w\models\vartheta$.

To see that $\models\vartheta\to\psi$, by contraposition, let $M,w\not\models\psi$.
Let $Y=\tau_{\textsc{r}}(w)$ be the \textsc{r}-type of $w$. By construction, $M,w\models\bigwedge Y$. It follows by Theorem~\ref{thm:main} that $M,w\not\models\vartheta_{(X,Y)}$ for all \textsc{l}-types $X$ containing $\nnf(\phi)$.
Therefore, $M,w\not\models\vartheta$. 
\end{proof}

By a careful inspection of the above procedure it can be shown that this 
yields a method for constructing interpolants in exponential time, provided
that the interpolants are represented succinctly using a DAG-style representation
of formulas.

The method described here generalizes to the guarded-negation fragment~\cite{Benedikt2015:effective} and yields an effective procedure for interpolant construction that is optimal in terms of formula length and running time, in that context.
It seems likely that the above method can be adapted to other modal logics, as well as extensions of the basic modal language, that admit filtration.
%
%
Also, unlike the model-theoretic approach (cf.~Section~\ref{sec:model}), the technique presented here is not inherently restricted to fragments of \emph{first-order} logic. We leave it as an open question
whether the present technique be used to prove a general interpolation result for a larger class of modal logics that admit filtration.

In summary:

\pro{The present technique is likely to generalize to other modal logics that admit filtration. It also generalizes to richer languages such as the guarded-negation fragment of first-order logic.}

\pro{The present proof technique yields an algorithm to compute Lyndon interpolants for \logic{K} in singly exponential time and of singly-exponential size, if a DAG-representation is allowed.}

\pro{The present proof technique can also be adapted for deciding the existence of Craig interpolants for certain extensions of modal logic that lack Craig interpolation~\cite{Jung2022:more}.}

\con{This proof technique does not extend in any obvious way to prove uniform interpolation.}




\section{Sixth proof (via modal algebras and duality theory)}
\label{sec:algebra}


Many notions of modal logic can be translated into algebraic concepts in terms of modal algebras. In particular, as we will see below, a modal logic $L$ has the Craig interpolation property iff the class of $L$-algebras---modal algebras corresponding to this logic---have the superamalgamation property. Thus, for showing that the basic modal logic $\logic{K}$ has the Craig interpolation, it is sufficient to show that its algebraic models, modal algebras, have the superamalgamation property. However, in order to show that the class of modal algebras has the superamalgamation property, the easiest is to work with duals of modal algebras---modal spaces. Using the spatial  intuition of modal spaces we can construct algebras that ensure that modal algebras have the superamalgamation property. As a corollary, we obtain that the basic modal logic $\logic{K}$ has Craig interpolation. In this section, 
we spell out this approach. 

\subsection{Algebraic semantics of modal logic}

Recall that a \emph{Boolean algebra} is a set $A$ equipped
with binary operations $\vee$ and $\wedge$, a unary operation $-$, and designated elements $0$ and $1$, satisfying the commutative and associative laws for $\vee$ and $\wedge$, distribution of $\vee$ over $\wedge$ and vice versa, and a few other equations.
A \emph{modal algebra} is a Boolean algebra with an
additional unary operation $\Diamond$, such that $\Diamond$ commutes with finite joins (i.e., $\Diamond (a\vee b)=\Diamond a\vee\Diamond b$) and $\Diamond 0 = 0$. We will denote such 
modal algebras by $\mathfrak{A}=(A,\vee,\wedge,-,0,1, \Diamond,)$ or
just $\mathfrak{A}=(A,\Diamond)$.
Note also that on each Boolean algebra we can also define  the Boolean implication $a\to b = - a \vee b$ and on a modal algebra we can define the modal Box: $\Box a = - \Diamond - a$.
A \emph{homomorphism} $h:\mathfrak{A}_1\to\mathfrak{A}_2$, as usual in universal algebra, is a map from the underlying set of $\mathfrak{A}_1$ to the underlying set of $\mathfrak{A}_2$ that commutes with 
the operations (i.e., $\vee$, $\wedge$, $-$, and $\Diamond$) and that maps the 
designated elements of $\mathfrak{A}_1$ (i.e., the 0 and 1) to the corresponding designated elements of $\mathfrak{A}_2$.

Recall from Section~\ref{sec:basic} that a \emph{normal modal logic} is a set $L$ of modal formulas
that includes all substitution instances of propositional tautologies as well as all formulas of the form $\Box(\chi_1\to\chi_2)\to(\Box\chi_1\to\Box\chi_2)$ 
and is closed under modus ponens, necessitation, and substitution; and that $\logic{K}$ is the least normal modal logic.

Let $(A, \Diamond)$ be a modal algebra. A \emph{valuation} on $A$ is a map $v: \mathsf{Prop}\to A$, where $\mathsf{Prop}$ is the set of propositional variables. 
This map can be extended to the set of all modal formulas in an obvious way. We say that a modal formula $\varphi$ is \emph{valid} on a modal algebra $(A, \Diamond)$ (written $(A, \Diamond)\models \varphi)$ if $v(\varphi) = 1$ for any valuation $v$ on $A$. A formula $\varphi$ is \emph{valid} on a class of algebras $\mathcal{C}$ (written $\mathcal{C}\models \varphi$) if 
$\varphi$ is valid on every algebra in $\mathcal{C}$.

A normal modal logic $L$ is sound and complete with respect to a class $\mathcal{C}$ of modal algebras if 
for each modal formula $\varphi$ we have  
$$
\vdash_L \varphi\ \mbox{iff}\ \mathcal{C}\models \varphi
$$
For any normal modal logic $L$, an \emph{$L$-algebra} is a modal algebra that validates all the formulas in $L$. In particular, for the basic modal logic $\mathsf{K}$ the class of $\mathsf{K}$-algebras coincides with the class of all modal algebras. For the next result we refer to e.g., \cite[Chapter 5]{BlackburnDeRijkeVenema} or \cite[Chapter 7]{CZ97}. 

\begin{theorem}[Algebraic completeness]\label{A-comp}
Every normal modal logic $L$ is sound and complete with respect to the class of $L$-algebras. 
\end{theorem}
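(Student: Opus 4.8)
The plan is to handle the two directions separately. Soundness --- that $\vdash_L\varphi$ implies $\mathcal{C}\models\varphi$, where $\mathcal{C}$ is the class of $L$-algebras --- is immediate from the definitions: $\vdash_L\varphi$ means $\varphi\in L$, and an $L$-algebra is by definition a modal algebra validating every formula of $L$. The substance of the theorem is therefore the completeness direction, and I would prove it by the standard \emph{Lindenbaum--Tarski} construction, producing a single $L$-algebra that refutes every non-theorem.

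First I would define a binary relation on the set of all modal formulas (over $\mathsf{Prop}$) by $\varphi\sim_L\psi$ iff $\vdash_L\varphi\leftrightarrow\psi$, where $\varphi\leftrightarrow\psi$ abbreviates $(\varphi\to\psi)\wedge(\psi\to\varphi)$. That $\sim_L$ is an equivalence relation uses only that $L$ contains all substitution instances of propositional tautologies and is closed under modus ponens. More importantly, $\sim_L$ is a \emph{congruence}: compatibility with $\neg$, $\vee$, $\wedge$ again reduces to propositional reasoning inside $L$, while compatibility with $\Diamond$ --- i.e.\ $\vdash_L\varphi\leftrightarrow\psi$ implies $\vdash_L\Diamond\varphi\leftrightarrow\Diamond\psi$ --- follows from the necessitation rule together with the normality axiom $\Box(\chi_1\to\chi_2)\to(\Box\chi_1\to\Box\chi_2)$ (equivalently, from monotonicity of $\Diamond$). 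I would then take $\mathfrak{A}_L$ to be the quotient: its elements are the classes $[\varphi]$, with $[\varphi]\vee[\psi]:=[\varphi\vee\psi]$, $[\varphi]\wedge[\psi]:=[\varphi\wedge\psi]$, $-[\varphi]:=[\neg\varphi]$, $\Diamond[\varphi]:=[\Diamond\varphi]$, and $0:=[\bot]$, $1:=[\top]$. Each Boolean-algebra equation holds because the corresponding biconditional between formulas is a propositional tautology and hence belongs to $L$; the two modal laws $\Diamond(a\vee b)=\Diamond a\vee\Diamond b$ and $\Diamond 0=0$ hold because $\Diamond(\varphi\vee\psi)\leftrightarrow(\Diamond\varphi\vee\Diamond\psi)$ and $\Diamond\bot\leftrightarrow\bot$ are theorems of every normal modal logic. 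So $\mathfrak{A}_L$ is a modal algebra.

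Next I would verify that $\mathfrak{A}_L$ is in fact an $L$-algebra, i.e.\ validates every $\chi\in L$. Given any valuation $v\colon\mathsf{Prop}\to\mathfrak{A}_L$, pick for each variable $p$ a representative $\psi_p$ with $v(p)=[\psi_p]$; the homomorphic extension of $v$ then sends $\chi$ to $[\chi^\sigma]$, where $\chi^\sigma$ is the result of substituting $\psi_p$ for $p$ throughout $\chi$. Since $L$ is closed under substitution, $\chi^\sigma\in L$, hence $\vdash_L\chi^\sigma\leftrightarrow\top$ and so $v(\chi)=[\chi^\sigma]=[\top]=1$; thus $\mathfrak{A}_L\models\chi$. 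Finally, for completeness itself: if $\mathcal{C}\models\varphi$ then in particular $\mathfrak{A}_L\models\varphi$, and applying this to the canonical valuation $v_0(p):=[p]$ yields $[\varphi]=v_0(\varphi)=1=[\top]$, i.e.\ $\vdash_L\varphi\leftrightarrow\top$, i.e.\ $\vdash_L\varphi$.

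The argument has no deep obstacle; the only places that are not pure bookkeeping are exactly the two spots where the closure conditions in the definition of a normal modal logic are used: necessitation together with the normality axiom for showing that $\sim_L$ respects $\Diamond$, and closure under substitution for showing that $\mathfrak{A}_L$ validates all of $L$ rather than merely the theorems in some fixed finite sub-signature. Everything else --- the equivalence-relation and congruence checks for the Boolean connectives, and the verification of the modal-algebra axioms --- is a routine translation of propositional tautologies and normal-modal-logic theorems into equations.
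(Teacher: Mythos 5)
Your proof is correct and is exactly the standard Lindenbaum--Tarski argument that the paper itself defers to (it cites \cite[Chapter 5]{BlackburnDeRijkeVenema} and \cite[Chapter 7]{CZ97} rather than proving the theorem, and later invokes the same Lindenbaum--Tarski algebras $F_L(\vec{p})$ in the superamalgamation proof). The two places you flag as non-routine --- congruence of $\sim_L$ with respect to $\Diamond$ via necessitation plus the normality axiom, and closure under substitution to get that the quotient validates all of $L$ under arbitrary valuations --- are indeed the only points of substance, and you handle both correctly.
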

 
We note, on the other hand, that there exist modal logics that are Kripke incomplete \cite[Chapter 6]{CZ97}, \cite[Chapter 4]{BlackburnDeRijkeVenema}. In fact, there are continuum many such logics \cite{Lit02}.

In order to state the next result we need to introduce 
the notion of \emph{amalgamable} and \emph{superamalgamable} classes of algebras.
Recall that an \emph{embedding} is an injective homomorphism. 

\begin{definition}\label{def: amalg} Let $\mathcal{C}$ be a class of algebras. We say that $\mathcal{C}$ is \emph{amalgamable} if for each $\mathfrak{A}_0$, 
$\mathfrak{A}_1, \mathfrak{A}_2\in \mathcal{C}$  such that  $\mathfrak{A}_0$ is embedded into  $\mathfrak{A}_1$ via a homomorphisms $h_1$ and 
$\mathfrak{A}_0$ is embedded into $\mathfrak{A}_2$ via a homomorphism $h_2$, there is an algebra $\mathfrak{A}_3\in \mathcal{C}$ such that 
$\mathfrak{A}_1$ is embedded into $\mathfrak{A}_3$ via a homomorphism $g_1$ and $\mathfrak{A}_2$ is embedded into $\mathfrak{A}_3$ via a homomorphism $g_2$ 
such that for each $a\in \mathfrak{A}_0$ we have $g_1(h_1(a)) = g_2(h_2(a))$. 
\begin{center}
\begin{tikzcd}
\mathfrak{A}_0 \arrow[d, "h_2"] \arrow[r, "h_1"] & \mathfrak{A}_1   \arrow[d, dashed, "g_1"]\\
\mathfrak{A}_2      \arrow[r, dashed, "g_2"]       &    \mathfrak{A_3}      
\end{tikzcd} 
\end{center}
\end{definition}

Note that the above definition of amalgamation applies to any class of universal algebras. 
If the conditions of Definition~\ref{def: amalg} are satisfied, the triple $\mathfrak{A}_0, \mathfrak{A}_1, \mathfrak{A}_2$ is called an \emph{amalgamable triple}. The algebra $\mathfrak{A}_3$ is called the \emph{amalgam} of the diagram or the \emph{algebra amalgamating the diagram}.

The next  definition of  superamalgamation presumes that the algebras in the class are lattice-based with  $\leq$ being  a lattice induced partial order. 
We note that for modal algebras $a\leq b$ iff $a\wedge b = a$ iff $a\vee b = b$ iff $a\to b = 1$. We will denote by $\leq_i$ the lattice order of an algebra $\mathfrak{A}_i$. 


\begin{definition} Let $\mathcal{C}$ be a class of modal algebras. 
We say that $\mathcal{C}$ is \emph{superamalgamable}  if it is amalgamable and, in addition, if for each amalgamable triple    
$\mathfrak{A}_0$, 
$\mathfrak{A}_1$, $\mathfrak{A}_2\in \mathcal{C}$ such that   $\mathfrak{A}_0$ is embedded into  $\mathfrak{A}_1$ via $h_1$ and 
$\mathfrak{A}_0$ is embedded into $\mathfrak{A}_2$ via $h_2$, and for each $a\in \mathfrak{A}_i$, $b\in \mathfrak{A}_j$, 
$\{i, j\} = \{1, 2 \}$ we have 
$$
g_i(a) \leq_3 g_j(b)\ \mbox{implies} \ \exists c\in \mathfrak{A}_0\  \mbox{such that } a\leq_i h_i(c)\ \mbox{and}\ h_j(c)\leq _j b.
$$
\end{definition}

Clearly, every class $\mathcal{C}$ which is supermalgamable, is also amalgamable. The next theorem connects superamalgamability  with the Craig interpolation property.


\begin{theorem}\label{CI = SA}
A normal modal logic $L$ has the Craig interpolation property iff the class of all $L$-algebras is superamalgamable. 
\end{theorem}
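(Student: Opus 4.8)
The plan is to prove the two implications separately, using \emph{free $L$-algebras} (Lindenbaum--Tarski algebras) as the bridge between provability and algebra, together with the algebraic completeness theorem. For a set $X$ of proposition letters, write $F_L(X)$ for the free $L$-algebra on $X$: its elements are equivalence classes $[\chi]$ of modal formulas over $X$, with $[\chi]=1$ in $F_L(X)$ iff $\vdash_L\chi$, hence $[\chi_1]\le[\chi_2]$ iff $\vdash_L\chi_1\to\chi_2$. Since every $L$-algebra is a quotient of a free $L$-algebra, the amalgam $\mathfrak{A}_3$ we build below automatically lies in the class of $L$-algebras.

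\textbf{From superamalgamation to Craig interpolation.} Assume the class of $L$-algebras is superamalgamable and $\vdash_L\phi\to\psi$. Put $\sigma_\phi=\sig(\phi)$, $\sigma_\psi=\sig(\psi)$, $\sigma_0=\sigma_\phi\cap\sigma_\psi$, and take $\mathfrak{A}_0=F_L(\sigma_0)$, $\mathfrak{A}_1=F_L(\sigma_\phi)$, $\mathfrak{A}_2=F_L(\sigma_\psi)$, with $h_1,h_2$ the canonical maps induced by the inclusions of generating sets; these are embeddings because in any variety the free algebra on a subset of generators is a retract of the free algebra on the full set. Superamalgamation gives an $L$-algebra $\mathfrak{A}_3$ with embeddings $g_1,g_2$ satisfying $g_1\circ h_1=g_2\circ h_2$ and the superamalgamation inequality. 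The valuation $v_3$ on $\mathfrak{A}_3$ sending $p\in\sigma_\phi$ to $g_1([p])$ and $p\in\sigma_\psi$ to $g_2([p])$ is well-defined (consistent on $\sigma_0$ precisely because $g_1\circ h_1=g_2\circ h_2$), and $v_3(\phi)=g_1([\phi])$, $v_3(\psi)=g_2([\psi])$; since $\mathfrak{A}_3$ is an $L$-algebra and $\vdash_L\phi\to\psi$, we get $g_1([\phi])\le_3 g_2([\psi])$. Superamalgamation then yields $c\in\mathfrak{A}_0$ with $[\phi]\le_1 h_1(c)$ and $h_2(c)\le_2[\psi]$; writing $c=[\theta]$ for a formula $\theta$ over $\sigma_0$, these unwind to $\vdash_L\phi\to\theta$ and $\vdash_L\theta\to\psi$, so $\theta$ is a Craig interpolant.

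\textbf{From Craig interpolation to superamalgamation.} Given an amalgamable triple, identify along the embeddings so that $\mathfrak{A}_0$ is a common subalgebra of $\mathfrak{A}_1$ and $\mathfrak{A}_2$ with $h_1,h_2$ inclusions. Choose variable sets $P_0\subseteq P_1$, $P_0\subseteq P_2$ with $P_1\cap P_2=P_0$ (take disjoint copies of the non-shared parts) and surjections $e_i\colon F_L(P_i)\to\mathfrak{A}_i$ all restricting to the same $e_0\colon F_L(P_0)\to\mathfrak{A}_0$; let $F_i=e_i^{-1}(1)$, the associated open filter (a filter closed under $\Box$), so $[\chi_1]\le[\chi_2]$ in $\mathfrak{A}_i$ iff $(\chi_1\to\chi_2)\in F_i$. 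Because $h_i$ is an embedding, $F_i\cap F_L(P_0)=F_0$. Let $F$ be the open filter of $F_L(P_1\cup P_2)$ generated by $F_1\cup F_2$ together with the $L$-theorems over $P_1\cup P_2$, set $\mathfrak{A}_3=F_L(P_1\cup P_2)/F$, and let $g_1,g_2$ be induced by the inclusions $F_L(P_i)\hookrightarrow F_L(P_1\cup P_2)$ (well-defined since $F_i\subseteq F$); then $\mathfrak{A}_3$ is an $L$-algebra and $g_1\circ h_1=g_2\circ h_2$. It remains to see that $g_1,g_2$ are injective and that the superamalgamation inequality holds, both from the same computation. By the explicit description of a generated open filter, $x\in F$ means $\vdash_L(\Box^{n_1}\gamma_1\wedge\cdots\wedge\Box^{n_k}\gamma_k)\to x$ for some $\gamma_j\in F_1\cup F_2$ and $n_j\ge 0$ (theorems among the $\gamma_j$ are absorbed); grouping the $\gamma_j$ by membership in $F_1$ versus $F_2$ and using that each $F_i$ is closed under $\Box$ and $\wedge$, we obtain $\mu\in F_1$ over $P_1$ and $\nu\in F_2$ over $P_2$ with $\vdash_L(\mu\wedge\nu)\to x$. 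For injectivity of $g_1$, take $x=\alpha$ over $P_1$ with $\alpha\in F$, rewrite $\vdash_L\nu\to(\neg\mu\vee\alpha)$, and apply CIP over the common language $P_0$ to get $\theta$ over $P_0$ with $\vdash_L\nu\to\theta$ and $\vdash_L(\theta\wedge\mu)\to\alpha$; then $\theta\in F_2\cap F_L(P_0)=F_0\subseteq F_1$, and since $\mu\in F_1$ too, $\alpha\in F_1$. For the superamalgamation inequality, if $g_1(a)\le_3 g_2(b)$ choose representatives $\alpha$ over $P_1$ and $\beta$ over $P_2$, so $(\alpha\to\beta)\in F$; the computation and a rearrangement give $\vdash_L(\mu\wedge\alpha)\to(\neg\nu\vee\beta)$, CIP produces $\theta$ over $P_0$ with $\vdash_L(\mu\wedge\alpha)\to\theta$ and $\vdash_L(\theta\wedge\nu)\to\beta$, and $c:=e_0([\theta])\in\mathfrak{A}_0$ then witnesses $a\le_1 h_1(c)$ (as $\mu\in F_1$) and $h_2(c)\le_2 b$ (as $\nu\in F_2$); the case $\{i,j\}=\{2,1\}$ is symmetric.

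\textbf{Main obstacle.} The delicate part is the second direction, where the failure of a deduction theorem for modal logic forces one to work with the explicit form of the open filter generated by a set, allowing arbitrary finite iterations of $\Box$ in front of the generators. The two facts that render this harmless are that each $F_i$ is itself closed under $\Box$ (so the $\Box^n$'s never escape $F_i$), and that the embedding hypothesis on $h_i$ is exactly equivalent to $F_i\cap F_L(P_0)=F_0$; with these in hand, each appeal to Craig interpolation is a routine rearrangement of implications.
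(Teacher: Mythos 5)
Your proof is correct, and it is substantially more complete than the paper's own treatment, which proves neither direction in full: it cites the literature and only sketches the implication from superamalgamation to interpolation. For that direction you follow the same route through the Lindenbaum--Tarski algebras $F_L(\sig(\phi)\cap\sig(\psi))$, $F_L(\sig(\phi))$, $F_L(\sig(\psi))$, but you handle differently the one step the paper calls ``a bit subtle'': to obtain $g_1([\phi])\le_3 g_2([\psi])$ in the amalgam, the paper claims that $\mathfrak{A}_3$ can be realized as a subalgebra of the Lindenbaum--Tarski algebra on all the variables (and omits the argument), whereas you evaluate $\phi$ and $\psi$ in $\mathfrak{A}_3$ under the valuation $p\mapsto g_i([p])$ and use only that $\mathfrak{A}_3$ is an $L$-algebra validating $\phi\to\psi$. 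Your version is cleaner and sidesteps the omitted subtlety. The converse direction, which the paper does not prove at all, you supply via the standard argument: present the $\mathfrak{A}_i$ as quotients $F_L(P_i)/F_i$ with $P_1\cap P_2=P_0$, amalgamate by the open filter $F$ generated by $F_1\cup F_2$, and use the normal form $\vdash_L(\mu\wedge\nu)\to x$ with $\mu\in F_1$, $\nu\in F_2$ to reduce both injectivity of the $g_i$ and the superamalgamation inequality to applications of CIP. The two facts you isolate --- closure of each $F_i$ under $\Box$, and $F_i\cap F_L(P_0)=F_0$ from injectivity of $h_i$ --- are exactly the points on which this argument turns, and your handling of them is correct. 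What your approach buys is a self-contained proof of the equivalence; what the paper's buys is brevity, at the price of deferring all details to the references.
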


\begin{proof}
For the proof we refer to 
\cite[Chapter 14]{CZ97}, \cite[Chapter 7]{MAksGab05}, and 
\cite[Section 6.3]{Ven07}. We only sketch here why  superamalgamation of the class  of all $L$-algebras  entails that a modal logic $L$ has the interpolation property. We rely on the construction of the Lindenbaum--Tarski algebra for a modal logic $L$. The Lindenbaum--Tarski algebra $F_L(\vec{p})$ is the algebra of formulas in variables $\vec{p}$ moded out by $L$-equivalence. Moreover, it is well known that $\vdash_L \varphi(\vec{p})\to \psi(\vec{p})$ iff $F_L(\vec{p}) \models \varphi(\vec{p})\to \psi(\vec{p})$ iff $[\varphi(\vec{p})]\leq [\psi(\vec{p})]$ in $F_L(\vec{p})$, where $[\varphi]$ and $[\psi]$ are the elements of $F_L(\vec{p})$ corresponding to formulas $\varphi$ and $\psi$, respectively.

Now suppose $\vdash_L \varphi(\vec{p}, \vec{q})\to \psi(\vec{q}, \vec{r})$. 
Let $F_L(\vec{q})$, $F_L(\vec{p}, \vec{q})$ $F_L(\vec{q}, \vec{r})$ be the Lindenbaum--Tarski algebras of $L$ (see e.g., \cite{BlackburnDeRijkeVenema, Kra99, CZ97}) in the variables $\vec{q}$, and  $\vec{p}, \vec{q}$ and $\vec{q}, \vec{r}$, respectively. There is a natural embedding of $F_L(\vec{q})$ into $F_L(\vec{p}, \vec{q})$ and of $F_L(\vec{q})$ into $F_L(\vec{q}, \vec{r})$. Then it can be shown  that  the algebra $\mathfrak{A}_3$ amalgamating this diagram, is a subalgebra  of $F_L(\vec{p}, \vec{q}, \vec{r})$---the Lindenbaum--Tarski algebra of $L$ in the variables $\vec{p}$,   $\vec{q}$ and $\vec{r}$. We omit the details.


Finally, by just unfolding the superamalgamation property we obtain that there is a formula $\chi(\vec{q})$ in the Lindenbaum-Tarski algebra $F_L(\vec{q})$ (formally speaking an equivalence class of formulas in this algebra), such that $\vdash_L \varphi(\vec{p}, \vec{q})\to \chi(\vec{q})$
and $\vdash_L \chi(\vec{q})\to \psi(\vec{q}, \vec{r})$, entailing that $L$ has the interpolation property. 
\end{proof}


Thus, in order to show that $\logic{K}$ has the interpolation property, it is sufficient to show that the class of all modal algebras has the superamalgamation property. 
For this, we will make
use of duality theory,  a powerful technique that has many applications all over algebraic logic.
%

\subsection{Topological Duality}

In this section we briefly recall the basic definitions and facts underlying the topological duality for modal algebras. We start by reviewing Stone duality. For more information we refer to \cite{DP02, PJ82}.

A \emph{topological space} is a pair $(X, \tau)$ where $X$ is a set and $\tau$ is a collection of subsets of $X$ (called \emph{open sets}) satisfying the usual axioms: the empty set and $X$ itself are open; arbitrary unions of open sets are open; and finite intersections of open sets are open. A subset $C \subseteq X$ is \emph{closed} if its complement $X \setminus C$ is open. A space is \emph{compact} if every open cover has a finite subcover, and \emph{Hausdorff} if any two distinct points have disjoint open neighborhoods.
A \emph{Stone space} is a compact Hausdorff space that is \emph{totally separated}, meaning that distinct points are separated by a clopen (simultaneously closed and open) set. 

Stone spaces provide the topological side of \emph{Stone duality}, which establishes a correspondence between Boolean algebras and such spaces. Given a Boolean algebra $A$, its \emph{Stone space} (or \emph{Stone dual}) $A_\ast$ is the set of all \emph{ultrafilters} on $A$, that is, maximal proper filters $x \subseteq A$ satisfying $a \in x$ iff $\neg a \notin x$. The topology on $A_\ast$ is generated by basic open sets of the form
$$
  \widehat{a} = \{\, x \in A_\ast : a \in X \,\},
$$
for $a \in A$. Each $\widehat{a}$ is clopen, which ensures that $B_\ast$ is totally separated. Conversely, every Stone space $X$ gives rise to a Boolean algebra $\mathsf{Clop}(X)$ of clopen subsets, and these two constructions establish a dual equivalence between the categories of Boolean algebras and Stone spaces.

Here we will also like to highlight a useful separation property of Stone spaces, which we will use below in an essential way (see e.g., \cite{Eng77}). 

\begin{proposition}\label{Stone=>Normal}
    Let $X$ be a Stone space. For each pair of disjoint closed sets $F, G\subseteq X$, there exists a clopen set $C\subseteq X$ such that $F\subseteq C$ and $G\cap C = \varnothing$.
\end{proposition}

The crucial idea of the J\'onsson--Tarski duality, is that modal algebras can be represented as algebras of clopen sets of modal spaces---Stone spaces equipped with a binary relation. 
We refer to  \cite{BlackburnDeRijkeVenema, CZ97,Kra99,Ven07} for the basic theory of 
modal algebras, modal spaces, 
and the dual equivalence between modal algebras and modal spaces.

A \emph{modal space} (or \emph{descriptive Kripke frame}) is a pair $\mathfrak X=(X,R)$, where $X$ is a Stone space and $R$ is a binary relation on $X$ satisfying
$$
R[x]:=\{y\in X:xRy\}
$$
is closed for each $x\in X$ and
$$
R^{-1}[U]:=\{x\in X:\exists y\in U \mbox{ with } xRy\}
$$
is clopen for each clopen $U$ of $X$. A \emph{bounded morphism} (or \emph{p-morphism}) between two modal spaces is a continuous map $f$ such that $f(R[x])=R[f(x)]$. Let {\sf MS} be the category of modal spaces and bounded morphisms.

It is a well-known theorem in modal logic that {\sf MA} is dually equivalent to {\sf MS}. The functors $(-)_*:{\sf MA}\to{\sf MS}$ and $(-)^*:{\sf MS}\to{\sf MA}$ that establish this dual equivalence are constructed as follows. For a modal algebra $\mathfrak A=(A,\Diamond)$, let $\mathfrak A_*=(A_*,R)$, where $A_*$ is the Stone space of $A$  and 
$$
  x  R y \quad \text{iff} \quad \forall a \in A, \; (a \in y \Rightarrow \Diamond a \in x).
  $$
\noindent
We call $R$ the \emph{dual} of $\Diamond$. For a modal homomorphism $h:A\to B$, its dual $h_*:B_*\to A_*$ is given by $h^{-1}$. For a modal space $\mathfrak X=(X,R)$, let $\mathfrak X^*=(X^*,\Diamond_R)$, where $X^*$ is the Boolean algebra of clopens of $X$ and 
$$\Diamond_R(U)=R^{-1}[U].$$ 

For a  bounded morphism $f:X\to Y$, its dual $f^*:Y^*\to X^*$ is given by $f^{-1}$, the map that maps each set $U$ to its pre-image $f^{-1}(U)$. Moreover, $f$ is surjective  iff $f^*$ is injective and conversely $f$ is injective iff $f^*$ is surjective. 

The \emph{canonical frame} of a modal logic $L$ (or of its Lindenbaum--Tarski algebra) is a special instance of this general dual construction. The underlying set of the canonical frame consists precisely of the ultrafilters of the Lindenbaum-Tarski algebra (or, equivalently, maximally $L$-consistent sets), and the accessibility relation is defined as above. Thus, the familiar canonical model of $L$ in modal logic is just the J\'onsson--Tarski dual of its Lindenbaum-Tarski algebra. 

 Note that one can also construct the powerset algebra from a modal space, or any Kripke frame $(X, R)$ in general. The powerset algebra of $(X, R)$ is a the pair $(\mathcal{P}(X), \Diamond)$, where $\mathcal{P}(X)$ is the powerset Boolean algebra and $\Diamond_R$ is defined as $\Diamond_R(U)=R^{-1}[U]$ for each $U\in \mathcal{P}(X)$. It is easy to see that the powerset algebra of every Kripke frame is a modal algebra. 

This topological perspective allows one to transfer algebraic properties such as the \emph{superamalgamation property} into the corresponding dual notions, as we will see below.
 
\subsection{Superamalgamation via duality}

 We will now show how the above duality techniques can be put to use for proving that the modal logic $\mathsf{K}$ has the Craig interpolation property. 

\begin{theorem}\label{thm: sa}
    The class of all modal algebras is superamalgamable.
\end{theorem}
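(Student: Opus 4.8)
The plan is to prove Theorem~\ref{thm: sa} by passing through the J\'onsson--Tarski duality and carrying out the amalgamation construction on the dual side, where modal spaces let us argue spatially. So, given an amalgamable triple $\mathfrak{A}_0,\mathfrak{A}_1,\mathfrak{A}_2$ of modal algebras with embeddings $h_1\colon\mathfrak{A}_0\hookrightarrow\mathfrak{A}_1$ and $h_2\colon\mathfrak{A}_0\hookrightarrow\mathfrak{A}_2$, first I would apply the functor $(-)_*$. Since the dual of an injective homomorphism is a surjective bounded morphism, this yields modal spaces $\mathfrak{X}_i=(\mathfrak{A}_i)_*=(X_i,R_i)$ and surjective bounded morphisms $f_1=(h_1)_*\colon\mathfrak{X}_1\to\mathfrak{X}_0$ and $f_2=(h_2)_*\colon\mathfrak{X}_2\to\mathfrak{X}_0$.

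Next I would form the \emph{fibered product} $\mathfrak{X}_3=(X_3,R_3)$, where $X_3=\{(x_1,x_2)\in X_1\times X_2 : f_1(x_1)=f_2(x_2)\}$ carries the subspace topology from $X_1\times X_2$, and $(x_1,x_2)\,R_3\,(y_1,y_2)$ iff $x_1\,R_1\,y_1$ and $x_2\,R_2\,y_2$. Because $X_0$ is Hausdorff, $X_3$ is closed in the Stone space $X_1\times X_2$, hence is itself a Stone space, and one checks that $\mathfrak{X}_3$ is a modal space and that the coordinate projections $\pi_1\colon\mathfrak{X}_3\to\mathfrak{X}_1$ and $\pi_2\colon\mathfrak{X}_3\to\mathfrak{X}_2$ are surjective bounded morphisms: surjectivity of $\pi_1$ uses surjectivity of $f_2$, while the ``back'' condition $\pi_1(R_3[(x_1,x_2)])=R_1[x_1]$ uses that $f_1,f_2$ are bounded morphisms together with $f_1(x_1)=f_2(x_2)$. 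Clearly $f_1\circ\pi_1=f_2\circ\pi_2$. Dualizing back, I would set $\mathfrak{A}_3=(\mathfrak{X}_3)^*$ and $g_i=(\pi_i)^*\colon\mathfrak{A}_i\to\mathfrak{A}_3$; since each $\pi_i$ is surjective, each $g_i$ is an embedding, and $g_1\circ h_1=g_2\circ h_2$ because $f_1\circ\pi_1=f_2\circ\pi_2$. This already establishes amalgamability.

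It then remains to verify the superamalgamation inequality. Suppose $a\in\mathfrak{A}_1$ and $b\in\mathfrak{A}_2$ satisfy $g_1(a)\leq_3 g_2(b)$ (the symmetric case is analogous). Identifying $\mathfrak{A}_i$ with the clopen algebra $(\mathfrak{X}_i)^*$, view $a$ as a clopen $U\subseteq X_1$ and $b$ as a clopen $V\subseteq X_2$; then the hypothesis reads $\pi_1^{-1}[U]\subseteq\pi_2^{-1}[V]$, which says that no pair $(x_1,x_2)\in X_3$ has $x_1\in U$ and $x_2\notin V$, i.e.\ that the closed sets $f_1[U]$ and $f_2[X_2\setminus V]$ are disjoint in $X_0$. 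Since $X_0$ is a Stone space, disjoint closed sets are separated by a clopen, so I would pick a clopen $W\subseteq X_0$ with $f_1[U]\subseteq W$ and $f_2[X_2\setminus V]\cap W=\emptyset$, and let $c\in\mathfrak{A}_0$ be the element corresponding to $W$. Then $h_i(c)$ corresponds to $f_i^{-1}[W]$; from $f_1[U]\subseteq W$ we get $U\subseteq f_1^{-1}[W]$, i.e.\ $a\leq_1 h_1(c)$, and from $f_2[X_2\setminus V]\subseteq X_0\setminus W$ we get $f_2^{-1}[W]\subseteq V$, i.e.\ $h_2(c)\leq_2 b$, exactly as required.

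I expect the main obstacle to be the verification that the fibered product $\mathfrak{X}_3$ is genuinely a \emph{modal space} --- that $R_3[(x_1,x_2)]$ is closed and that $R_3^{-1}$ sends clopens to clopens --- and that the projections are bounded morphisms; unlike the purely relational ``Kripke-frame'' part, this step must use the descriptive (topological) structure and compactness of the $X_i$, and it is the one place where real care is needed. The remaining ingredients --- dualizing the triple, reading off $\mathfrak{A}_3$, and the clopen-separation argument for the superamalgamation inequality --- are then routine. Finally, combined with the algebraic completeness theorem and the earlier characterization of Craig interpolation via superamalgamability, Theorem~\ref{thm: sa} gives one more proof of Theorem~\ref{thm:modal-cip}.
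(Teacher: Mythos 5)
Your construction coincides with the paper's up to one decisive step, and your final separation argument is exactly the paper's; but there is a genuine gap at precisely the place you flag as needing ``real care'', and it is not a gap that more care can close in the form you propose. You take $\mathfrak{A}_3$ to be the clopen algebra $(\mathfrak{X}_3)^*$ of the fibered product $\mathfrak{X}_3=(Y,R_3)$, which presupposes that $(Y,R_3)$ is a modal space. The paper explicitly warns that it need not be: $Y$ is a closed subset of the modal space $X_1\times X_2$ (hence a Stone space), but it is in general \emph{not} a generated subframe of it, and for a closed, non-generated subframe the operation $R_3^{-1}[\,\cdot\,]$ need not send clopens of $Y$ to clopens of $Y$ --- one only gets that $R_3^{-1}$ of a closed set is closed. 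So $\Diamond_{R_3}$ need not be an operation on the clopen algebra at all, and your $\mathfrak{A}_3$ may fail to be well defined. (Equivalently: pullbacks in the category of modal spaces exist by duality, since modal algebras form a variety and hence have pushouts, but they are not computed as set-theoretic fibered products.) The relational back-and-forth verification you sketch for the projections is correct, but it addresses only the relational, not the topological, half of the modal-space conditions, and it is the topological half that fails.

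The paper's fix is to drop the requirement that the amalgam be dual to a modal space: it takes $\mathfrak{A}_3=(\mathcal{P}(Y),\Diamond_{R_3})$, the full complex (powerset) algebra of the Kripke frame $(Y,R_3)$, which is a modal algebra for \emph{any} Kripke frame whatsoever. The maps $g_i=\pi_i^{-1}\colon\mathsf{Clop}(X_i)\to\mathcal{P}(Y)$ are still injective modal-algebra homomorphisms: injectivity follows from surjectivity of $\pi_i$ (which you correctly derive from surjectivity of $f_2$, resp.\ $f_1$), and preservation of $\Diamond$, i.e.\ $\pi_i^{-1}[R_i^{-1}[U]]=R_3^{-1}[\pi_i^{-1}[U]]$, follows from precisely the relational bounded-morphism computation you carried out using that $f_1,f_2$ are bounded morphisms. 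Your clopen-separation argument for the superamalgamation inequality then goes through verbatim, since it takes place entirely inside $X_0$, $X_1$, $X_2$ and never uses clopenness of any subset of $Y$. With that single substitution your proof becomes the paper's.
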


\begin{proof}
Consider the diagram of modal algebras $\mathfrak{A}_1$, $\mathfrak{A}_2$ and $\mathfrak{A}_3$, where
$h_1$ and $h_2$ are embeddings

\begin{center}
\begin{tikzcd}
\mathfrak{A}_0 \arrow[d, swap, "h_2"] \arrow[r, "h_1"] & \mathfrak{A}_1  \\
\mathfrak{A}_2              &           
\end{tikzcd} 
\end{center}

Let $(X_i, R_i)$ be the modal space dual to $\mathfrak{A}_i$. Let $f_1: X_1 \to X_0$ be the surjective bounded morphism  dual to $h_1$ and $f_2: X_2 \to X_0$ be the surjective bounded morphism dual to $h_2$. Then we get the dual diagram 

\begin{center}
\begin{tikzcd}
X_0   & X_1  \arrow[l, swap, "f_1"] \\
X_2  \arrow[u, "f_2"]             &           
\end{tikzcd} 
\end{center}

Let $$Y = \{(x, y): x\in X_1, y\in X_2, f_1(x) = f_2(y)\}$$

we also let $(x, y) R (x', y')$ if $x R_1 x'$ and $y R_2 y'$.

\begin{center}
\begin{tikzcd}
X_0   & X_1  \arrow[l, swap, "f_1"] \\
X_2  \arrow[u, "f_2"]             &    Y     \arrow[l, "\pi_2"] \arrow[u, swap, "\pi_1"]    
\end{tikzcd} 
\end{center}

If $(X_0, R_0)$, $(X_1, R_1)$ and $(X_2, R_2)$ were just relational structures this would do the job---$(Y, R)$ would co-amalgamate the above diagram, where $\pi_1$ and $\pi_2$ are projections from $Y$ to $X_1$ and $X_2$ respectively.  However, $(Y, R)$ has to be a modal space and while $Y$ is a closed subset of $X_1\times X_2$, and hence a Stone space, the space $(Y, R)$ may not be a modal space---it would be if $Y$ were a generated subframe, i.e., also closed under the product relation $R_1\times R_2$. 

However, one can still use $Y$ to prove the superamalgamation property. We will take the powerset algebra $(\mathcal{P}(Y), \Diamond_R)$. We let $\mathfrak{A}_3 = (\mathcal{P}(Y), \Diamond_R)$ and we show that $\mathfrak{A}_3$ superamalgamates the diagram. Let $\pi:Y \to X_1$ and $\pi_2:Y \to X_2$ be the two projection maps defined by 
$\pi_1(x, y) = x$ and $\pi_2(x, y) = y$. 

\begin{center}
\begin{tikzcd}
\mathfrak{A}_0 \arrow[d, swap, "h_2"] 
\arrow[r, "h_1"] & \mathfrak{A}_1   \arrow[d, "g_1"]\\
\mathfrak{A}_2      \arrow[r, swap, "g_2"]       &      \mathfrak{A}_3
\end{tikzcd} 
\end{center}

Keeping in mind that  $\mathfrak{A}_i$ is isomorphic to $(\mathsf{Clop}(X_i), \Diamond_{R_i})$ for 
$i=1, 2, 3$,
we also have maps $g_1: \mathfrak{A}_1\to  \mathfrak{A}_3$ and $g_2: \mathfrak{A}_2\to  \mathfrak{A}_3$ defined by 
$g_1 (U) = \pi_1^\ast (U) = \pi_1^{-1} (U)$ and $g_2 (U) = \pi_2^\ast (U) = \pi_2^{-1} (U)$. 

 We thus have the diagram:

\begin{center}
\begin{tikzcd}
(\mathsf{Clop}(X_0), \Diamond_{R_0}) \arrow[d, swap, "f^\ast_2"] \arrow[r, "f^\ast_1"] & (\mathsf{Clop}(X_1), \Diamond_{R_1})   \arrow[d, "\pi^\ast_1"]\\
(\mathsf{Clop}(X_2), \Diamond_{R_2})     \arrow[r, swap, "\pi^\ast_2"]       &     (\mathcal{P}(Y), \Diamond_R)
\end{tikzcd} 
\end{center}

We will now verify the superamalgamation property. Our proof will be topological in nature and it connects the existence of a superamalgam and hence of the Craig interpolant with the topological separation property  of Stone spaces highlighted in Proposition~\ref{Stone=>Normal}---in 
Stone spaces disjoint closed sets can be separated by a clopen set. 
We will show that the existence of this clopen set dually corresponds to the existence of the Craig interpolant.  

We first note that, since  $f_1(\pi_1(x, y)) = f_1(x) = f_2(y) = f_2(\pi_2(x, y))$, it is easy to see that 
for each $U\in \mathsf{Clop}(X_0)$ we have $\pi_{1}^{-1}(f_1^{-1}(U))=\pi_{2}^{-1}(f_2^{-1}(U))$, that is, $\pi_{1}^\ast(f_1^\ast(U))=\pi_{2}^\ast(f_2^\ast(U))$. This shows that modal algebras enjoy the amalgamation property. 

Now assume that  
$\pi^\ast_i(a) \leq \pi^\ast_j(b)$. for some $i, j = 1,2$. Without the loss of generality we may  assume that $i=1$ and $j=2$. The other case is  analogous. So let $\pi^\ast_1(U) \leq \pi^\ast_2(V)$
for some $U\in \mathsf{Clop}(X_1)$ and $V\in \mathsf{Clop}(X_2)$. This means that $\pi^{-1}_1(U) \leq \pi^{-1}_2(V)$. Note that $U$ and $X_3\setminus V$ are clopen and hence closed sets. As every continuous map between compact Hausdorff spaces is closed (i.e., the direct image of a closed set is closed) \cite{Eng77}, 
$f_1[U]$ and $f_2[X_2\setminus V]$ are closed subsets of $X_0$. We claim that they are also disjoint. Suppose not. Then there is $x\in 
f_1[U]\cap f_2[X_2\setminus V]$. 
This yields that there are $y\in U$ and $z\in X_3\setminus V$ such that $f_1(y) = f_2(z)$. But then $(y, z)\in Y$ and $(y, z) \in \pi_1^{-1}(U)$. Hence, by assumption, $(y,z)\in \pi_{2}^{-1}(V)$; but then $z\in V$, which contradicts the fact that $z\notin V$.


By Proposition~\ref{Stone=>Normal}, there exists a clopen set $C$ such that $f_1[U]\subseteq C$ and  $C\cap f_2[X_2\setminus V] = \varnothing$. We will now show that  $U\subseteq_1 f^\ast_1(C)$ and $f^\ast_2(C) \subseteq_2 V$. 
If not, then there is $x\in U$ such that $x\notin f_2^{-1}(C)$. Since $f_1[U]\subseteq C$, we have that $U\subseteq f^{-1}_1f_1[U]\subseteq f_1^{-1}(C)$. On the other hand, $C\cap f_2[X_2\setminus V] = \varnothing$ implies that $f_1^{-1}(C)\cap f_2^{-1}(f_2[X_2\setminus V]) = \varnothing$. This yields that  
$f_2^{-1}(C)\cap (X_2\setminus V)) = \varnothing$ and so $f_2^{-1}(C)\subseteq V$, which finishes the proof (via duality) that the class of modal algebras are superamalgamble. 
\end{proof}

Thus we obtain the following direct corollary to Theorems~\ref{A-comp}, \ref{CI = SA} and \ref{thm: sa}.

\begin{corollary}
The modal logic $\logic{K}$ has the Craig interpolation property. 
\end{corollary}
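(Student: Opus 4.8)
The plan is to obtain the Corollary directly by chaining the two main results of this section, with only one small bookkeeping observation in between. First I would recall that $\logic{K}$ is, by definition, the least normal modal logic: it is generated by the propositional tautologies together with the $\K$-axiom $\Box(\chi_1\to\chi_2)\to(\Box\chi_1\to\Box\chi_2)$ and closure under modus ponens, necessitation and substitution, and nothing more. On the algebraic side this means that a $\logic{K}$-algebra --- a modal algebra validating every theorem of $\logic{K}$ --- is simply a Boolean algebra with a normal unary operator $\Diamond$, i.e.\ one commuting with finite joins and sending $0$ to $0$; no additional equations are imposed precisely because $\logic{K}$ has no additional theorems. Hence the class of $\logic{K}$-algebras coincides with the class $\mathsf{MA}$ of all modal algebras.

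Next I would invoke the bridging theorem proved above, that a normal modal logic $L$ has the Craig interpolation property if and only if the class of all $L$-algebras is superamalgamable. Instantiating this at $L = \logic{K}$ and using the identification just made, it suffices to know that $\mathsf{MA}$ is superamalgamable; but that is exactly Theorem~\ref{thm: sa}. Composing the two equivalences yields that $\logic{K}$ has the Craig interpolation property, which is the Corollary.

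The honest remark about difficulty is that none of it resides in the Corollary itself: all the work sits in its two inputs. The genuinely hard step is Theorem~\ref{thm: sa}, where one dualises to modal spaces, forms the fibre product $Y = \{(x,y) : f_1(x) = f_2(y)\}$ over the common reduct, notes that $Y$ need not itself carry a modal-space structure (it may fail to be closed under the product relation, hence need not be a generated subframe), and then rescues the argument by passing to the full powerset algebra $(\mathcal{P}(Y), \Diamond_R)$ and using normality of Stone spaces --- disjoint closed sets are separated by a clopen --- to manufacture the interpolating element. The bridging theorem in turn leans on the Lindenbaum--Tarski construction and a slightly subtle verification that the amalgam of $F_L(\vec q) \hookrightarrow F_L(\vec p, \vec q)$ and $F_L(\vec q) \hookrightarrow F_L(\vec q, \vec r)$ lands inside $F_L(\vec p, \vec q, \vec r)$. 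Granting both, the Corollary is a one-line consequence, and I would treat it as such.
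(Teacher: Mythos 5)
Your proposal is correct and follows exactly the paper's route: the corollary is obtained by combining the bridging theorem (Craig interpolation for $L$ iff the class of $L$-algebras is superamalgamable) with Theorem~\ref{thm: sa}, together with the observation that the $\logic{K}$-algebras are precisely all modal algebras. The paper treats this as an immediate consequence with no further argument, just as you do.
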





The method described here 
is algebraic and uses duality between modal algebras and modal spaces, it also uses essentially topological intuitions and topological properties of Stone spaces. In fact, it shows that the interpolant corresponds  topologically to the clopen set obtained via a separation property, as a ``separator'' of two disjoint closed sets. 

The proof of Theorem~\ref{thm: sa} shows that for a modal logic $L$ and the amalgamable triple of $L$-algebras $\mathfrak{A}_0$ $\mathfrak{A}_1$ and $\mathfrak{A}_2$, if the algebra $\mathfrak{A}_3 = (\mathcal{P}(Y), \Diamond_R)$, constructed in the proof of the theorem, is an $L$-algebra, then the class of $L$-algebras is superamalgamable and hence the logic $L$ has the Craig interpolation property. Note that the frame $(Y, R)$ is very similar to the bisimulation product constructed in the model-theoretic proof of Section~\ref{sec:model}. Also,  similarly to the model theoretic case, if a logic $L$ is axiomatized by modal formulas whose first-order correspondents are universal Horn sentences (see Section~\ref{sec:model}), then it is easy to see that the algebra $\mathfrak{A}_3 =(\mathcal{P}(Y), \Diamond_R)$ is an $L$-algebra, and hence $L$ has the Craig interpolation property (see also \cite[Section 6.3]{Ven07}). Thus, the two proofs---algebraic via duality and the model theoretic one---can be seen as two sides of the same coin.

\pro{The present proof technique is semantic in nature. The algebraic approach towards interpolation via  amalgamation and superamalgamation apply to many systems of non-classical logic admitting algebraic semantics, even if they do not possess Kripke semantics (see \refchapter{chapter:nonclassical} and \refchapter{chapter:algebra} in this volume).}

\pro{As the semantic condition of superamalgamation is necessary and sufficient for the logic to have the Craig interpolation, this method could also be used to show negative results, that is, that a given  logic (or a class of logics) does not have the interpiolation property. For example, using the algebraic and duality techniques Maksimova \cite{Maks1977, Maks1979} showed that there are exactly $7$ consistent superintuitionistic logics and that among the extensions of the modal logic $\mathsf{S4}$ there are at most $37$  logics with the interpolation property (see also \refchapter{chapter:nonclassical} in this volume).}


\con{The interpolants constructed in this section are highly non-constructive. Thus, the method does not provide an algorithm for finding the interpolants. }

\section{Lower bounds}
\label{sec:lowerbounds}

As we have seen, some Craig interpolation techniques (such as the syntactic approach using $\nabla$-normal form) yield interpolants of  exponential formula length, and some
others (such as the approach using quasi-models) yield interpolants that, although 
not necessarily of exponential length,  admit a DAG-representation
of singly exponential size.%
\footnote{It is not inconceivable that a more fine-grained  analysis would show that the approaches that yield an exponential size DAG representation in fact yield an exponential length formula.}
In this section, we present a matching size lower bound. Specifically, 
the following result, due to Frank Wolter \emph{(personal communication)}, shows that Craig interpolants for modal implications are in the worst case necessarily exponentially large, even when the size of a formula is measured by its DAG-representation. Note that 
no such unconditional lower bound is known for propositional logic (cf.~\refchapter{chapter:predicate}).

\begin{theorem} 
There is a sequence of valid modal implications $(\phi_n\to\psi_n)_{n=1,2,\ldots}$, such that 
\begin{enumerate}
    \item $|\phi_n|$ and $|\psi_n|$ are bounded by a polynomial function in $n$,
    \item Every Craig interpolant $\vartheta$ for $\phi_n\to\psi_n$ has DAG-size $|\vartheta|_{\text{DAG}}\geq 2^n$.
\end{enumerate}
\end{theorem}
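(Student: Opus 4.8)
The plan is to choose $\phi_n$ and $\psi_n$ so that \emph{every} Craig interpolant is forced to coincide, up to logical equivalence, with a single fixed modal formula $\theta_n$, and then to show that $\theta_n$ is \emph{incompressible}: every modal formula equivalent to $\theta_n$ has at least $2^n$ subformulas. The first step is a bookkeeping argument with uniform interpolants, the second is a counting lower bound in the style of Ne\v{c}iporuk's method, and the genuinely hard part is designing a gadget for which both steps go through at once.

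\emph{Pinning the interpolant.} Fix a common signature $\tau$ and choose $\phi_n,\psi_n$ whose non-shared symbols are disjoint, arranged so that the class $U_n$ of $\tau$-reducts of Kripke models of $\phi_n$ equals the class of Kripke models of $\theta_n$, while the class $V_n$ of $\tau$-reducts of Kripke models of $\neg\psi_n$ equals the class of Kripke models of $\neg\theta_n$; here $\theta_n$ is precisely the uniform interpolant of $\phi_n$ with respect to $\tau$, which exists by the constructions of Sections~\ref{sec:nabla} and~\ref{sec:automata}. Since $\phi_n$ and $\neg\psi_n$ share no non-shared symbol, a model satisfies $\phi_n\wedge\neg\psi_n$ exactly when its $\tau$-reduct lies in $U_n\cap V_n=\varnothing$, so $\models\phi_n\to\psi_n$; and for a formula $\vartheta$ with $\sig(\vartheta)\subseteq\tau$ one has $\models\phi_n\to\vartheta$ iff $\vartheta$ holds throughout $U_n$ and $\models\vartheta\to\psi_n$ iff $\vartheta$ fails throughout $V_n$, which, as $U_n$ and $V_n$ partition all $\tau$-models, forces $\vartheta\equiv\theta_n$. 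So it suffices to exhibit polynomial-size $\phi_n,\psi_n$ of this shape with $\theta_n$ incompressible.

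\emph{The gadget.} Here one uses the private proposition letters as \emph{certificates}: $\phi_n$ asserts that the shared $\tau$-structure carries a polynomial-size certificate of membership in a class $\mathcal C_n$ of Kripke models, and $\neg\psi_n$ asserts that it carries a polynomial-size \emph{anti}-certificate of non-membership. If the certificate system is sound and complete then $U_n$ and $V_n$ are complementary as required and $\theta_n$ says ``the $\tau$-structure lies in $\mathcal C_n$''. What is needed is thus a class $\mathcal C_n$ that is at once in ``modal-$\mathsf{NP}$'' and ``modal-$\mathsf{coNP}$'' (both $\mathcal C_n$ and its complement have polynomial-size modal certificate formulas) yet not ``modal-$\mathsf{P}$'' (not definable by any modal formula of sub-exponential DAG-size). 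It must be a bounded-modal-depth property: in \logic K the uniform interpolant of a polynomial formula has polynomial modal depth, so the hardness has to come from exponential \emph{width}, not depth. A candidate in the right spirit concerns which $\tau$-assignments occur among the successors of the root, where the negative certificate is a single violating assignment copied from the root down to each successor by the bit-pushing trick $(s\to\Box t)\wedge(\neg s\to\Box\neg t)$, and the positive certificate is a matching/covering witness spread over the successors.

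\emph{Incompressibility and the main obstacle.} For the lower bound on $\theta_n$ one argues by a fooling-set argument at fixed modal depth: produce $2^n$ pointed models $A_k\in\mathcal C_n$ and $2^n$ pointed models $B_k\notin\mathcal C_n$ such that any formula separating $A_k$ from $B_k$ must contain a subformula $\Diamond g_k$ (or $\Box g_k$) whose propositional part $g_k$ is determined by $k$, with distinct $k$ forcing distinct $g_k$; an interpolant separates all $2^n$ pairs simultaneously, hence has $\ge 2^n$ distinct subformulas. (In isolation, for instance, the property ``every odd-weight assignment to $r_1,\dots,r_n$ occurs among the root's successors'' already needs $\ge 2^{n-1}$ distinct $\Diamond$-subformulas, since separating the model realizing exactly those types from the one realizing all of them but one requires a $\Diamond$-subformula true at precisely the missing successor; and a lower bound proved at the modal depth of $\theta_n$ transfers to arbitrary formulas by truncating models at that depth, which is polynomial.) The main obstacle is the gadget: one must find $\mathcal C_n$ so that $\phi_n$ and $\psi_n$ stay polynomial while $\theta_n$ remains incompressible, i.e.\ one must beat the ``$\exists$-over-exponentially-many-atomic-checks'' asymmetry that makes naive candidates (such as the odd-weight-coverage property) only obviously ``modal-$\mathsf{coNP}$''; once a correct $\mathcal C_n$ is identified, the bookkeeping and counting steps above complete the proof.
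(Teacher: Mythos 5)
Your high-level scaffolding is exactly the paper's: force every interpolant to be equivalent to one fixed formula $\theta_n$ by arranging that the $\tau$-reducts of models of $\phi_n$ and of $\neg\psi_n$ are complementary classes, then show $\theta_n$ needs $2^n$ subformulas by a pigeonhole/fooling-set argument at bounded modal depth. But the proposal has a genuine gap that you yourself flag: you never exhibit the class $\mathcal{C}_n$, and your one concrete candidate (odd-weight coverage of successor assignments) you concede is only ``modal-$\mathsf{coNP}$''. Since the whole content of the theorem is the existence of polynomial-size $\phi_n,\psi_n$ with these properties, the proof is incomplete without the gadget.

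The missing gadget is simpler than the coverage-style properties you were considering: take $\mathcal{C}_n$ to be ``the root's truth assignment to $p_1,\dots,p_n$ is realized again at some successor'', i.e.\ $\theta_n=\bigvee_{X\subseteq\{1,\dots,n\}}(\tau_X\land\Diamond\tau_X)$ with $\tau_X=\bigwedge_{i\in X}p_i\land\bigwedge_{i\notin X}\neg p_i$. The positive certificate is a single marker letter $s$ on a witnessing successor: $\phi_n=\Diamond s\land\bigwedge_i\bigl((p_i\to\Box(s\to p_i))\land(\neg p_i\to\Box(s\to\neg p_i))\bigr)$. The negative certificate is precisely your bit-pushing trick: $\neg\psi_n$ asserts $\bigwedge_i\bigl((p_i\to\Box q_i)\land(\neg p_i\to\Box\neg q_i)\bigr)\land\Box\neg\bigwedge_i(p_i\leftrightarrow q_i)$, which copies the root's assignment into the $q_i$'s at every successor and then asserts no successor matches. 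With this choice both expansions exist for every $\tau$-model, so your pinning step goes through, and the incompressibility step is a pigeonhole over a three-level model with $2^n$ bottom worlds $u_X$ realizing all assignments: any $\vartheta$ with fewer than $2^n$ subformulas cannot notice the deletion of some $u_X$, yet $\Box\theta_n$ can. Note the relevant asymmetry you worried about (``$\exists$ over exponentially many atomic checks'') is defused here because the property quantifies existentially over successors but only has to match the \emph{root's own} assignment, which a single marked successor (for $\phi_n$) or a single broadcast assignment (for $\neg\psi_n$) certifies in polynomial size.
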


\begin{proof}
   Take the following modal formulas with proposition letters $\{p_1, \ldots, p_n,s,q_1,\ldots, q_n\}$:
   \[
   \phi_n = \Diamond s\land \bigwedge_{i=1\ldots n}((p_i\to \Box(s\to p_i))\land (\neg p_i\to \Box(s\to \neg p_i)))
   \]
   \[
   \psi_n = \Big(\bigwedge_{i=1\ldots n}
((p_i \to \Box q_i) \land (\neg p_i \to \Box\neg q_i))\Big) \to \Diamond \bigwedge_{i=1\ldots n} (p_i\leftrightarrow q_i)
   \]
    \[
    \chi_n = \bigvee_{X\subseteq\{1, \ldots, n\}}(\tau_X\land\Diamond\tau_X)
    \text{ ~~~~ where $\tau_X=\bigwedge_{i\in X}p_i\land\bigwedge_{i\not\in X}\neg p_i$.}
    \]
Note that the formulas $\phi_n, \psi_n$ are indeed of size polynomial in $n$. It is also easy to see that $\phi_n\to\psi_n$ is a valid modal implication and that $\chi_n$ is a Craig interpolant for it. Therefore, it remains only to establish the following two claims.

\medskip\par\noindent\emph{Claim 1: }
Every Craig interpolant for $\models\phi_n\to\psi_n$ is equivalent to $\chi_n$

\medskip\par\noindent\emph{Proof of claim:} Let $\theta_n$ be a Craig interpolant for $\phi_n\to\psi_n$.
We must show that $\theta_n$ is equivalent to $\chi_n$. We do this by showing that
every pointed Kripke model satisfying $\chi_n$ can be expanded so that it satisfies $\phi_n$, while each pointed Kripke model satisfying $\neg\chi_n$ can be expanded so that it satisfies $\neg\psi_n$. Since $\chi_n$ lies sandwiched between $\phi_n$ and $\psi_n$,
it then follows that $\chi_n$ and $\theta_n$ are equivalent.

More precisely, let $M,w$ be any pointed Kripke model
in the propositional signature $\{p_1, \ldots, p_n\}$. If $M,w\models\chi_n$, then
we can expand $M,w$ into a model $M',w$ over
the extended signature $\{p_1, \ldots, p_n,s\}$ by making $s$ true (only) at some $R$-successor of $w$ that satisfies the same proposition letters as $w$.
We then have that $M',w\models\phi_n$
and hence $M',w\models\theta_n$. Since
$s$ does not occur in $\theta_n$, it 
follows that $M,w\models\theta_n$.
If, on the other hand, $M,w\not\models\chi_n$, we can expand
$M,w$ into a model $M'',w$ over the
extended signature $\{p_1, \ldots, p_n,q_1, \ldots, q_n\}$ by making $q_i$
true everywhere if $p_i$ is true at $w$
and making $q_i$ true nowhere otherwise. In this way, we have that
$M'',w\not\models\psi_n$ and hence
$M'',w\not\models\theta_n$. Since $q_1, \ldots, q_n$ do not occur in $\theta_n$,
it follows that $M,w\not\models\theta_n$.

\medskip\par\noindent\emph{Claim 2: }
Every modal formula equivalent to $\chi_n$ has DAG-size at least $2^n$.

\medskip\par\noindent\emph{Proof of claim:}
Assume for the sake of a contradiction
that $\chi_n$ is equivalent to $\vartheta_n$
for some modal formula $\vartheta_n$ of DAG-size strictly less than $2^n$.
Consider the Kripke model $M_n = (W_n, R_n, V_n)$ with 
\begin{itemize}
    \item $W_n = \{w\}\cup\{v_X\mid X\subseteq\{1, \ldots, n\}\}\cup\{u_X\mid X\subseteq\{1, \ldots, n\}\}$
    \item 
$R_n = \{(w,v_X)\mid X\subseteq\{1, \ldots, n\}\}\cup\{(v_X,u_Y)\mid X,Y\subseteq\{1, \ldots, n\}\}$; and
\item 
$V_n(p_i)=\{v_X\mid i\in X\}\cup \{u_X\mid i\in X\}$.
\end{itemize}
By construction, we have that $M_n,w\models\Box\chi_n$, and for every  submodel $M'_n$ of $M_n$ obtained by deleting one or more
worlds of the form $u_X$, we have that 
$M'_n,w\not\models\Box\chi_n$.
At the same time, since $\vartheta_n$ has
DAG-size strictly smaller than $2^n$, it
has strictly less than $2^n$ subformulas, and therefore, 
there exists a world $u_X$ 
such that every subformula of $\vartheta_n$ true in $u_X$ is also true in $u_{X'}$ for some $X'\neq X$. It follows that deleting the world $u_{X}$ 
from the model $M_n$ does not affect
the truth of $\Box\vartheta_n$, and hence of $\Box\chi_n$, at $w$, 
a contradiction.
\end{proof}

The same argument is used in~\cite[Lemma 22]{Kuijper25:separation} to establish an exponential succinctness gap between modal logic and the two-variable fragment of first-order logic.

\section*{Acknowledgments}
\addcontentsline{toc}{section}{Acknowledgments}
The authors are very grateful to Roman Kuznets and Patrick Koopman for the careful reading of the drafts of this paper and for their many helpful comments and corrections. They also wish to thank Rodrigo Almeida, Konstantinos Papafilippou, and Frank Wolter for valuable discussions and suggestions.


\bibliography{admin/bib.bib}

\end{document}